\documentclass[a4paper,11pt]{article}
\usepackage[T1]{fontenc}
\usepackage[latin9]{inputenc}
\usepackage{amsmath}
\usepackage{amssymb}
\usepackage{mathtools}
\usepackage{amsthm}

\usepackage{appendix}
\usepackage{hyperref}
\usepackage[dvipsnames]{xcolor}
\definecolor{myrefblue}{RGB}{15,15,150}

\hypersetup{
    colorlinks=true,
    linkcolor=myrefblue,
    citecolor=myrefblue,
    urlcolor=myrefblue,
}

\makeatletter
\@ifundefined{date}{}{\date{}}
\makeatother
\usepackage{fullpage}
\usepackage{graphicx}
\graphicspath{{figures/}}

\usepackage{tikz}
\usepackage{babel}
\usepackage{url}
\usepackage[linesnumbered,ruled,vlined]{algorithm2e}
\usepackage{bbold}
\usepackage{natbib}
\usepackage{thm-restate}
\usepackage{mathrsfs}
\usepackage{enumitem}
\usepackage{xspace}
\usepackage{comment}
\usepackage{subcaption}
\usetikzlibrary{patterns}
\usetikzlibrary{3d}
\usepackage{pgfplots}
\pgfplotsset{compat=1.18}
\usepackage{cleveref}

\newtheorem{mainthm}{Main Theorem}
\AddToHook{env/mainthm/begin}{\crefalias{mainthm}{mainthm}}
\newtheorem*{unnumcorollary}{Corollary}
\newtheorem{proposition}{Proposition}[section]
\AddToHook{env/proposition/begin}{\crefalias{section}{proposition}}
\newtheorem{lemma}{Lemma}[section]
\AddToHook{env/lemma/begin}{\crefalias{section}{lemma}}

\AddToHook{env/claim/begin}{\crefalias{section}{claim}}
\newtheorem{theorem}{Theorem}[section]
\AddToHook{env/theorem/begin}{\crefalias{section}{theorem}}
\newtheorem{corollary}{Corollary}[section]
\AddToHook{env/corollary/begin}{\crefalias{section}{corollary}}
\newtheorem{definition}{Definition}[section]
\AddToHook{env/definition/begin}{\crefalias{section}{definition}}
\newtheorem{remark}{Remark}[section]
\AddToHook{env/remark/begin}{\crefalias{section}{remark}}
\newtheorem{observation}{Observation}[section]
\AddToHook{env/observation/begin}{\crefalias{section}{observation}}

\AddToHook{env/assumption/begin}{\crefalias{section}{assumption}}
\newtheorem{example}{Example}[section]
\AddToHook{env/example/begin}{\crefalias{section}{example}}
\newtheorem{conjecture}{Conjecture}[section]
\AddToHook{env/conjecture/begin}{\crefalias{section}{conjecture}}

\AddToHook{cmd/appendix/after}{%
  \crefalias{section}{appendix}%
  \crefalias{subsection}{appendix}
  \crefalias{subsubsection}{appendix}
}

\begin{document}
\newcommand{\eps}{\varepsilon}
\newcommand{\argmax}{\arg\max}
\newcommand{\argmin}{\arg\min}
\newcommand{\reals}{\mathbb{R}}
\newcommand{\lincontractline}{\mathcal{T}}
\newcommand{\lip}{\mathcal{L}}
\newcommand{\actions}{{[n]}}
\newcommand{\indicator}{\mathbb{1}}
\newcommand{\gooddemand}{ASC\xspace}
\newcommand{\positivenats}{\mathbb{N}_+}
\newcommand{\standard}[1]{\mathbf{e}^{#1}}

\newcommand{\demandtype}{C_{type}}
\newcommand{\demandcover}{V}
\newcommand{\demandcoverfam}{\mathcal{\demandcover}}
\newcommand{\mincover}{\demandcover_{min}}
\newcommand{\demandtypefam}{\mathcal{C}_{type}}
\newcommand{\mincoverfam}{\demandcoverfam_{min}}
\newcommand{\ray}[1]{R(#1)}
\newcommand{\demandset}[2]{{D_{#1}(#2)}}

\newcommand{\neighbors}[2]{A_{#1}(#2)}
\SetKw{KwBreak}{break}

\definecolor{color12}{RGB}{27, 158, 119}
\definecolor{color2}{RGB}{117, 112, 179}
\definecolor{color1}{RGB}{217, 95, 2}

\title{Combinatorial Contracts Through Demand Types\thanks{
We would like to thank Paul Klemperer and Liat Yashin for valuable discussions.
This project has been partially funded by the European Research Council (ERC) under the European Union's Horizon Europe Program (grant agreement No.~101170373 and No.~949699), by an Amazon Research Award, by the NSF-BSF (grant number 2020788), and by the Israel Science Foundation Breakthrough Program (grant No.~2600/24).
}} 
\author{Elizabeth Baldwin\thanks{University of Oxford, United Kingdom. Email: \texttt{elizabeth.baldwin@economics.ox.ac.uk}} \and Paul D\"utting\thanks{Google Research, Zurich, Switzerland. Email: \texttt{duetting@google.com}} \and Michal Feldman\thanks{Tel Aviv University, Israel. Email: \texttt{mfeldman@tauex.tau.ac.il}} \and Maya Schlesinger\thanks{Tel Aviv University, Israel. Email: \texttt{mayas1@mail.tau.ac.il}} }

\maketitle

\begin{abstract}
In the combinatorial action model of contract design, a principal delegates a complex project to an agent, incentivizing a subset of actions from a ground set of $n$ actions, via a linear contract. Computing the optimal contract is a challenging problem that generally hinges on two factors: (i) the number of ``critical values''---values of the linear contract parameter at which the agent's best response changes from one set to another, and (ii)~the complexity of the agent's best-response problem (demand query). 
Prior work has used this approach to devise polynomial-time algorithms for the optimal contract problem under specific reward functions: gross substitutes, supermodular, and ultra.

We develop a unified geometric framework for algorithmic contract design by establishing a fundamental link to the 
theory of \emph{demand types} from consumer theory.
Under this geometric view, bounding the number of critical values reduces to counting the best-response 
regions which the ``contract ray'' pierces.
Leveraging this connection, we introduce the class of \emph{All Substitutes and Complements} (ASC) functions, 
and show that it admits at most $O(n^2)$ critical values, strictly generalizing and unifying all previously known classes admitting poly-many critical values. 
We conjecture that, under some mild assumptions, ASC is the maximal such class.
Turning to the demand query aspect, we develop a new technique for efficiently computing a demand query using value queries, which works in general for ``succinct'' demand types. 
Combining these structural and algorithmic results, we obtain polynomial-time algorithms for new classes of reward functions that exhibit substitutes and complements simultaneously.
\end{abstract}

\section{Introduction}

Algorithmic contract design is a vibrant new frontier at the intersection of Economics and Computation \citep{DuettingFT24,Feldman26}. A central model in this landscape is the \emph{combinatorial action model} \citep{DuettingEFK21,DEFK21-journal,DuettingFG24,VuongDPP24,DuettingFGR26}. In this model, a single principal delegates the execution of a complex project to an agent. The agent takes a set of actions, which can be any subset from a given set of $n$ available actions. The project outcome is binary---resulting in either success or failure. A set function $f$, known as the reward function, maps the chosen subset of actions to a probability of success. The principal earns a (normalized) reward of   $1$ if the project succeeds, and a reward of zero otherwise. The agent incurs an additive cost $c_i$ for each action $i$ included in the set.

The principal, who cannot directly observe the agent's actions, designs a \emph{contract}---an outcome-contingent payment scheme---to incentivize the agent to take a desired set of actions. An optimal contract is a contract that maximizes the principal's utility, defined as the expected reward minus the expected payments to the agent. In the binary outcome setting, the optimal contract takes a particularly appealing and simple form: it is linear. Such a contract is fully specified by a single parameter $\alpha \in [0,1]$, representing the fraction of the principal's reward that is transferred to the agent in the event of success.

Despite this structural simplicity, computing an optimal linear contract is a challenging computational problem. This difficulty stems from two distinct but interrelated computational questions: (a) the agent's problem of determining a best-response action set given a specific contract $\alpha$, and (b) the principal's problem of designing an optimal $\alpha$ to maximize utility, anticipating the agent's response.

A crucial role in addressing these challenges is played by the \emph{upper envelope} of the agent's utility functions. This envelope traces the agent's maximum utility as a function of the contract parameter $\alpha$. Of particular importance are the transition points (or critical values) where the agent switches between two best-response sets. Specifically, if one has access to an efficient algorithm for answering a \emph{demand query} (solving the agent's problem), and the number of critical points is polynomially bounded, then the optimal contract can be computed using only polynomially many value queries via the Eisner-Severance technique \citep{EisnerSeverance1976,DuettingFG24}.

This approach has been the cornerstone of recent progress, yielding poly-time algorithms for specific classes of functions, including gross substitutes \citep{DuettingEFK21,DEFK21-journal}, supermodular \citep{DuettingFG24,VuongDPP24}, and ultra \citep{ultracontracts}. 
This approach remains the only known technique for efficient contract design in this domain. Each of the aforementioned results required developing specialized techniques to bound the critical points for their respective classes.

Beyond the role played in this blueprint for designing efficient algorithms, the number of critical points is a key structural property in its own right, closely related to the pseudodimension of the principal's optimal utility function for linear contracts over a given contract class \citep{DuettingFTS25}, offering insights that extend beyond specific algorithmic applications.

In this work, we establish a fundamental link between algorithmic contract theory and the demand-type machinery from consumer theory, 
introduced by \citet{demand_types}. We utilize this connection to unify all previous results under a single framework and to push the tractability frontier further, identifying additional classes of functions that admit poly-time algorithms. 

\subsection{Our Contribution}

\paragraph{Critical values via demand types.}

In consumer theory, 
a demand type is a collection of set functions $f:2^{[n]}\to \reals_{\ge 0}$ characterized by the geometry of the agent's best-response correspondence (see \citep{demand_types}, \Cref{sec:demand_types}).
Consider the space of item prices, $\mathbb{R}_{\geq 0}^n$. This space is partitioned into convex regions where the agent's unique demand is constant; these regions are separated by hyperplane sections (or lower-dimensional surfaces) where the demand is non-unique. The geometry of these separating hyperplanes defines the demand type: specifically, the normal vectors to these hyperplanes correspond precisely to the items (or bundles) that are ``switched'' in or out of the demand set as prices cross the boundary. 

As already noted in prior work \citep{DuettingEFK21,DEFK21-journal},  the agent's utility maximization problem under a linear contract, $\max_{S} (\alpha f(S) - \sum_{j \in S} c_j)$, is structurally equivalent to a standard demand query with prices scaled by the inverse of the contract parameter. Specifically, maximizing utility is equivalent to maximizing $f(S) - \sum_{j \in S} p_j$, where the effective price of action $j$ is $p_j = c_j / \alpha$. Consequently, as the principal varies $\alpha$ from $0$ to $1$, the effective price vector traces a ray in $\mathbb{R}_{\geq 0}^n$ originating at infinity and terminating at the cost vector $c$. 

Bounding the number of critical values for the contract problem is thus reduced to a geometric counting problem: bounding the number of best-response regions this ``contract ray'' pierces.
We analyze this counting problem, by leveraging the structure implied by the demand type characterization of the reward function.

The demand type framework can be applied to classic set functions from economic theory such as gross-substitutes functions (GS) and supermodular functions, and in fact provides an alternative characterization of either of these classes of set functions \citep{demand_types}.  
Formally, a demand type is a collection of set functions over a single dimension $n$, and when discussing classes such as GS or supermodular, we naturally extend this definition to demand-type families, which specify a demand type for each dimension.

We introduce a new demand-type family, \emph{All Substitutes and Complements} (ASC) (see Definition~\ref{def:our-set-functions} and Section~\ref{sec:demand-covers}), and prove that for any reward function in this class, the number of critical values is polynomially bounded. Specifically, we show that the contract ray pierces at most $O(n^2)$ demand regions.  For valuations in this class, any pair of goods may be substitutes and any subset of goods may be complementary.

\begin{mainthm}[Theorem~\ref{thm:gooddemand_poly_crit}]\label{mainres} The demand-type family ASC admits $O(n^2)$ many critical values.
\end{mainthm}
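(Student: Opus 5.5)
The plan is to turn the count of critical values into a monotone potential argument along the contract ray, using the geometry of the ASC demand type. I assume, as the introduction suggests, that the vectors of ASC are the ``substitute'' vectors $\pm(\standard{i}-\standard{j})$ for pairs $i\neq j$ and the ``complement'' vectors $\indicator_T$ for nonempty $T\subseteq\actions$. I also assume the standard demand-type fact from \citep{demand_types}: whenever price space moves from one unique-demand region to an adjacent one across a facet, the two demanded bundles differ by a vector of the demand type.

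\textbf{Step 1 (structure of the envelope).} The agent's utility is $\max_S(\alpha f(S)-c(S))$, which is convex and piecewise linear in $\alpha$. At each critical value $\alpha^*$ the best response changes from $S$ to $S'$ with a strictly larger slope, so $f(S')>f(S)$. Equating utilities at $\alpha^*$ then gives $c(S')-c(S)=\alpha^*(f(S')-f(S))>0$. Hence consecutive best responses strictly increase both $f$ and $c$.

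\textbf{Step 2 (consecutive bundles differ by one demand-type vector).} I first perturb $c$ generically, so that the ray $\alpha\mapsto c/\alpha$ crosses region boundaries only through relative interiors of facets and never through lower-dimensional intersections. Then consecutive bundles differ by a single vector of ASC. I then need to argue that perturbation cannot decrease the number of critical values. The reason is that the envelope for $c$ is the limit of the envelopes for perturbed $c$, and breakpoints of convex piecewise-linear functions with finitely many candidate pieces can only merge in the limit, not be created. So it suffices to bound the count for generic $c$. This step is the one I expect to be the main technical obstacle. Making the genericity and limiting argument rigorous, and correctly invoking the facet-normal characterization on the ray rather than on arbitrary paths, requires care.

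\textbf{Step 3 (potential argument).} Order the actions by cost, $c_{\pi(1)}\le\dots\le c_{\pi(n)}$ (ties broken arbitrarily), and define $\Phi(S)=\sum_{i\in S}\mathrm{rank}(i)$, where $\mathrm{rank}(i)=\pi^{-1}(i)$.
\begin{itemize}
\item By Step 2, each transition $S\to S'$ has $S'-S$ equal to $\standard{j}-\standard{i}$ or to $\pm\indicator_T$.
\item The case $-\indicator_T$ would give $c(S')<c(S)$, contradicting Step 1.
\item In the case $+\indicator_T$, $\Phi$ increases by at least $1$, since at least one action is added and none is removed.
\item In the swap case, Step 1 gives $c_j>c_i$ strictly. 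So $j$ has higher rank than $i$ under any tie-break, and again $\Phi$ increases by at least $1$.
\end{itemize}
Since $0\le\Phi\le n(n+1)/2$, the number of transitions, and hence of critical values, is at most $n(n+1)/2=O(n^2)$. This proves the statement.
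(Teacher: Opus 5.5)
Your proof is correct and follows the paper's argument. You reduce to facet-piercing costs by perturbation; the paper shows non-facet-piercing costs have measure zero and cites Lemma~4.13 of \citet{DuettingEFK21} for the lower-semicontinuity of the critical-value count that you sketch. You then run the potential $\Phi(S)=\sum_{i\in S}\mathrm{rank}(i)$ through the two kinds of ASC cover vectors using monotonicity of the best response's cost. The only minor difference is that you extract a \emph{strict} cost increase at each critical value from the slope comparison, which lets you avoid the paper's extra assumption that all costs are distinct.
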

The ASC class strictly contains the classes of gross substitutes, supermodular, and ultra valuations, thereby unifying the distinct tractability results 
found in prior work \citep{DuettingEFK21,DEFK21-journal,DuettingFG24,VuongDPP24,ultracontracts}. Additionally, ASC contains the class of \emph{Gross Substitutes and Complements} (GSC) introduced by \citet{sun_and_yang}, and in fact a wider class which we refer to as GSC+ (see Definition~\ref{def:our-set-functions} and \Cref{prop:gsc+}).
The class GSC partitions the goods into two sets; any pair of goods within a set is substitutes while any pair across the two sets is complements. GSC+ generalizes this by allowing any pair to be either substitutes or complements.
See Figure~\ref{fig:asc_classes_venn} for a Venn diagram relating the different classes to each other. Since ASC contains GS, \Cref{mainres} is tight by a result of \citet{DEFK21-journal}; there exist GS functions with $\Omega(n^2)$ critical points.
\definecolor{ultracolor}{RGB}{117,112,179} 
\definecolor{gccolor}   {RGB}{217,95,2}    
\definecolor{gsccolor}  {RGB}{27,158,119}  

\definecolor{gscolor}{RGB}{204,76,153}
\begin{figure}
    \centering
    \begin{tikzpicture}[scale=0.7]
        \tikzset{
            setstyle/.style 2 args={
                draw=#1, fill=#1, thick, fill opacity=0.1, text opacity=1, text=#2
            }
        }

        \draw[setstyle={gccolor}{gccolor}] (0, 0) ellipse (4 and 2);
        \node[text=gccolor, align=center] at (-1.35, 0) {gross complements/ \\
        supermodular};

        \draw[setstyle={gsccolor}{gsccolor}] (5, 0) ellipse (4 and 2);
        \node[text=gsccolor] at (7, 0) {GSC};

        \draw[setstyle={ultracolor}{ultracolor}] (2.5, 2.5) ellipse (4 and 2);
        \node[text=ultracolor] at (2.5, 4) {ultra};

        \draw[draw=gscolor, fill=gscolor, thick, fill opacity=0.1, rotate=10] (4, 0.6) ellipse (1.5 and 0.5);
        \node[text=gscolor] at (4, 1.25) {GS};

        \draw[rounded corners=10pt, thick, black] (-4.5, -2.5) rectangle (9.5, 5);
        \node at (-3, 4.5) {\textbf{ASC}};

    \end{tikzpicture}
    \caption{A Venn diagram of the classes of set functions studied in this paper. Explanations for the containment relations and examples showing that each region is non-empty can be found in \Cref{sec:venn_diagram_examples}.}
    \label{fig:asc_classes_venn}
\end{figure}

To establish our bound, we define the notion of ``facet-piercing'' (see Definition~\ref{def:facet-piercing}), which generalizes the genericity assumptions required in prior work \citep{DuettingEFK21,DEFK21-journal,ultracontracts}. 
For ASC, the geometric structure imposes strict limits on the possible intersections along the contract ray. 
Specifically, we can use the same potential function of \citet{DuettingEFK21,DEFK21-journal} to bound the number of critical values under the facet-piercing assumption. Finally, a standard perturbation argument allows us to lift the facet-piercing assumption.

\paragraph{Efficient demand queries.}
Our second major contribution is a method for deriving efficient demand queries directly from the demand type structure. We define the notion of a \emph{succinct demand-type family}---a class of valuations where the set of relevant normal vectors is polynomially bounded and known (see \Cref{def:succinct}). We prove that if a class is a succinct demand-type family then a demand query can be answered using only polynomially many value queries (black-box access to $f(S)$). 

\begin{mainthm}\label{main_thm:demand_query} Any succinct demand-type family admits an efficient demand query algorithm.
\end{mainthm}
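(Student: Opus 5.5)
Given prices $p\in\reals^n_{\ge 0}$, I would answer a demand query for a function in a succinct demand-type family by local search over bundles, where the moves are steps along the polynomially many known demand-type vectors. Let $\mathcal{D}$ be the demand type, i.e. the known set of primitive integer vectors in $\{-1,0,1\}^n$ that can be normal to facets of the price-space subdivision; by \Cref{def:succinct} it has size $\mathrm{poly}(n)$. From a current bundle $S$ (viewed as $\indicator_S$), the neighbors are all bundles $T$ with $\indicator_T-\indicator_S = \pm d$ for some $d\in\mathcal{D}$. The algorithm starts from $S=\emptyset$ and repeatedly moves to a neighbor of strictly larger quasi-linear utility $f(T)-\sum_{j\in T}p_j$. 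It stops at a local optimum. Each step needs $O(|\mathcal{D}|)$ value queries.

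The first key step is correctness: a bundle that is locally optimal with respect to $\mathcal{D}$-moves must be globally optimal. To prove this I would use the geometry underlying demand types. Take the concave extension (or the convex conjugate $f^*$, the indirect utility). Its domains of linearity tile price space, and the edges of the dual cell complex on the bundle side are parallel to vectors of $\mathcal{D}$. Suppose $S$ is not in the demand at $p$. Move the price continuously from a point $p'$ where $S$ is uniquely demanded toward $p$, perturbing so that the path crosses only facets, in the spirit of the facet-piercing notion of Definition~\ref{def:facet-piercing}. Equivalently, in bundle space: the demanded set at $p$ is a face of the regular subdivision of $\{\indicator_T\}$ lifted by $f$. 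On the lifted polytope, $S$ is a vertex that is not maximal for the linear functional $(-p,1)$. A standard fact about polytopes then gives an edge of the subdivision's upper hull leaving $S$ along which the functional strictly increases. That edge points in a direction in $\mathcal{D}$, and it ends at a lattice point $S\pm d$. This yields an improving neighbor. The subtle points are that edges of the subdivision may be longer multiples of $d$, and that they run within the cube. Since the endpoints are $0/1$ vectors and $d\in\{-1,0,1\}^n$, the edge endpoint is exactly $\indicator_S\pm d$. I expect this to be the main technical obstacle: relating the regular subdivision's edges to $\mathcal{D}$ rigorously, and handling degenerate prices by a perturbation of $p$ that preserves the optimum.

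The second key step is polynomial running time, since plain local search might need exponentially many improvements. I would first try a scaling or perturbation argument. Alternatively, I would replace local search with a path-following approach: start at prices $p^0$ high enough that $\emptyset$ is the unique demand, and move linearly to $p$. At each facet crossing, the demanded bundle changes by $\pm d$ for some $d\in\mathcal{D}$, and the crossing point for each candidate $d$ is computable from value queries. So one step costs $O(|\mathcal{D}|)$ queries, provided the number of regions pierced is polynomial. Bounding that number is the real issue. I would choose $p^0$ so that the segment resembles the contract ray and then try to adapt the potential-function argument used for Theorem~\ref{thm:gooddemand_poly_crit}. If that fails, I would fall back to coordinate-wise path following: change one price at a time, each from $+\infty$ down to $p_j$. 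Along such a segment, each facet crossing changes the bundle in a direction $d$ with $d_j\neq 0$, which I would try to show bounds the number of crossings polynomially in $|\mathcal{D}|$.
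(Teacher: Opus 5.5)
\textbf{Verdict: there is a genuine gap.} Your correctness argument is sound, but you never establish a polynomial running time, and that is the whole content of ``efficient''.

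\textbf{What works.} The claim that a bundle optimal among its $\mathcal{D}$-neighbours is demanded is exactly \Cref{lem:local_max_global_demand}, which the paper imports from Baldwin's thesis. Your simplex-edge argument on the upper hull proves it. Every bundle is a vertex of the hull, since prices are unrestricted reals. A bounded edge $[\chi_S,\chi_T]$ is exposed by some price whose demand set is exactly $\{S,T\}$, so \Cref{lem:normal_vec_price_vec} gives $\chi_T-\chi_S\in V(f)\subseteq\mathcal{D}$.

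\textbf{The gap: running time.} You concede that local search may take exponentially many improving steps, and none of your remedies is carried out.

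The straight-line path with the potential $\Phi$ does not extend beyond ASC. The argument of \Cref{thm:gooddemand_poly_crit} relies on every move being either nested or a single swap. A succinct family such as $2$-substitutes allows adding two items and removing two. With $c_i=2^i$, adding $\{1,n\}$ and removing $\{n-2,n-1\}$ raises cost by $2^{n-2}+2$ but changes $\Phi$ by $4-n<0$. Moreover, \Cref{lem:superpoly_brs} and \Cref{conj:asc_max} indicate that outside ASC a line can cross super-polynomially many facets. So counting crossings along a general path is the wrong target.

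\textbf{The missing idea.} Your coordinate-wise fallback is the path the paper uses (\Cref{alg:demand_query}). However, you lack the key fact, and once you have it there is nothing to count. Suppose only $p_j$ varies along a segment:
\begin{itemize}
\item The utilities of bundles without $j$ stay fixed, and the utilities of bundles with $j$ all shift by the same amount.
\item Let $S$ and $T\neq S$ be the unique demands at the two ends. Then exactly one of them contains $j$.
\item $S$ is the unique best bundle of its kind (with or without $j$) along the whole segment, and likewise $T$ for its kind.
\item Hence demand switches exactly once, at a price where the demand set is exactly $\{S,T\}$, and \Cref{lem:normal_vec_price_vec} gives $\chi_T-\chi_S\in\mathcal{D}$.
\end{itemize}
This is \Cref{lem:axis_parallel_demand_sets} (Baldwin's Theorem~1). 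Non-unique demand at the endpoints is handled by perturbing into the UDR of $S$, as in that lemma's proof.

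\textbf{Consequence.} After resetting coordinate $i$ to $p_i$, a single step suffices. The best bundle among $S_{i-1}$ and its $\mathcal{D}$-neighbours at $p^i$ is at least as good as some demanded bundle, so it is itself demanded. This gives $n$ rounds of $O(|\mathcal{D}|)$ value queries each. The local-optimality lemma is then needed only to certify that $\emptyset$ is demanded at the starting price.
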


Notably, this result doesn't require any other economic properties of the demand-type family beyond succinctness. The basic idea of our algorithm is to trace a path through price-space, starting at a point where the empty set is demanded, and ending at our desired price vector. This path is essentially changing the prices coordinate by coordinate, and for each coordinate, the demanded bundle changes at most once according to the demand type characterization; since there are $n$ items, there are $n$ such coordinates to go over. The succinctness of the demand-type family allows us to efficiently maintain the currently demanded bundle as we move along the path, eventually leading to the demanded bundle at our desired prices.

As a corollary, we obtain an efficient demand query algorithm for GSC and for the broader class of GSC+ (Corollary~\ref{cor:gsc+_demand_query}).\footnote{For the more restrictive class of GSC an efficient demand query also follows from a simple basis change argument, but this argument does not extend to GSC+.} 
Another interesting succinct demand-type family is \emph{$\Delta$-substitutes} \citep{near_substitutes}, which allows substitution between complementary 
packages of size up to $\Delta$.  Thus, the class of $\Delta$-substitutes functions admits an efficient demand query.

In combination with the Eisner-Severance technique \citep{EisnerSeverance1976,DuettingFT24}, \Cref{mainres} and \Cref{main_thm:demand_query} yield efficient algorithms for the optimal contract problem for the following classes.

\begin{unnumcorollary}[Corollary \ref{cor:opt_computation}] The classes GSC, GSC+, and the intersection of $\Delta$-substitutes and ASC all admit a poly-time algorithm for computing the optimal contract.
\end{unnumcorollary}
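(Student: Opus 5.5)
The plan is to get the corollary by plugging the two main results into the Eisner--Severance blueprint. By \citet{EisnerSeverance1976} and \citet{DuettingFG24}, if a class of reward functions has polynomially many critical values and admits a polynomial-time demand query, then the optimal linear contract can be computed in polynomial time. In more detail: the agent's upper envelope $\max_S(\alpha f(S)-\sum_{j\in S}c_j)$ is piecewise linear and convex in $\alpha$, and every evaluation of the envelope at a given $\alpha$ is a demand query at prices $c/\alpha$. Eisner--Severance then recovers all breakpoints using $O(k)$ such queries, where $k$ is the number of critical values. The principal finally evaluates $(1-\alpha)f(S)$ at each breakpoint, which is where the optimum is attained, with ties broken in the principal's favor. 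So, for each of the three classes, it is enough to check (i) an $O(n^2)$ bound on critical values and (ii) an efficient demand query.

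For (i), I would use \Cref{mainres} (Theorem~\ref{thm:gooddemand_poly_crit}), which bounds the number of critical values by $O(n^2)$ for all of ASC. It then remains to place each class inside ASC. GSC is contained in GSC+, and GSC+ is contained in ASC by \Cref{prop:gsc+} together with the demand-type containment behind the Venn diagram. The third class is the intersection of $\Delta$-substitutes with ASC, so it lies in ASC by definition.

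For (ii), I would use \Cref{main_thm:demand_query}. For GSC and GSC+ this is Corollary~\ref{cor:gsc+_demand_query}: the relevant normal vectors are the vectors $\pm\standard{i}$ and $\standard{i}\pm\standard{j}$, which are polynomially many and known, so the family is succinct. For $\Delta$-substitutes, the normal vectors are differences between packages of size at most $\Delta$, giving $n^{O(\Delta)}$ vectors. This is polynomial for constant $\Delta$, so the family is succinct. Restricting to the intersection with ASC does not affect the demand query algorithm, because it only needs the function to belong to a succinct demand-type family.

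The main obstacle is the interface between the two results. Eisner--Severance needs the exact envelope value at each queried $\alpha$. The demand oracle from \Cref{main_thm:demand_query} works at a price vector $c/\alpha$, which may lie on a boundary or fall outside the nonnegative orthant when $c$ has zero coordinates. I would handle this by breaking ties consistently, and by treating zero-cost actions as taken at every $\alpha>0$. I would also make sure the critical-value count is used without the facet-piercing assumption, which is fine because \Cref{mainres} already removes it through perturbation. Finally, I would state explicitly that $\Delta$ is treated as a constant, so that succinctness, and hence polynomial running time, holds.
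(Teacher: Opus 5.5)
Your proposal is correct and is essentially the paper's argument. It combines the Eisner--Severance blueprint (\Cref{prop:eisner-severance}) with the $O(n^2)$ critical-value bound of \Cref{thm:gooddemand_poly_crit}, which applies because each class lies in ASC, and with the succinct-demand-type query algorithm of \Cref{thm:demand_query}, run with the GSC+ cover for GSC and GSC+ and with the $n^{O(\Delta)}$-size $\Delta$-substitutes cover (for constant $\Delta$) for the intersection class. Your interface worries are unnecessary: $c/\alpha$ is always nonnegative since $c\ge 0$, the demand-query algorithm accepts any $p\in\reals^n$, and tie-breaking at breakpoints is already handled inside \Cref{prop:eisner-severance}.
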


While our work relies on the machinery developed by \citet{demand_types} to study competitive equilibrium existence, equilibrium existence is not the driving force behind our structural results and polynomial-time computation. 
Indeed, both GSC+ and ultra allow for efficient optimization, but sometimes do not admit competitive equilibria. 

We conjecture that ASC is, in a specific sense, maximal. Subject to the axioms of \emph{dimensional-invariance}\footnote{Roughly speaking, dimensional-invariance means that ``any additional dimension may be treated as redundant'', see \Cref{def:consistent_fam}.} and \emph{anonymity}\footnote{Intuitively, anonymity means that ``the names of the items do not matter'', see \Cref{def:anonymous}.} we provide evidence suggesting that no broader class admits a polynomial bound on the number of critical values, see \Cref{app:asc_max}. 

In \Cref{app:limits}, we further show that certain classes, including coverage functions and budget-additive functions,  may only be represented as very strict subsets of very broad demand types. This implies that viewing these classes through the demand types lens loses a lot of their structure.

\paragraph{Broader impact.} 

Our contributions have direct implications regarding the foundations of combinatorial auctions and algorithmic game theory, specifically regarding the power of query oracles. While demand queries can solve the welfare-maximizing Configuration LP in polynomial time \citep{BlumrosenN10}, they are exponentially more powerful than value queries for general valuations. 

We overcome this barrier using a geometric approach; Theorem~\ref{thm:demand_query} provides a generic reduction showing that for valuations with succinct demand-type families, 
demand queries can be efficiently simulated via value queries. This result extends the reach of the Configuration LP to new valuation classes, offering a geometric lens to drive further advances in combinatorial auctions and contract theory.

More generally we believe that the geometric demand-type perspective offers a fascinating lens on computational aspects of combinatorial auctions and contracts, which  could drive further advances in very much the same way it did for equilibrium existence.

\subsection{Further Related Work}

\paragraph{Combinatorial contract design. }  The combinatorial-action contract design model was introduced in \citet{DuettingEFK21}, who gave a poly-time algorithm for the optimal contract when $f$ is GS using the critical values approach. They also showed this problem is NP-hard when $f$ is submodular. \citet{DuettingFG24} and \citet{VuongDPP24} further studied this model and showed that the optimal contract can be computed in polynomial time when $f$ is supermodular, again relying on the critical values approach. \citet{ultracontracts} extended the techniques of \citet{DuettingEFK21} to ultra valuations, introduced by \citet{ultra_valuations}, a strict superclass of gross substitutes. They also extend this to costs which are the sum of a symmetric function and an additive function.

\citet{inapproximability} study the approximability of the optimal contract problem in the model of \citet{DuettingEFK21}, and show that when $f$ is submodular, the problem is not only NP-hard but NP-hard to approximate within any constant, given only value oracle access. \citet{multimulti} presented an FPTAS for any monotone function, which uses both value and demand oracle access. This FPTAS is complemented by the result of \citet{DuettingFGR26}, which shows that even for submodular $f$, with both value and demand oracle access, the problem cannot be solved efficiently.

Other contract settings with multiple actions include scenarios in which the agent takes actions sequentially rather than simultaneously~\citep{hoefer,robustsequential,ezra2026contract}.
Another line of work considers combinatorial contracts along additional dimensions, including multi-agent contracts~\citep{babaioff2006combinatorial,multi,inapproximability,VuongDPP24,budget,aht}, multi-agent multi-action contracts \citep{multimulti,onetoomany,DEFK25mixed}, settings with multiple outcomes \citep{dutting2021complexity}, and ones with multiple principals \citep{alon2024incomplete}. 
Additional multi-agent settings have been studied in \citep{castiglioni2023multi, cacciamani2024multi, alon2025multi, faircontracts}, considering observable individual outcomes, randomized contracts in multiple-outcome settings, multiple projects and fairness considerations.

\paragraph{Demand types and equilibrium existence.}
The important class of gross substitutes (GS) valuations was introduced by \citet{Kelso.Crawford1982}, further developments and alternative representations include \citep{gul1999walrasian, Milgrom.Strulovici2009, Murota2003}.  These texts typically focus on existence and computation of competitive equilibrium; \citet{sun_and_yang} showed with their Gross Substitutes and Complements (GSC) valuations that this property extends beyond the substitutes case.  The notion of demand types was introduced by \citet{demand_types} to provide a general framework for classes of valuations, including both GS and GSC as classes, and characterising exactly which of these classes guarantee competitive equilibria.  \citet{baldwinworkingpaper} provide alternative characterisations of demand types, while \citet{Baldwin.etal2023} extend some of the results to economies with income effects. \citet{near_substitutes} introduce $\Delta$-substitute preferences also in a setting allowing income effects, and show that they correspond to a demand type in the quasilinear case. The demand types framework has been used by \citet{Baldwin.etal2024} to develop a bidding language that allows representation in an auction of any preferences for substitutes.

\section{Model and Preliminaries}\label{sec:model}
In this section we present the combinatorial contracts model and preliminaries of \citet{DuettingEFK21,DEFK21-journal} (\Cref{sec:contracts_model}), and the commonly used ``critical values'' approach to computing an optimal contract in it , as used by \citet{DuettingEFK21,DEFK21-journal, DuettingFT24,VuongDPP24,ultracontracts} (\Cref{sec:crit_values}).

\subsection{The Combinatorial Contracts Model}\label{sec:contracts_model}
\paragraph{Contracts with combinatorial actions.} 
In the combinatorial contracts model of \citet{DuettingEFK21}, a principal (she) interacts with a single agent (he), in an effort to make a project succeed. The agent faces a set $[n]=\{1,\dots, n\}$ of $n$ actions, and can choose any subset $S \subseteq [n]$ of them. There is a vector of costs $c\in [0,\infty)^n$, where $c_i$ is the cost of action $i\in [n]$.
When choosing a subset of actions $S\subseteq [n]$, the agent incurs a cost of $\sum_{i\in S} c_i$.

The project can either succeed or fail. There is a function $f: 2^{[n]} \rightarrow [0,1]$, where $f(S)$ denotes the probability that the project succeeds when the agent chooses the set of actions $S$. If the project succeeds, the principal gets a reward which we normalize to $1$. Otherwise, her reward is zero. The principal is not aware of which actions the agent has chosen, and can only observe whether the project has succeeded or failed.

To incentivize the agent to exert effort, the principal designs a contract, which defines payments for the observable outcomes. In a binary-outcome setting such as the one considered here, it is known that without loss of generality this contract is linear, meaning that it is specified by a single parameter $\alpha \in [0,1]$ denoting the fraction of the principal's reward that is transferred to the agent when the project succeeds. 

Given a contract $\alpha$, when the agent chooses the subset $S\subseteq [n]$, his utility is defined as the expected payment from the principal minus his cost, i.e., $u_A(\alpha, S) = \alpha \cdot f(S) - \sum_{i\in S} c_i$.
The principal's utility is 
the expected reward minus the expected payment to the agent, i.e.,
$u_P(\alpha, S) = (1-\alpha) f(S)$.

The agent's best response for a contract $\alpha$ is $S_{\alpha} \in \arg\max_{S \subseteq [n]} u_A(\alpha, S)$, where the agent breaks ties in favor of the principal's utility. In the edge case where $\alpha = 1$, we assume tie-breaking in favor of larger $f(S_\alpha)$ for consistency\footnote{When $\alpha=1$, the principal's utility is always $0$. Tie-breaking in favor of larger $f(S_\alpha)$ in this case guarantees that $f(S_\alpha)$ is increasing in $\alpha$.}. 
The principal's goal is to find an optimal contract 
$\alpha^\star$ that maximizes her utility given that the agent best responds, i.e. $\alpha^\star$ maximises
$u_P(\alpha) = u_P(\alpha, S_\alpha)$.

\paragraph{Best response and demand set.} 
The agent's best-response problem of choosing a set $S$ that maximizes $\alpha \cdot f(S) - \sum_{i \in S} c_i$ is closely related to the familiar notion of a demand set in combinatorial auctions.
Namely, given a set function $f: 2^{[n]} \rightarrow [0,1]$ 
and a vector of prices $p \in \reals^n$, 
the 
\emph{demand set} $\demandset{f}{p}$ is defined as 
\[\demandset{f}{p} = \arg\max_{S\subseteq [n]} \left(f(S) - \sum_{i\in S} p_i\right).
\]
That is, 
$\demandset{f}{p}$ is the collection of bundles $S\subseteq [n]$
that
maximize the utility of a buyer in a combinatorial auction with items $[n]$, given by $u_B(p, S) = f(S)-\sum_{i\in S} p_i$, i.e., the difference between the value of the bundle under $f$ and the sum of prices of the bundle's elements according to $p$. 

The connection to the contracts model is now clear:
the agent's best response $S_\alpha$ to a contract $\alpha> 0$ is a set in the demand set $\demandset{\alpha \cdot f}{c}  = \demandset{f}{\left( \frac{c_1}{\alpha},\dots, \frac{c_n}{\alpha}\right)}=\demandset{f}{\alpha^{-1}\cdot c}$. That is, we can equivalently view the agent's best response problem as choosing a set that maximizes the utility of a buyer with valuation function $\alpha \cdot f$ at prices $(c_1,\dots, c_n)$, or equivalently, choosing such a set when the valuation function is $f$ and prices are $(c_1/\alpha, \ldots, c_n/\alpha)$.
\footnote{Note that $(\frac{c_1}{\alpha},\ldots,\frac{c_n}{\alpha})$ is a ray
(unless $(c_1,\ldots,c_n)=0$).  There are finitely many different bundles demanded on this line, and the bundle demanded furthest from $0$ (as we increase all coordinates) also corresponds to the tie-breaking demand when $\alpha=0$.} 

This connection is illustrated by \Cref{fig:agent_br_demand} for a two-action example. On the left, in \Cref{fig:3d_demand}, we see the buyer's utility as a function of prices, and on the right, in \Cref{fig:3d_slice}, we see a specific slice of this which corresponds to the buyer's utility at prices $\alpha^{-1} c$. The agent's best response is the set demanded at these prices.

\newcommand{\figthreedpx}{0.7}
\newcommand{\figthreedpy}{0.7*5/3}
\newcommand{\figthreedfx}{2}
\newcommand{\figthreedfy}{3}
\newcommand{\figthreedfxy}{4}
\begin{figure}
    \centering
    \begin{subfigure}[c]{0.48\textwidth}
        \centering
\newcommand{\drawmax}{4}
\pgfmathsetmacro{\betaonetwo}{(\figthreedfx - \figthreedfy) / (\figthreedpx - \figthreedpy)}
\pgfmathsetmacro{\betaonenull}{(\figthreedfx) / (\figthreedpx)}

\begin{tikzpicture}[scale=0.85]
    \begin{axis}[
        view={120}{30}, 
        axis lines=center,
        xlabel={$p_1$},
        ylabel={$p_2$},
        zlabel={$u_B(p)$},
        xmin=0, xmax=\drawmax,
        ymin=0, ymax=\drawmax,
        zmin=0, zmax=\drawmax,
        ticks=none,
        set layers,
        enlargelimits={upper=0.15},
        xlabel style={at={(ticklabel* cs:1)}, anchor=north west},
        ylabel style={at={(ticklabel* cs:1)}, anchor=south west},
        zlabel style={at={(ticklabel* cs:1)}, anchor=south},
    ]

    \draw[thick, fill=color12!20] (0,0,0) -- ({\figthreedfxy - \figthreedfy},0,0) -- (\figthreedfxy - \figthreedfy,\figthreedfxy - \figthreedfx,0) -- (0,\figthreedfxy - \figthreedfx,0) -- cycle;
    \draw[thick, fill=color1!20] (0,\figthreedfxy - \figthreedfx,0) -- (\figthreedfxy - \figthreedfy,\figthreedfxy - \figthreedfx,0) -- (\figthreedfx,\figthreedfy,0) -- (\figthreedfx,\drawmax,0) -- (0, \drawmax, 0) -- cycle;
    \draw[thick, fill=color2!20] (\figthreedfxy - \figthreedfy,0,0) -- (\figthreedfxy - \figthreedfy,\figthreedfxy - \figthreedfx,0) -- (\figthreedfx,\figthreedfy,0) -- (\drawmax,\figthreedfy,0) -- (\drawmax, 0, 0) -- cycle;
    \begin{scope}[canvas is xy plane at z=0]
        \node[transform shape, rotate=90, font=\Large] at (3, 3.5) {$\emptyset$};
        \node[transform shape, rotate=90, font=\Large] at (0.5, 1) {$\{1,2\}$};
        \node[transform shape, rotate=90, font=\Large] at (3, 1) {$\{2\}$};
        \node[transform shape, rotate=90, font=\Large] at (0.5, 3.5) {$\{1\}$};
    \end{scope}

    \draw[ultra thick] (0,0,0) -- (0,0,\figthreedfxy);
    \draw[ultra thick] (1,2,0) -- (1,2,\figthreedfx-1); 
    \draw[ultra thick] (1,0,0) -- (1,0,\figthreedfy);
    \draw[ultra thick] (0,2,0) -- (0,2,\figthreedfx);
    \draw[ultra thick] (0,\drawmax,0) -- (0,\drawmax,\figthreedfx);
    \draw[ultra thick] (\drawmax,0,0) -- (\drawmax,0,\figthreedfy);
    
    \filldraw[draw=black, fill=color1!30, opacity=0.5] 
        (0,\figthreedfxy-\figthreedfx,\figthreedfx) -- ({\figthreedfxy-\figthreedfy}, \figthreedfxy-\figthreedfx, \figthreedfx-\figthreedfxy+\figthreedfy) --(\figthreedfx,\figthreedfy,0) -- (\figthreedfx,\drawmax,0) -- (0,\drawmax,\figthreedfx) -- cycle;
        
    \filldraw[draw=black, fill=color2!30, opacity=0.5] 
        (\figthreedfxy-\figthreedfy,0,\figthreedfy) -- (\figthreedfxy-\figthreedfy, \figthreedfxy-\figthreedfx, \figthreedfy-\figthreedfxy+\figthreedfx) --(\figthreedfx,\figthreedfy,0) -- (\drawmax,\figthreedfy,0) -- (\drawmax,0,\figthreedfy) -- cycle;
        
    \filldraw[draw=black, fill=color12!30, opacity=0.5] 
        (0,0,\figthreedfxy) -- (\figthreedfxy-\figthreedfy, 0, \figthreedfy) --(\figthreedfxy-\figthreedfy,\figthreedfxy-\figthreedfx,\figthreedfy+\figthreedfx-\figthreedfxy) -- (0,\figthreedfxy-\figthreedfx,\figthreedfx) -- cycle;

    \draw[thick,dashed, -stealth]  (\figthreedpx,\figthreedpy,0)--(\figthreedpx*3.25, \figthreedpy*3.25, 0) node[anchor= west] {$\alpha^{-1}$};
    \filldraw[gray, opacity=0.2, draw=white, thick] 
    (\figthreedpx,\figthreedpy, 0) -- (\figthreedpx*3.2, \figthreedpy*3.2, 0) -- (\figthreedpx*3.2, \figthreedpy*3.2, 2.5) -- (\figthreedpx,\figthreedpy, 2.5) -- cycle;

    \draw[ultra thick, color12] 
        (\figthreedpx, \figthreedpy, \figthreedfxy - \figthreedpx - \figthreedpy) -- (\figthreedfxy-\figthreedfy, \figthreedpx/\figthreedpy, \figthreedfy - \figthreedpx/\figthreedpy); 

    \draw[ultra thick, color2] 
        (\figthreedfxy-\figthreedfy, \figthreedpx/\figthreedpy, \figthreedfy - \figthreedpx/\figthreedpy) -- (\betaonetwo*\figthreedpx, \betaonetwo*\figthreedpy, \figthreedfy - \betaonetwo*\figthreedpy);

    \draw[ultra thick, color1] 
        (\betaonetwo*\figthreedpx, \betaonetwo*\figthreedpy, \figthreedfy - \betaonetwo*\figthreedpy) -- (\figthreedfxy-\figthreedfx, \betaonenull*\figthreedpy, \figthreedfx - \figthreedfxy+\figthreedfx);

    \draw[gray] (\figthreedfxy-\figthreedfy, \figthreedpx/\figthreedpy,0) -- (\figthreedfxy-\figthreedfy, \figthreedpx/\figthreedpy,2.5);

        \draw[gray] (\betaonetwo*\figthreedpx,\betaonetwo*\figthreedpy,0) -- (\betaonetwo*\figthreedpx,\betaonetwo*\figthreedpy,2.5);
        \draw[gray] (\betaonenull*\figthreedpx,\betaonenull*\figthreedpy,0) -- (\betaonenull*\figthreedpx,\betaonenull*\figthreedpy,2.5);
    \end{axis}
\end{tikzpicture}
    \caption{
    The ``roof'' of the buyer's utility, as a function of prices. The $p_1$ and $p_2$ axes are the prices, and the $u_B(p)$ axis is the buyer's utility after purchasing the best bundle at those prices. Each of the colored hyperplanes corresponds to the utility from a different bundle, and they are defined over the area where that bundle is demanded (as visualized at $u_B(p)=0$).
    The dashed arrow represents the ray $(c_1/\alpha, \dots, c_n/\alpha)$, and the gray hyperplane through it is the ``slice'' which generates \Cref{fig:3d_slice}.
    }\label{fig:3d_demand}
    \end{subfigure}
    \hfill
    \begin{subfigure}[c]{0.48\textwidth}
        \centering 
\begin{tikzpicture}[scale=0.85]
    \begin{axis}[
        axis x line=bottom,
        axis y line=left,
        axis line style = {-stealth},   
        x axis line style={dash pattern=on 5pt off 5pt, thick},
        xlabel style={at={(axis description cs:1,0)}, anchor=west},
        ylabel style={at={(axis description cs:0,1)}, anchor=south, rotate=-90},
        xlabel = {$\alpha^{-1}$},
        ylabel = {$u_B(\alpha^{-1} c)$},
        xtick={1,
            (\figthreedfxy - \figthreedfy)/\figthreedpx, 
            (\figthreedfy - \figthreedfx)/(\figthreedpy - \figthreedpx), 
            \figthreedfx/\figthreedpx
        },
        ytick=\empty,
        ymin=0,
        xmin=1,
        xmax=3.2, 
        ymax=2.5,
        xmajorgrids,
        legend pos=north east,
        axis background/.style={fill=gray!20},
        x tick label style={font=\footnotesize},
    ]

    \addplot [
        domain=1:(\figthreedfxy - \figthreedfy)/\figthreedpx,
        ultra thick,
        samples=100, 
        color=color12,
    ]
    {\figthreedfxy - x*(\figthreedpx + \figthreedpy)};
    \addlegendentry{$u_B(\alpha^{-1} c, \{1,2\})$}

    \addplot [
        domain=(\figthreedfxy - \figthreedfy)/\figthreedpx:(\figthreedfy - \figthreedfx)/(\figthreedpy - \figthreedpx), 
        ultra thick,
        samples=100, 
        color=color2,
    ]
    {\figthreedfy - x*\figthreedpy};
    \addlegendentry{$u_B(\alpha^{-1} c, \{2\})$}

    \addplot [
        domain=(\figthreedfy - \figthreedfx)/(\figthreedpy - \figthreedpx):\figthreedfx/\figthreedpx, 
        ultra thick,
        samples=100, 
        color=color1,
    ]
    {\figthreedfx - x*\figthreedpx};
    \addlegendentry{$u_B(\alpha^{-1} c, \{1\})$}

    \path (1, 0) -- node[midway, above, color12] {$\{1,2\}$} ({(\figthreedfxy - \figthreedfy)/\figthreedpx}, 0);
    
    \path ({(\figthreedfxy - \figthreedfy)/\figthreedpx}, 0) -- node[midway, above, color2] {$\{2\}$} ({(\figthreedfy - \figthreedfx)/(\figthreedpy - \figthreedpx)}, 0);
    
    \path ({(\figthreedfy - \figthreedfx)/(\figthreedpy - \figthreedpx)}, 0) -- node[midway, above,color1] {$\{1\}$} ({\figthreedfx/\figthreedpx}, 0);
    
    \path ({\figthreedfx/\figthreedpx}, 0) -- node[midway, above] {$\emptyset$} (3.2, 0);

    \end{axis}
\end{tikzpicture}
        \caption{The buyer's utility at prices $\alpha^{-1}\cdot c$ as a function of $\alpha^{-1}$, and the corresponding demanded bundle. This corresponds to the agent's best response to a contract $\alpha$. The demanded bundles at each interval are written above the $\alpha^{-1}$ axis. 
    }\label{fig:3d_slice}
    \end{subfigure}
    \caption{The relationship between an auction buyer's utility and the agent's best response, for $f(\{1\})=2, f(\{2\})=3, f(\{1,2\})=4$ and $c_1=0.7, c_2=7/6$. 
    }\label{fig:agent_br_demand}
\end{figure}
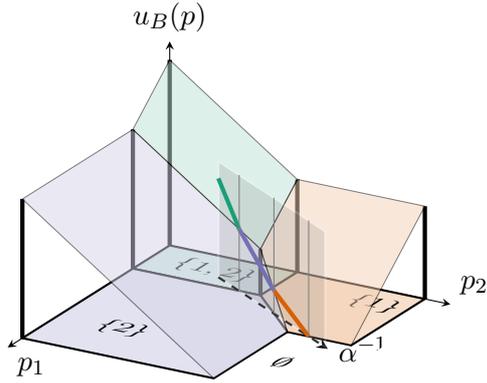
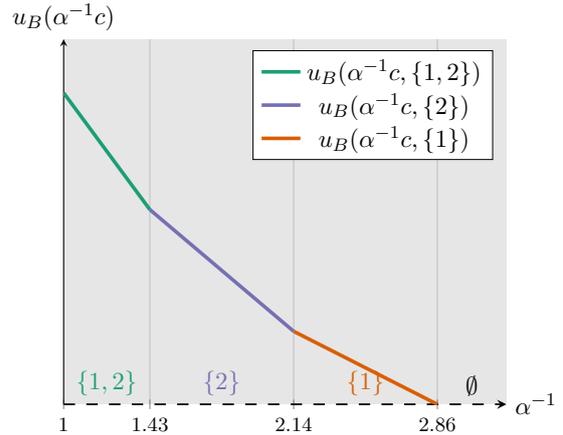

\paragraph{Classes of set functions.}
In this work, we focus on monotone set functions, which are functions $f:2^{[n]}\rightarrow \reals_{\ge 0}$ such that $f(S) \le f(T)$ for all $S\subseteq T\subseteq [n]$. We study the classes of set functions defined below, the relationships between which are summarized in \Cref{fig:asc_classes_venn}. 

Given a set function $f: 2^{[n]} \rightarrow [0,1]$, 
we let $f(j \mid S) = f(S \cup \{j\}) - f(S)$. 
    
\begin{definition}[Gross substitutes, gross complements, and supermodularity]\label{def:common-set-functions}
A set function $f:2^{[n]} \rightarrow \reals_{\ge 0}$ is
\begin{enumerate}
    \item \emph{Gross-substitutes} (GS) if for any price vector $p\in \mathbb{R}^n_{\geq 0}$, any item $i\in [n]$, any $\delta \ge 0$, and any set $S\in \demandset{f}{p}$, there exists a set $T \in \demandset{f}{p+\delta \standard{i}}$ such that $S\setminus \{i\} \subseteq T$, where $\standard{i}$ is the standard basis vector corresponding to $i\in [n]$. \label{def:gs}
    \item \emph{Gross-complements} if for any price vector $p\in \mathbb{R}^n_{\geq 0}$, any item $i\in [n]$, any $\delta \ge 0$, and any set $S\in \demandset{f}{p}$, there exists a set $T \in \demandset{f}{p+\delta \standard{i}}$ such that $T \subseteq S $. \label{def:complements}
    \item \emph{Supermodular} if for any two sets $S \subseteq T \subseteq [n]$ and $j \not \in T$ it holds that $f(j \mid S) \le f(j \mid T)$. \label{def:supermod}   
\end{enumerate}
\end{definition}
Two remarks are in order. First, supermodular and gross complements are equivalent (as \citet{Chambers.Echenique2009} note, this has been known for many years). 
Second, definitions of gross substitutes and gross complements often allow for changes in the prices of more than one item. The definitions where only one item's price, as stated here, are equivalent, and we use them for consistency with the following definitions.

\begin{definition} [GSC, $\Delta$-substitutes, and ultra]\label{def:uncommon-set-functions}
A set function $f:2^{[n]} \rightarrow \reals_{\ge 0}$ is
\begin{enumerate}
    \item  
    \emph{Gross substitutes and complements} (GSC) \citep{sun_and_yang} if there exists a partition $A_1 \uplus A_2 = [n]$ such that (informally)
    items are GS within each part of the partition, and they are complements across parts of the partition.

    Formally, for any price vector $p\in \mathbb{R}^n_{\geq 0}$ any $i\in \{1,2\}$, any item $a\in A_i$, any $\delta \ge 0$ and any set $S\in \demandset{f}{p}$, there exists a set $T\in \demandset{f}{p+\delta \standard{a}}$ such that 
    for any $j\in S\cap A_i\setminus \{a\}$ it holds that $j\in T$ (substitutes within $A_i$), 
    and for any $j\in T \cap A_i^c$ it holds that $j\in S$ (complements between $A_i$ and $A_i^c$). \label{def:gsc}
    
    \item \emph{$\Delta$-substitutes}\footnote{\citet{near_substitutes} give their definition in a slightly different form, which is more general than the quasilinear setting of this paper.  In the quasilinear, as may be seen through the demand type analysis (see \Cref{prop:near-substitutes_demand_type}), the definition given here is equivalent to theirs.} \citep{near_substitutes}, for $\Delta \in \mathbb{N}^+$, if for any price vector $p\in \reals^n$, any item $i\in [n]$, any $\delta \ge 0$, and any set $S\in \demandset{f}{p}$, there exists a set $T\in \demandset{f}{p+\delta \standard{i}}$ such that $|S\setminus T| \le \Delta$ and $|T\setminus S| \le \Delta$.
    \label{def:delta_sub}
    \item \emph{Ultra} \citep{ultra_valuations} if for any two sets $S,T\subseteq [n]$ such that $|S| \le |T|$, and any $x\in S\setminus T$ there exists some $y\in T\setminus S$ such that $f(S)+f(T)\le f((S\setminus \{x\}) \cup \{y\})+f((T\setminus \{y\}) \cup \{x\})$. \label{def:ultra} 
\end{enumerate}
\end{definition}
We remark that $1$-substitutes $= GS\subseteq GSC\subseteq 2$-substitutes, and $GS=$submodular $\cap$ ultra, while submodular and ultra are incomparable.
We also define the two following classes, the motivation for which will become clearer when we discuss demand types in \Cref{sec:demand_types}.
\begin{definition} [GSC+ and ASC] \label{def:our-set-functions}
A set function $f:2^{[n]} \rightarrow \reals_{\ge 0}$ is:
\begin{enumerate}
    \item GSC+  if for any price vector $p\in \mathbb{R}^n_{\geq 0}$, any item $i\in [n]$, any $\delta \ge 0$, and any set $S\in \demandset{f}{p}$, there exists a set $T \in \demandset{f}{p+\delta \standard{i}}$ such that $|(S\setminus \{i\}) \triangle T| \le 1$, where $X\triangle Y$ denotes the symmetric difference between the sets $X,Y$, i.e., $X\triangle Y = (X\setminus Y) \cup (Y\setminus X)$. 
    \item All substitutes and complements (\gooddemand) if for any price vector $p\in \mathbb{R}^n_{\geq 0}$, any item $i\in [n]$, any $\delta \ge 0$, and any set $S\in \demandset{f}{p}$, it holds that there exists a set $T \in \demandset{f}{p+\delta \standard{i}}$ such that either $|(S\setminus \{i\}) \triangle T| \le 1$ or $T\subseteq S$. \label{def:ASC}
\end{enumerate}
\end{definition}
We remark that GSC+ contains GSC, this is not immediate, but follows from the demand type analysis in \Cref{sec:demand_types}. Additionally, ASC contains all other classes defined here, except for $\Delta$-substitutes where $\Delta > 1$. For some of these classes, their containment in ASC is non-trivial, and only evident through our demand types analysis in \Cref{sec:demand_types}.

\paragraph{Primitives for accessing $f$.} Since $f: 2^{[n]} \rightarrow [0,1]$ 
is a set function, for our computational results we consider two types of primitives for accessing $f$:

\begin{itemize}
    \item A \emph{value oracle}: Given a set $S\subseteq [n]$, a value query returns the value $f(S)$.
    \item A \emph{demand oracle}: Given a vector of price $p\in \reals^n$, a demand query returns a set $S \in \demandset{f}{p}$.
\end{itemize}

Value and demand oracles are standard both in literature on combinatorial auctions and the literature on combinatorial contracts (see, e.g., \citep{DuettingEFK21,DEFK21-journal,VuongDPP24,multi,multimulti}). It is well known that demand oracle access is a stronger assumption than value oracle access, since a value oracle can be simulated in poly-time with value oracle access, but the reverse is not true \citep{nisanalgorithmic}.

\begin{figure}[t]
\centering
    \begin{subfigure}[t]{0.48\textwidth}
        \centering
\begin{tikzpicture}[scale=0.85]
    \begin{axis}[
        axis x line=bottom,
        axis y line=left,
        axis line style = {-stealth},   
        x axis line style={dash pattern=on 5pt off 5pt, thick},
        xlabel style={at={(axis description cs:1,0)}, anchor=west},
        ylabel style={at={(axis description cs:0,1)}, anchor=south, rotate=-90},
        xlabel = {$\alpha^{-1}$},
        ylabel = {$u_B(\alpha^{-1} c)$},
        xtick={1,
            (\figthreedfxy - \figthreedfy)/\figthreedpx, 
            (\figthreedfy - \figthreedfx)/(\figthreedpy - \figthreedpx), 
            \figthreedfx/\figthreedpx
        },
        ytick=\empty,
        ymin=0,
        xmin=1,
        xmax=3.2, 
        ymax=2.5,
        xmajorgrids,
        legend pos=north east,
        axis background/.style={fill=gray!20},
        x tick label style={font=\footnotesize},
    ]

    \addplot [
        domain=1:(\figthreedfxy - \figthreedfy)/\figthreedpx,
        ultra thick,
        samples=100, 
        color=color12,
    ]
    {\figthreedfxy - x*(\figthreedpx + \figthreedpy)};
    \addlegendentry{$u_B(\alpha^{-1} c, \{1,2\})$}

    \addplot [
        domain=(\figthreedfxy - \figthreedfy)/\figthreedpx:(\figthreedfy - \figthreedfx)/(\figthreedpy - \figthreedpx), 
        ultra thick,
        samples=100, 
        color=color2,
    ]
    {\figthreedfy - x*\figthreedpy};
    \addlegendentry{$u_B(\alpha^{-1} c, \{2\})$}

    \addplot [
        domain=(\figthreedfy - \figthreedfx)/(\figthreedpy - \figthreedpx):\figthreedfx/\figthreedpx, 
        ultra thick,
        samples=100, 
        color=color1,
    ]
    {\figthreedfx - x*\figthreedpx};
    \addlegendentry{$u_B(\alpha^{-1} c, \{1\})$}

    \path (1, 0) -- node[midway, above, color12] {$\{1,2\}$} ({(\figthreedfxy - \figthreedfy)/\figthreedpx}, 0);
    
    \path ({(\figthreedfxy - \figthreedfy)/\figthreedpx}, 0) -- node[midway, above, color2] {$\{2\}$} ({(\figthreedfy - \figthreedfx)/(\figthreedpy - \figthreedpx)}, 0);
    
    \path ({(\figthreedfy - \figthreedfx)/(\figthreedpy - \figthreedpx)}, 0) -- node[midway, above,color1] {$\{1\}$} ({\figthreedfx/\figthreedpx}, 0);
    
    \path ({\figthreedfx/\figthreedpx}, 0) -- node[midway, above] {$\emptyset$} (3.2, 0);

    \end{axis}
\end{tikzpicture}
    \end{subfigure}
    \hfill
    \begin{subfigure}[t]{0.48\textwidth}
        \centering \begin{tikzpicture}[scale=0.85]
    \begin{axis}[
        axis x line=bottom,
        axis y line=left,
        axis line style = {-stealth},
        xlabel style={at={(axis description cs:1,0)}, anchor=west},
        ylabel style={at={(axis description cs:0,1)}, anchor=south, rotate=-90},
        xlabel = {$\alpha$},
        ylabel = {$u_A(\alpha, S_\alpha)$},
        axis background/.style={fill=gray!20},
        xtick={
            \figthreedpx/\figthreedfx, 
            (\figthreedpy - \figthreedpx)/(\figthreedfy - \figthreedfx), 
            \figthreedpx/(\figthreedfxy - \figthreedfy)
        },
        ytick=\empty,
        ymin=0,
        ymax=1.6,
        xmin=0.3,
        xmax=0.85,
        xmajorgrids,
        legend pos=north west,
        x tick label style={font=\footnotesize},
    ]

\pgfmathsetmacro{\aone}{\figthreedpx/\figthreedfx}
\pgfmathsetmacro{\atwo}{(\figthreedpy - \figthreedpx)/(\figthreedfy - \figthreedfx)}
\pgfmathsetmacro{\athree}{\figthreedpx/(\figthreedfxy - \figthreedfy)}

\pgfplotsset{
  notbest/.style={densely dashed, opacity=0.8, ultra thick},
  best/.style={ultra thick},
}

\draw[very thick, black] (0.3, 0) -- node[midway, above] {$\emptyset$} (\aone, 0);

\addplot[
  domain=0.3:\aone,
  samples=50,
  color=color1,
  notbest,
  forget plot,
] {x*\figthreedfx - \figthreedpx};

\addplot[
  domain=\aone:\atwo,
  samples=50,
  color=color1,
  best,
] {x*\figthreedfx - \figthreedpx};
\addlegendentry{$u_A(\alpha, \{1\})$}

\addplot[
  domain=\atwo:0.85,
  samples=50,
  color=color1,
  notbest,
  forget plot,
] {x*\figthreedfx - \figthreedpx};

\addplot[
  domain=0.3:\atwo,
  samples=50,
  color=color2,
  notbest,
  forget plot,
] {x*\figthreedfy - \figthreedpy};

\addplot[
  domain=\atwo:\athree,
  samples=50,
  color=color2,
  best,
] {x*\figthreedfy - \figthreedpy};
\addlegendentry{$u_A(\alpha, \{2\})$}

\addplot[
  domain=\athree:0.85,
  samples=50,
  color=color2,
  notbest,
  forget plot,
] {x*\figthreedfy - \figthreedpy};

\addplot[
  domain=0.3:\athree,
  samples=50,
  color=color12,
  notbest,
  forget plot,
] {x*\figthreedfxy - (\figthreedpx + \figthreedpy)};

\addplot[
  domain=\athree:0.85,
  samples=50,
  color=color12,
  best,
] {x*\figthreedfxy - (\figthreedpx + \figthreedpy)};
\addlegendentry{$u_A(\alpha, \{1,2\})$}

\draw[very thick, color1] (\aone, 0) -- node[pos=0.8, above] {$\{1\}$} (\atwo, 0);
\draw[very thick, color2] (\atwo, 0) -- node[midway, above] {$\{2\}$} (\athree, 0);
\draw[very thick, color12] (\athree, 0) -- node[midway, above] {$\{1,2\}$} (0.85, 0);

    \end{axis}
\end{tikzpicture}

    \end{subfigure}
    \caption{The connection between the auction buyer's utility at prices $\alpha^{-1} \cdot c$ (left, same as \Cref{fig:3d_slice}) and the agent's utility at $\alpha$ (right), for $f(\{1\})=2, f(\{2\})=3, f(\{1,2\})=4$ and $c_1=0.7, c_2=7/6$.}\label{fig:agent_upper_envelope_buyer_util}
\end{figure}
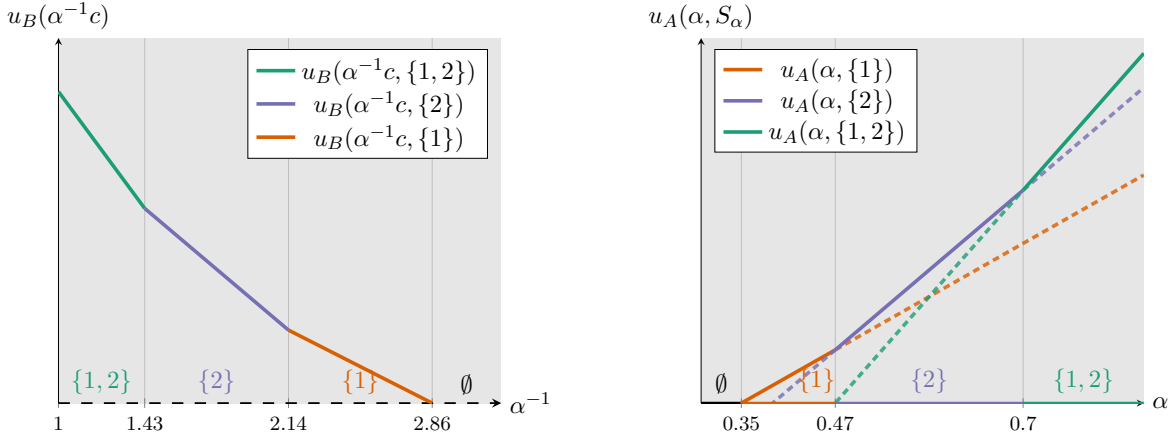

\subsection{Critical Values and Finding an Optimal Contract}\label{sec:crit_values}

We next describe known techniques and algorithmic results for computing an optimal (linear) contract, for different classes of success probability functions.

\paragraph{Upper envelope and critical values.} 
In the analysis of linear contracts, a geometric analysis of the agent's utility, such as that in \citep{simple_vs_optimal}, is useful. Under a subset $S\subseteq [n]$ of actions, the agent's utility as a function of $\alpha$ is $\alpha \cdot f(S) - \sum_{i\in S} c_i$, i.e., it is a linear expression in $\alpha$. When plotting these lines (see Figure~\ref{fig:agent_upper_envelope_buyer_util} (right)), the agent's best response to any linear contract is given by the upper envelope.
We say that a value of $\alpha$ is \emph{critical}, if it's a value of $\alpha$ at which the agent's best response changes. Equivalently, the critical values are the points where two lines on the upper envelope cross.
It is known that any optimal contract is also a critical value.

The geometric view of the optimal contract problem motivates algorithmic approaches that recover an optimal contract by searching over critical points, leveraging the structure they induce. The following proposition identifies sufficient conditions under which this approach can be implemented efficiently.

\begin{proposition}[\citet{DuettingFG24,VuongDPP24}]
\label{prop:eisner-severance}
Given a success probability function $f$ and additive costs $c$, an optimal contract can be computed using polynomially many value queries if:
\begin{enumerate}
\item[(1)] There is a polynomial number of critical values.
\item[(2)] There exists a polynomial-time algorithm with access to value queries that, given $\alpha$, returns the agent's best response (i.e., answers a demand query).
\end{enumerate}
\end{proposition}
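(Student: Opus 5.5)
The plan is to implement the Eisner--Severance recursion on the agent's upper envelope. The envelope is $U(\alpha)=\max_S(\alpha f(S)-\sum_{i\in S}c_i)$, a convex piecewise-linear function of $\alpha$. Its breakpoints are exactly the critical values, and each linear piece corresponds to a set $S$ with slope $f(S)$ and intercept $-c(S)$. The principal's utility $(1-\alpha)f(S_\alpha)$ is maximized at a critical value, or at $\alpha=0$. So it suffices to enumerate all breakpoints and the sets on either side of each, then evaluate $u_P$ at each one, breaking ties in favor of the principal.

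The recursion works as follows. Using condition (2), query the best response at $\alpha=0$ and at $\alpha=1$ to obtain sets $S_0$ and $S_1$, and record the lines $\ell_{S}(\alpha)=\alpha f(S)-c(S)$. At $\alpha=1$ we break ties toward larger $f$, and near $0$ toward smaller $f$, which we can handle by querying at slightly perturbed points. Then, recursively, for two known envelope lines $\ell_A$ (left) and $\ell_B$ (right) with $f(A)<f(B)$, compute their intersection $\alpha^*=(c(B)-c(A))/(f(B)-f(A))$ and query the best response $S^*$ at $\alpha^*$, together with its value $U(\alpha^*)$ computed from $f(S^*)$ by one value query.

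There are two cases. If $U(\alpha^*)=\ell_A(\alpha^*)$, then by convexity no other envelope line lies strictly between, and $\alpha^*$ is a critical value adjacent to $A$ and $B$. Otherwise $S^*$ gives a new line strictly above both at $\alpha^*$, so it is a new envelope piece with $f(A)<f(S^*)<f(B)$, and we recurse on $(A,S^*)$ and $(S^*,B)$. To make sure the returned set is a genuine new piece and not a tie at a vertex, we can query at $\alpha^*$ with a tie-breaking rule, or simply observe that any maximizer at $\alpha^*$ that beats both lines lies on the envelope.

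Each recursive call either certifies a breakpoint or discovers a new envelope piece. The number of pieces is at most one more than the number of critical values $k$, so the total number of calls is $O(k)$. Each call uses one demand query, polynomial by (2), plus $O(1)$ value queries. By (1), $k$ is polynomial, so the total query count is polynomial. Finally, evaluate $(1-\alpha)f(S)$ at each breakpoint for the set to its right, using principal-favoring tie-breaking, since the best response at a breakpoint is the set with larger $f$. Return the best of these.

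The main subtlety is degeneracy. Several sets may share the same line, or many lines may meet at a single point. Handling this requires care in the first case: concluding that there is no envelope piece between $A$ and $B$ must rely only on $U(\alpha^*)$ matching, which holds by convexity. It also requires the principal-favoring tie-break when we report $S_\alpha$. Equal-$f$ ties can be resolved by keeping the cheaper set, since sets with equal slope but higher cost never appear on the envelope.
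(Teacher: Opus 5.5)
Your proposal is correct and follows essentially the same route as the paper. The paper does not reprove the proposition; it attributes it to \citet{DuettingFG24,VuongDPP24}, whose argument is exactly the Eisner--Severance recursion you describe: enumerate the envelope breakpoints with $O(k)$ best-response queries, then evaluate $(1-\alpha)f(S_\alpha)$ at each breakpoint and at $\alpha=0$. One harmless slip: the envelope piece just to the right of $\alpha=0$ is the minimum-cost set with the \emph{largest} $f$, not the smallest, and this is precisely what the principal-favoring best response at $\alpha=0$ returns (picking the smaller one would only add a spurious certified breakpoint at $0$).
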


Notably, in \citet{DuettingEFK21}, the algorithm for finding optimal contracts required three ingredients, including an explicit procedure for moving between critical points. 
Subsequent work \citet{DuettingFG24,VuongDPP24} showed that this third requirement is redundant by providing a recursive algorithm that enumerates all critical points, using only demand queries, in time polynomial in the number of critical values.
This algorithm is closely related to techniques that appeared earlier in sensitivity analysis of combinatorial optimization problems \citep{Gusfield1980}, where it is known as the Eisner-Severance technique \citep{EisnerSeverance1976}.

\paragraph{Satisfying the ingredients of the critical values approach.} For GS $f$, \citet{DuettingEFK21} show that the number of critical points is bounded by $O(n^2)$. It is known that demand queries for GS can be solved in poly-time using a natural greedy algorithm, thus both ingredients for the critical values approach are satisfied.

A line of work has shown a polynomial bound on the number critical points for supermodular reward functions \citep{DuettingFG24, VuongDPP24}, and ultra reward functions \citep{ultracontracts}. Since these classes admit an efficient demand query algorithm, it follows that both ingredients of the critical values approach are satisfied.

\section{The Demand Type Machinery}\label{sec:demand_types}

In this section, we present the demand type machinery of \citet{demand_types}, and relate it to the classes of set functions defined in \Cref{sec:model}.
Demand types are essentially classes of set functions, defined by sets of vectors, which we term \emph{demand covers}.

In \Cref{sec:demand_types_prelim}, we give the important definitions for understanding demand types, starting with the Locus of Indifference Prices (LIP) --- a key concept in the analysis of \citet{demand_types}. We then present the definitions of demand covers and demand types, and relate them to the geometry of the LIP. 
In \Cref{sec:demand-covers}, we present the known demand covers of the classes gross substitutes, gross complements, and $\Delta$-substitutes, as well as extend this machinery to the new classes of functions considered in this paper.

\subsection{Locus of Indifference Prices and Demand Types}\label{sec:demand_types_prelim}

The high-level idea of the analysis in \citet{demand_types} is to consider the vector space $\reals^n$ as ``price-space'', {and partition this according to the identity of the}
demand set $\demandset{f}{p}$, for a given set function $f: 2^{[n]} \rightarrow \mathbb{R}_{\ge 0}$. This gives rise to a geometric structure, which enables a characterization of possible changes in demand in response to sufficiently small generic price changes. Informally, this characterization is the demand type.

We start by stating the key definitions of \citet{demand_types} for our setting.
{Note that the setting of \citet{demand_types} is strictly more general, in that they} 
allow the domain of the set function $f$ to be any finite subset of $\mathbb{Z}^{[n]}$, while we only consider set functions $f$ with domain $2^{[n]}$. 
We discuss the implications of this divergence 
in Appendix~\ref{app:demand_types}.

\paragraph{Geometry of demand: LIP, UDRs, and their structure.}
Let $f:2^{[n]}\rightarrow \reals_{\ge 0}$ be a set function.
The \emph{Locus of Indifference Prices (LIP)} of $f$ is defined as the price vectors where demand isn't unique, i.e., $\lip_f = \{p\in \reals^n\mid |\demandset{f}{p}| > 1\}$. 
From the continuity of the expression $f(S) - \sum_{i\in S} p_i$ in the vector $p$ it follows
that 
the complement of the LIP consists of $n$-dimensional components in which
demand is unique and constant. These components are called the \emph{Unique Demand Regions (UDRs)}. 
Meanwhile, the LIP itself is the union of $(n-1)$-dimensional linear pieces called \emph{facets}, which separate the UDRs. In the relative interior of a facet (defined as the interior taken in the affine span), demand is indifferent between the two bundles demanded in the UDRs on either side of the facet.

\newcommand*{\TickSize}{2pt}

\begin{figure}
\begin{center}
\begin{tikzpicture}[scale=0.85]
    \draw[thick, ->] (-0.1, 0) -- (4, 0) node[above] {\footnotesize $p_1$};
    \draw[thick, ->] (0, -0.1) -- (0, 4) node[right] {\footnotesize $p_2$};
    \draw[blue,thick] (1, 0) -- (1, 2);
    \draw[red,thick] (0, 2) -- (1, 2);
    \draw[orange,thick] (1, 2) -- (2, 3);
    \draw[green,thick] (2, 3) -- (2, 4);
    \draw[yellow,thick] (2, 3) -- (4, 3);
    
    \node at (0.5, 1.5) {$\{1,2\}$};
    \node at (2.5, 3.5) {$\emptyset$};
    \node at (1, 3) {$\{1\}$};
    \node at (2, 2) {$\{2\}$};

    \draw (1,0) -- (1,0) node [below] {$1$};
    \draw (2,0) -- (2,0) node [below] {$2$};
    \draw (0,2) -- (0,2) node [left] {$2$};
    \draw (0,3) -- (0,3) node [left] {$3$};
\end{tikzpicture}
\quad
\begin{tikzpicture}[scale=0.85]
    \draw[thick, ->] (-0.1, 0) -- (4, 0) node[above] {\footnotesize $p_1$};
    \draw[thick, ->] (0, -0.1) -- (0, 4) node[right] {\footnotesize $p_2$};
    \draw[blue,thick] (1, 0) -- (1, 1);
    \draw[red,thick] (0, 1) -- (1, 1);
    \draw[orange,thick] (1, 1) -- (3,3);
    \draw[green,thick] (3, 3) -- (3, 4);
    \draw[yellow,thick] (3, 3) -- (4, 3);

    \node at (0.5, 0.5) {$\{1,2\}$};
    \node at (3.5, 3.5) {$\emptyset$};
    \node at (1.5, 2.5) {$\{1\}$};
    \node at (2.5, 1.5) {$\{2\}$};

    \draw (1,0) -- (1,0) node [below] {$1$};
    \draw (3,0) -- (3,0) node [below] {$3$};
    \draw (0,1) -- (0,1) node [left] {$1$};
    \draw (0,3) -- (0,3) node [left] {$3$};
\end{tikzpicture}
\quad
\begin{tikzpicture}[scale=0.85]
    \draw[thick, ->] (-0.1, 0) -- (4, 0) node[above] {\footnotesize $p_1$};
    \draw[thick, ->] (0, -0.1) -- (0, 4) node[right] {\footnotesize $p_2$};
    \draw[blue,thick] (3, 0) -- (3, 1);
    \draw[red,thick] (0, 3) -- (1, 3);
    \draw[orange,thick] (1, 3) -- (3,1);
    \draw[green,thick] (1, 3) -- (1, 4);
    \draw[yellow,thick] (3, 1) -- (4, 1);

    \node at (1, 1) {$\{1,2\}$};
    \node at (3, 3) {$\emptyset$};
    \node at (0.5, 3.5) {$\{1\}$};
    \node at (3.5, 0.5) {$\{2\}$};

    \draw (1,0) -- (1,0) node [below] {$1$};
    \draw (3,0) -- (3,0) node [below] {$3$};
    \draw (0,1) -- (0,1) node [left] {$1$};
    \draw (0,3) -- (0,3) node [left] {$3$};
\end{tikzpicture}
\end{center}
\caption{Examples of the LIP, when $n=2$. 
{\bf Left ($f_1$)}: $f_1(\emptyset) = 0, f_1(\{1\}) = 2, f_1(\{2\})=3, f_1(\{1,2\})=4$. {\bf Middle ($f_2$)}: $f_2(\emptyset) = 0, f_2(\{1\}) = 3, f_2(\{2\})=3, f_2(\{1,2\})=4$. {\bf Right ($f_3$)}: $f_3(\emptyset)=0, f_3(\{1\})=1, f_3(\{2\})=1, f_3(\{1,2\})=4$} \label{fig:lip_examples}
\end{figure}
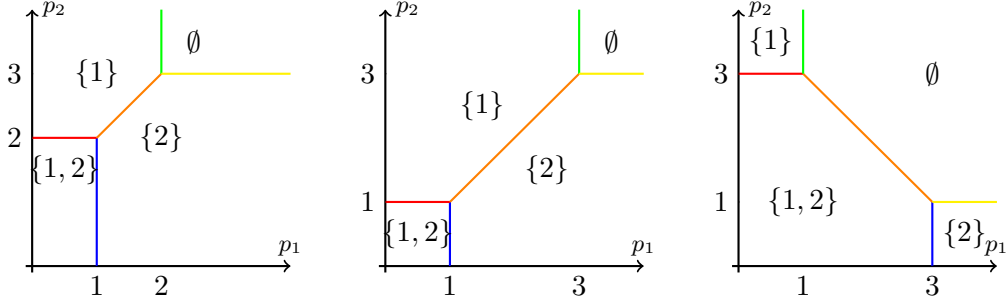
{Figure~\ref{fig:lip_examples} shows examples of the LIP for two-item settings, for three different set functions. The colored lines are the facets, and the LIP is their union. The UDRs are labeled with the unique demanded bundle within them.}

Central to the demand types analysis is the following observation by \citet{demand_types}, which says that the change in demand between prices that cross one facet must be given by the vector which is normal to this facet.\footnote{See \citep{demand_types, baldwinworkingpaper}
for more discussion of the properties and economic significance of these vectors.} Denote the characteristic vectors of $S$ and $T$ by $\chi_S, \chi_{T}$, i.e., $(\chi_{S})_i = \indicator[i\in S]$.
\begin{observation}[\citet{demand_types}]\label{obs:change_in_demand_normal}
    Let $F$ be a facet of the LIP, and let $S$ and $T$ be the unique demanded bundles in the UDRs on either side of $F$. 
    The vector $\chi_S-\chi_{T} \in \{-1, 0, 1\}^n$, which represents the change in demand, is normal to $F$. 
\end{observation}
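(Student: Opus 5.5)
Take a point $p$ in the relative interior of $F$ and a vector $v$ lying in the linear span of $F-p$ (the direction space of $F$). The plan is to show $\langle \chi_S-\chi_T, v\rangle = 0$; since $F$ is $(n-1)$-dimensional, this says exactly that $\chi_S-\chi_T$ is normal to $F$.

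First I will check that both $S$ and $T$ are demanded at $p$, i.e.\ $S,T\in\demandset{f}{p}$. The map $p\mapsto f(S)-\sum_{i\in S}p_i$ is continuous for every bundle, and there are finitely many bundles. Hence if $S$ is the unique demanded bundle at prices $q$ approaching $p$ from the UDR on one side, then $S$ still attains the maximum at the limit point $p$. The same argument from the other side gives $T\in\demandset{f}{p}$. We may also assume $S\neq T$, since otherwise the two regions would not be separated by the LIP at $F$.

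Second, I will show that at every point $p'$ in the relative interior of $F$ near $p$, demand is still exactly $\{S,T\}$. Every point of $F$ lies in the LIP, so demand there is non-unique. Relative-interior points of a facet border only the two adjacent UDRs, so by the same continuity argument any demanded bundle at $p'$ must be a limit of demand from those two regions. This is the step I expect to need the most care. The paper's description of facets already says that demand in the relative interior is indifference between exactly these two bundles, and I will use that as given.

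Third, I will compute the normal direction. Because both bundles are demanded along the relative interior, $f(S)-\langle \chi_S,p'\rangle = f(T)-\langle\chi_T,p'\rangle$ for all such $p'$. Equivalently,
\[
\langle \chi_S-\chi_T, p'\rangle = f(S)-f(T)
\]
is constant on a relatively open subset of $F$. Subtracting this identity at $p'=p+\epsilon v$ and at $p'=p$ gives $\langle\chi_S-\chi_T,v\rangle=0$ for every direction $v$ of $F$. So $F$ lies in the hyperplane $\{q: \langle \chi_S-\chi_T,q\rangle = f(S)-f(T)\}$, which is a genuine hyperplane because $\chi_S-\chi_T\neq 0$, and $\chi_S-\chi_T$ is normal to $F$.

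Finally, every entry of $\chi_S-\chi_T$ is $\indicator[i\in S]-\indicator[i\in T]\in\{-1,0,1\}$, which gives the stated membership. This is immediate.
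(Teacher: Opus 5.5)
Your proof is correct. The paper gives no in-text proof of this observation. It attributes the observation to \citet{demand_types}, and when it later needs the fact (in the proof of Lemma~\ref{lem:normal_vec_price_vec}) it invokes their duality result, Proposition 2.20. That result packages the orthogonality $(p'-p)\cdot(\chi_S-\chi_T)=0$ for $p,p'\in F_{S,T}$ together with a bijection between two-element demand sets $\{S,T\}$ and facets.

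Your argument is the elementary, self-contained derivation that underlies the cited result. Continuity of the finitely many utility functions puts both $S$ and $T$ in the demand set along $F$. The indifference equation $\langle\chi_S-\chi_T,p'\rangle=f(S)-f(T)$ is then an affine hyperplane containing $F$. What the citation buys the paper is the converse direction (every two-element demand set comes from a facet), which Lemma~\ref{lem:normal_vec_price_vec} relies on. Your argument does not give that direction, but this observation does not need it.

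Two small remarks. First, your second step is superfluous. Applying your first-step continuity argument at each relative-interior point $p'$ of $F$ already gives $S,T\in\demandset{f}{p'}$, and that is all your third step uses. So you never need demand on $F$ to be \emph{exactly} $\{S,T\}$, and the step you flagged as delicate can simply be dropped.

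Second, $S\neq T$ is most cleanly justified by convexity. The set where a fixed bundle is uniquely demanded is an intersection of finitely many open half-spaces, hence convex. If the same bundle were uniquely demanded on both sides of $F$ near $p$, the segment joining two such points would cross $F$ inside that region, contradicting $F\subseteq\lip_f$.
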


This observation motivates examining the vectors which are normal to facets, therefore, for any set function $f:2^{[n]}\rightarrow \reals_{\ge 0}$, we denote by $V(f)$ the set of all vectors which are normal to a facet of $\lip_f$, which also represents the ways in which demand can change between two adjacent UDRs.

In the leftmost example of Figure~\ref{fig:lip_examples}, the LIP consists of five facets: the red and yellow facets have normal vectors $\pm(0,1)$; the blue and green facets have normal vectors $\pm(1,0)$; and the orange facet has normal vectors $\pm(1,-1)$, thus $V(f_1)=\pm \{(0,1), (1,0), (1,-1)\}$, where $\pm$ signifies taking either the plus or minus of each element in the set.
The middle example has {the same structure, and the same normal vectors, although the precise values that $f_1$ and $f_2$ attach to 
bundles differ.}
The rightmost example has a different structure: the orange facet now has normal vectors $\pm(1,1)$, while the other facet normals remain unchanged, thus $V(f_3)=\pm \{(0,1), (1,0), (1,1)\}$.
As noted above, when crossing a facet, demand changes in the direction of the facet's normal vector. For example, crossing the red facet upwards changes demand from $\{1,2\}$ to $\{1\}$, and the normal vector is indeed $(0,1) = \chi_{\{1,2\}} - \chi_{\{1\}}$. Similarly, moving {left to right}
across the orange facet in the left and middle examples changes demand from $\{1\}$ to $\{2\}$, with normal vector $(1,-1) = \chi_{\{1\}} - \chi_{\{2\}}$.  {Note that this swap between goods may intuitively be understood as a substitution, and corresponds to an instance of our \Cref{def:common-set-functions}(\ref{def:gs})}. {On the other hand, moving left to right across the orange facet in the right-hand example changes demand from $\{1,2\}$ to $\emptyset$, with normal vector $(1,1)=\chi_{1,2}-\chi_\emptyset$.  Demand for both goods reduces in response to a price increase for just one of them because, in this case, the goods are complements (\Cref{def:common-set-functions}(\ref{def:complements})).} 

\paragraph{Demand covers and demand types.}
We now turn to the definitions of demand covers and types:

\begin{definition} [Demand cover and demand-covered functions] \label{def:demand_cover}
    We call a set $\demandcover\subseteq \{-1, 0,1\}^n$ a \emph{demand cover} if for any $v\in \demandcover$ it holds that $-v\in \demandcover$ as well. 
    Let $\demandcover$ be a demand cover and $f:2^{[n]}\rightarrow \reals_{\ge 0}$ be a set function, if $V(f)\subseteq \demandcover$, we say that $\demandcover$ \emph{demand-covers} $f$, and that $f$ is demand-covered by $\demandcover$. 
\end{definition}

\begin{definition} [Demand type] \label{def:demand_type} 
    Let $\demandcover\subseteq \{-1, 0,1\}^n$ be a demand cover. The \emph{demand type} defined by $\demandcover$ is the collection of all 
    set 
    functions $f:2^{[n]}\rightarrow \reals_{\ge 0}$ which are demand-covered by $\demandcover$, and is denoted by $\demandtype(\demandcover)$.
\end{definition}

To illustrate \Cref{def:demand_cover} and \Cref{def:demand_type}, we again refer to \Cref{fig:lip_examples}. 
Recall that for functions $f_1$ and $f_2$ we had $V(f_1)=V(f_2)=\pm \{(1,0), (0,1),(1,-1)\}$, while for function $f_3$ we had 
$V(f_3) = \pm \{(1,0), (0,1), (1,1)\}$. Thus $\pm \{(1,0), (0,1),(1,-1)\}$ demand-covers $f_1$ and $f_2$, and $\pm \{(1,0), (0,1),(1,1)\}$ demand-covers $f_3$, but the only demand cover which covers all $3$ functions is $\{-1,0,1\}^2$, which defines the demand type of all functions $f:2^{[2]}\rightarrow \reals_{\ge 0}$.

An important observation is that not every collection of set functions $f:2^{[n]}\rightarrow \reals_{\ge 0}$ corresponds exactly to some demand type. For example, the collection $\{f_1\}$, \emph{is not} a demand type, because $V(f_1)=V(f_2)$, implying they either both belong to a demand type, or neither does.  

On the other hand, every collection of set functions $f:2^{[n]}\rightarrow \reals_{\ge 0}$ is contained in some demand type, because the demand type defined by the demand cover $\{-1, 0,1\}^n$ is the collection of all set functions $f:2^{[n]}\rightarrow \reals_{\ge 0}$. When examining a collection through the demand type machinery, we are interested in preserving as many of its structural properties as possible, motivating the following definitions.

\begin{definition} [Minimal demand cover / minimal containing demand type] \label{def:min_cover}
    Let $C$ be a collection of set functions $f:2^{[n]}\rightarrow \reals_{\ge 0}$. The \emph{minimal demand cover} of $C$ is denoted by $\mincover(C)$ and is a demand cover $\demandcover$ such that:
    \begin{enumerate}[itemsep=0pt, parsep=0pt, topsep=0pt]
        \item $C\subseteq \demandtype(\demandcover)$.
        \item It holds that $C\not \subseteq \demandtype(\demandcover')$ for any demand cover $\demandcover'\subsetneq \demandcover$. 
    \end{enumerate}
    We call the demand type defined by the the minimal demand cover (i.e., $\demandtype (\mincover(C))$)  the \emph{minimal containing demand type} of $C$.
\end{definition}

Note that an equivalent definition of a minimal demand cover is $\mincover(C) = \bigcup_{f\in C} V(f)$. Thus, $\mincover(C)$ is well defined.  

{Although the minimal demand cover always exists, there is not always a demand cover such that the first condition of \Cref{def:min_cover} is an equality.  This is because, as noted above, not every collection of set functions is a demand type. Nevertheless, it is useful to examine the corresponding objects when they do exist:}
\begin{definition} [Exact demand cover / type] \label{def:exact_cover}
    Let $\demandcover = \mincover(C)$ be the minimal demand cover of a collection $C$ of set functions $f:2^{[n]}\rightarrow \reals_{\ge 0}$. We say that $\demandcover$ is the \emph{exact demand cover} of $C$ if it satisfies $C=\demandtype(\demandcover)$. 
    In this case, we say that $C$ admits an exact demand cover and we refer to $C$ as an \emph{exact demand type}.  
\end{definition}

To illustrate \Cref{def:min_cover} and \Cref{def:exact_cover}, let's return to our earlier discussion about $f_1$ and $f_2$ from \Cref{fig:lip_examples} using the new terminology. The minimal demand cover of $\{f_1\}$ is $\mincover(\{f_1\})=V(f_1)=\pm\{(0,1), (1,0), (1,-1)\}$.
However, $\{f_1\}$ does not 
admit an \emph{exact} demand cover, since $f_2 \in \demandtype(\mincover(\{f_1\}))$. That is, $\{f_1\}$ is not an exact demand type. 
As was observed in \citet{demand_types} (See \Cref{prop:gross_covers}(\ref{item:gc_cover})), the class of gross-substitutes set functions does have an exact demand cover, and in the two dimensional case, it is $\pm\{(0,1), (1,0), (1,-1)\}$. This also implies that gross-substitutes is the minimal containing demand type of $\{f_1\}$. This makes sense, because $f_1$ is a GS set function.  Similarly, $f_3$ is a gross complements set function, and its minimal containing demand type is the set of all gross complements set functions over two items.

\begin{remark} \label{remark:redudant_vectors}
    A collection of set functions may be an exact demand type, i.e., equal to a demand type defined by some demand cover $\demandcover$, without $\demandcover$ being an exact demand cover of it.
    This can happen only in cases where $\demandcover$ is not \emph{minimal}, i.e., some vectors in $\demandcover$ are redundant. 
    Such an example can be found in \Cref{app:demand_types}.  This situation does not arise in the more general setting of \citet{demand_types} when the domain of valuation functions is not restricted to $2^{[n]}$, and so we will need to check minimality even though they do not. 
\end{remark}

\paragraph{Extension to families.} 
Since we study asymptotics, we are not only interested in the demand type or demand cover for any specific $n$, but ``for all $n$s'' at once. In what follows, we therefore appropriately extend the definitions of demand cover and demand type to what we term demand-cover families and demand-type families. 

\begin{definition} [Demand-cover family] \label{def:demand_type_fam}
    Let $\demandcoverfam = \{\demandcover_n\}_{n\in \positivenats}$ be such that 
    for any $n\in \positivenats$, $\demandcover_n \subseteq \{-1, 0,1\}^n$ is a demand cover. We call such  $\demandcoverfam$ a \emph{demand-cover family}.
\end{definition}

\begin{definition} [Demand-type family]
    Let $\demandcoverfam=\{\demandcover_n\}_{n\in \positivenats}$ be a demand-cover family. The demand-type family defined by $\demandcoverfam$ is the class of set functions which are demand-covered by some $\demandcover_n$, and is denoted by $\demandtypefam(\demandcoverfam)$. I.e., $\demandtypefam(\demandcoverfam) = \bigcup_{n\in \positivenats} \demandtype(\demandcover_n)$.  
\end{definition}

Let $\mathcal{C}$ be a class of set functions (e.g., GS, ultra, etc.). 
The definitions of minimal demand cover and exact demand cover are extended to demand-cover families in the natural way, i.e., we say that $\demandcoverfam=\{\demandcover_n\}_{n\in \positivenats}$ is the exact (resp. minimal) demand-cover family of $\mathcal{C}$ if $\demandcover_n$ is the exact (resp. minimal) demand cover of $\{f:2^{[n]}\rightarrow \reals_{\ge 0}\mid f\in \mathcal{C}\}$, for all $n \in \positivenats$.
We denote the minimal demand-cover family of $\mathcal{C}$ by $\mincoverfam(\mathcal{C})$. The definitions of minimal containing demand-type families and exact demand-type families follow.

\subsection{Classes of Set Functions via Demand Types}\label{sec:demand-covers}
In the following, we present demand-cover families of the classes of set functions discussed in this paper. It is this perspective that will drive our results in the following sections. 
Some of these covers were presented in~\citet{demand_types} and~\citet{near_substitutes}.
A subtle point here, is that when taking their results, in order to show exactness we must also show minimality in our setting (see \Cref{remark:redudant_vectors}), this, as well as the proofs for our new demand covers, is done in \Cref{app:demand_covers_minimality}. 

We start with gross substitutes and gross complements. The intuition underlying the following two results is that if a vector in $V(f)$ contains two entries of the same sign, then with any price change crossing the associated facet, demand for the corresponding goods changes in the same way.  This demonstrates, locally, a \emph{complementarity} between these goods.   Similarly, if a vector in $V(f)$ contains two entries of opposite signs, then one of the corresponding goods is \emph{substituted} for the other under any price change crossing the associated facet.

\begin{proposition}[Gross substitutes and gross complements are exact demand-type families \citet{demand_types} together with \Cref{lem:gs_cover_minimality} and \Cref{lem:supermodular_cover_minimality}] \label{prop:gross_covers}
    Gross substitutes and gross complements are exact demand-type families. Furthermore:
    \begin{enumerate}
        \item The exact demand-cover family of gross substitutes is the set of vectors with entries in $\{-1,0,1\}$ with at most one positive coordinate and at most one negative coordinate.\label{item:gs_cover}
        \item The exact demand-cover family of gross complements is the set of vectors with entries in $\{-1,0,1\}$ whose non-zero coordinate entries are all of the same sign.\label{item:gc_cover}
    \end{enumerate}
\end{proposition}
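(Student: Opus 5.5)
The statement combines two facts: that gross substitutes and gross complements are demand types with the stated covers, which is the content of \citet{demand_types}, and that these covers are \emph{minimal} in our restricted setting of domain $2^{[n]}$ (cf.\ \Cref{remark:redudant_vectors}). I will take the first part from \citet{demand_types}, checking that their characterization restricts to set functions: the set functions that are GS (resp.\ gross complements) are exactly those $f$ with $V(f)$ contained in the stated cover. So the plan reduces to two steps. First, the containment $C\subseteq \demandtype(\demandcover)$ together with its converse $\demandtype(\demandcover)\subseteq C$. Second, minimality, namely that every vector of the cover actually occurs in $V(f)$ for some $f$ in the class. Once minimality holds, $\demandcover=\bigcup_{f\in C}V(f)=\mincover(C)$, and the equality $C=\demandtype(\demandcover)$ then makes it exact in the sense of \Cref{def:exact_cover}.

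For the first step I would briefly recall the local argument. Suppose some facet normal $v=\chi_S-\chi_T$ has two positive entries $i,j$. Then increase $p_i$ generically so as to cross that facet in the direction where demand moves from $S$ to $T$. Demand for $j$ drops, which violates GS; this gives ``at most one positive''. Two negative entries are handled symmetrically, using the opposite crossing and the fact that the raised good $i$ must have sign matching the direction of crossing. The complements case is analogous with mixed signs. For the converse direction I would follow \citet{demand_types}: a generic single-coordinate price path crosses finitely many facets, and each crossing changes demand only along an allowed vector. I would then handle non-generic paths by a perturbation and upper-hemicontinuity argument. The result is that some $T\in D_f(p+\delta e^i)$ satisfies the required inclusion.

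The main work is minimality (the lemmas referenced as \Cref{lem:gs_cover_minimality} and \Cref{lem:supermodular_cover_minimality}). For each vector $v=\chi_A-\chi_B$ in the cover, with $A\cap B=\emptyset$, I need an $f$ in the class whose LIP has a facet with normal $v$.

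For gross complements, take $v=\chi_A$ (up to sign). Use $f(S)=M\cdot\indicator[A\subseteq S]$ plus a tiny generic additive part. This is supermodular. Near prices at which $\sum_{i\in A}p_i=M$, with the other prices large, the demand flips between $A$ and $\emptyset$ across a hyperplane with normal $\chi_A$. I need to check that this hyperplane piece is $(n-1)$-dimensional, i.e.\ that it is a genuine facet.

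For gross substitutes, take $v=e^a-e^b$ or $v=\pm e^a$. Use a unit-demand function on $\{a,b\}$: $f(S)=\indicator[S\cap\{a,b\}\neq\emptyset]$. This is GS, and it has a facet $p_a=p_b<1$ where demand swaps $\{a\}\leftrightarrow\{b\}$. Similarly, an additive $f$ gives the facet with normal $e^a$.

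I expect the only delicate point to be verifying that the indifference set is really a facet, i.e.\ full $(n-1)$-dimensional, rather than a lower-dimensional intersection. I would handle this by exhibiting an open neighborhood in the hyperplane on which exactly those two bundles are optimal.
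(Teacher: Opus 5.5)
Your proposal follows the paper's proof. The paper also takes ``class $=$ demand type of the stated cover'' from \citet{demand_types}, and it proves minimality by exhibiting, for each cover vector $v=\chi_A-\chi_B$, a function in the class and a price at which exactly $A$ and $B$ are demanded. It uses $f(S)=\indicator[S\neq\emptyset]$ for $e^i-e^j$, an arbitrary function for $\pm e^i$, and $f(S)=|S|^2$ (rather than your $M\cdot\indicator[A\subseteq S]$) for the complements case. It then converts the two-bundle demand set into a facet via \Cref{lem:normal_vec_price_vec}, which is the step you check by hand through the open-neighbourhood argument.

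When you do that check, the unit-demand example needs $0<p_a=p_b<1$, since at nonpositive prices $\{a,b\}$ is also demanded. The complements example likewise needs strictly positive prices on $A$, exceeding any additive part, so that proper subsets of $A$ are not demanded.
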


Referring back to \Cref{fig:lip_examples}, we can now clearly see that the two examples on the left are GS, as all the horizontal and vertical facets have normals in the standard basis, and the orange facet has normals $\pm(1,-1)$, therefore they are GS by \Cref{prop:gross_covers}(\ref{item:gs_cover}). The example on the right is gross complements; 
again, all horizontal and vertical facets have normals in the standard basis, and the orange facet has normals $\pm(1,1)$, therefore it is gross complements 
by \Cref{prop:gross_covers}(\ref{item:gc_cover}).

\begin{proposition} [$\Delta$-substitutes is an exact demand-type family \citet{near_substitutes} and  \Cref{lem:delta_sub_cover_min}]\label{prop:near-substitutes_demand_type}
    For any $\Delta\in \mathbb{N}^+$, $\Delta$-substitutes is an exact demand-type family, whose exact demand-cover family is the set of vectors with entries in $\{-1, 0, 1\}$ with at most $\Delta$ positive coordinates, and at most $\Delta$ negative coordinates.
\end{proposition}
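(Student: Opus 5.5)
The proposition attributed to \citet{near_substitutes} has two parts. First, $\Delta$-substitutes coincides with the demand type $\demandtype(\demandcover_n)$, where $\demandcover_n$ is the set of vectors in $\{-1,0,1\}^n$ with at most $\Delta$ positive and at most $\Delta$ negative entries. Second, $\demandcover_n$ is minimal, i.e.\ it equals $\bigcup_f V(f)$ over $\Delta$-substitutes functions $f$. I would prove the two parts separately.

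\textbf{Equality of classes.} I would argue through facet crossings, using \Cref{obs:change_in_demand_normal}.

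For the direction ``covered by $\demandcover_n$ implies $\Delta$-substitutes'', fix $p$, $i$, $\delta$ and $S\in\demandset{f}{p}$, and move along the segment from $p$ to $p+\delta\standard{i}$.
\begin{itemize}
\item First perturb $p$ generically in the other coordinates, so that the segment meets the LIP only in relative interiors of facets.
\item At each crossing, demand changes by a normal vector $v$. Since the price of $i$ increases, $v_i$ has a definite sign, so $v$ is not orthogonal to $\standard{i}$.
\item The difficulty is that several crossings could accumulate more than $\Delta$ additions. Here I would use that the utility $\max_S(f(S)-p(S))$ is convex. Along a single-coordinate path, the demand for $i$ is monotone, so it switches at most once. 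Hence, restricted to the subfunction in which $i$'s status is fixed, the only crossings with $v_i\ne0$ number at most one, and crossings with $v_i=0$ are parallel to the path and are not crossed generically.
\item So generically demand changes exactly once, by a single $v\in\demandcover_n$, which gives $|S\setminus T|,|T\setminus S|\le\Delta$.
\item Finally pass to the limit using finiteness of the set of bundles and upper hemicontinuity of demand to remove the genericity assumption.
\end{itemize}

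For the converse, if $f$ is $\Delta$-substitutes and $v\in V(f)$ is the normal of a facet $F$, pick $i$ with $v_i\ne0$ and a point in the relative interior of $F$. A small move in direction $\pm\standard{i}$ crosses $F$ only, taking the demand from $S$ to $T$ with $\chi_S-\chi_T=v$. The definition then forces at most $\Delta$ entries of each sign, since $T$ must be the unique demand just past the facet.

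\textbf{Minimality.} For each $v\in\demandcover_n$ I would construct an explicit $\Delta$-substitutes $f$ with $v\in V(f)$. Let $P$ and $N$ be the positive and negative supports of $v$. Take $f$ to be a ``$\{P,N\}$ two-package'' function:
\begin{itemize}
\item $f$ is additive with a large weight on the bundles $P$ and $N$ treated as single complementary packages;
\item specifically, $f(S)=a\cdot\mathbb{1}[P\subseteq S]+b\cdot\mathbb{1}[N\subseteq S]$ adjusted to be unit-demand across the packages, i.e.\ $\max(\cdot,\cdot)$, plus a small additive term for monotonicity.
\end{itemize}
There is then a price region where $P$ and $N$ are indifferent, giving a facet with normal $\chi_P-\chi_N=v$. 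I would then check that this function is $\Delta$-substitutes by verifying that all of its facet normals lie in $\demandcover_n$, using the first part.

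The main obstacle is the first direction: showing that a single-coordinate price increase induces only one net bundle change bounded by $\Delta$, rather than a chain of crossings. I would try first to invoke the convexity/monotonicity argument sketched above, and fall back on citing \citet{near_substitutes} for that equivalence if it fails.
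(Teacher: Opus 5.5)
Your proposal is correct. The minimality half is essentially the paper's \Cref{lem:delta_sub_cover_min}. The paper also uses the two-package indicator $\indicator[P\subseteq S\lor N\subseteq S]$, with prices $\frac{1}{2|P|}$ on $P$, $\frac{1}{2|N|}$ on $N$ and $2$ elsewhere, so that $\demandset{f}{p}=\{P,N\}$, and concludes via \Cref{lem:normal_vec_price_vec}. Two small remarks on your version:
\begin{itemize}
\item When $v$ has a single sign ($N=\emptyset$), your $\max$ form needs $a>b$, since otherwise $f$ is constant. The paper uses $\indicator[|S|\ge k]$ in this case; $\indicator[P\subseteq S]$ also works.
\item The additive term is unnecessary, because the package indicator is already monotone. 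It is harmless, though: $\demandset{f+w}{p}=\demandset{f}{p-w}$ for additive $w$, so it only translates the LIP.
\end{itemize}

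The genuine difference is in the class equality. The paper does no facet-crossing analysis there. It cites \citet{near_substitutes} and notes that the equality is immediate from \Cref{lem:axis_parallel_equivalent}: for this particular cover, the condition ``$T=S$ or $\chi_T-\chi_S\in\demandcover_n$'' is literally $|S\setminus T|\le\Delta$ and $|T\setminus S|\le\Delta$. That lemma is in turn derived from \Cref{thm:baldwin_singletons} plus a perturbation. Your two directions amount to proving \Cref{lem:axis_parallel_equivalent} from first principles:
\begin{itemize}
\item Own-price monotonicity (adding the optimality inequalities at $p$ and $p+t\standard{i}$ gives $\indicator[i\in T]\le\indicator[i\in S]$) shows that a generic $\standard{i}$-parallel segment crosses at most one facet. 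Facets with $v_i=0$ are avoided, and every other crossing flips $i$, which can happen at most once.
\item Conversely, a small increase of $p_i$ from a relative-interior point of a facet, starting from whichever of its two bundles contains $i$, realizes the facet normal.
\end{itemize}
Nothing here uses the specific cover, so your argument actually proves the general lemma. It buys self-containment, with no reliance on the external theorem. The paper's route buys a one-line proof that is reused for GSC+ and ASC.

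One detail to tighten. The definition quantifies over every $S\in\demandset{f}{p}$, and perturbing ``generically in the other coordinates'' may land you in a neighbouring bundle's UDR, or leave you on a facet. Instead, pick the perturbed start inside the UDR of the given $S$. This region touches $p$, e.g.\ near $p-\epsilon\chi_S+\epsilon\chi_{[n]\setminus S}$, and reaching it may require moving coordinate $i$ as well. Choose the point so that its $\standard{i}$-line is generic, which is a full-measure condition on that open set. Then use finiteness of the set of bundles plus closedness of the demand graph only at the endpoint $p+\delta\standard{i}$. This is exactly the perturbation in the paper's proof of \Cref{lem:axis_parallel_demand_sets}.
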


We remark that, as expected, the exact demand-cover family of $1$-substitutes is the same as the exact demand-cover family of GS, and thus the two classes are equal.
Additionally, this demand cover fits the way we defined $\Delta$-substitutes(\Cref{def:uncommon-set-functions}(\ref{def:delta_sub})) --- the vectors in the cover are exactly the ways in which demand changes in response to small changes in a single item's price per the definition of $\Delta$-substitutes. This is equivalent for demand types of set functions (see \Cref{lem:axis_parallel_equivalent}), and the reason why the definition we provide is equivalent to that given by \citet{near_substitutes}.

We next turn to the classes of gross substitutes and complements (GSC) \citep{sun_and_yang} and GSC+. And show that GSC+ is the minimal containing demand type of GSC. \Cref{ex:gsc+_not_gsc} shows this containment is strict.

\begin{restatable}{proposition}{gscplus}[GSC+ is the minimal containing demand-type family of GSC \citep{demand_types} and  \Cref{app:gsc+}] 
\label{prop:gsc+}
    GSC+ is an exact demand-type family, with the exact demand-cover family given by the set of vectors with entries in $\{-1,0,1\}$ with at most $2$ non-zero coordinates. 
    Furthermore, GSC+ is the minimal containing demand-type family of GSC.
\end{restatable}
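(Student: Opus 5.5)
The proof splits into three claims, handled in order: (a) GSC+ as defined in Definition~\ref{def:our-set-functions} equals the demand type $\demandtype(\demandcover_n)$, where $\demandcover_n$ is the set of vectors in $\{-1,0,1\}^n$ with at most two non-zero coordinates; (b) $\demandcover_n$ is minimal, i.e.\ every vector in it equals $V(f)$-membership for some GSC+ function, so it is the exact cover; (c) $\demandcover_n$ is also the minimal cover of GSC, i.e.\ $\mincover(\mathrm{GSC}) = \demandcover_n$.

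For (a), I would use the equivalence, referred to after \Cref{prop:near-substitutes_demand_type} (\Cref{lem:axis_parallel_equivalent}), between demand-type conditions and the responses of demand to a single-coordinate price increase $p \mapsto p+\delta\standard{i}$.

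\textbf{($\Leftarrow$)} Suppose $f$ is covered by $\demandcover_n$. Move along the segment from $p$ to $p+\delta\standard{i}$, perturbing generically so it crosses facets only in their relative interiors. Each facet crossed has a normal $v$ with $v_i \neq 0$, since the segment is parallel to $\standard{i}$. Because $v$ has at most two non-zero entries, each crossing drops $i$ and changes at most one other coordinate. Once $i$ has left demand, no facet with $v_i\neq 0$ can be crossed in the increasing direction, so there is at most one such crossing. This gives a $T$ with $|(S\setminus\{i\})\triangle T|\le 1$; a limiting argument handles the non-generic starting point $p$.

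\textbf{($\Rightarrow$)} Suppose $f$ is GSC+ and take a facet with normal $v = \chi_S-\chi_T$. If $v$ has three or more non-zero coordinates, choose $i$ with $v_i\ne 0$ and a point in the relative interior of the facet. A small increase in $p_i$ (or a decrease, obtained by swapping the roles of $S$ and $T$) crosses the facet, and near that point demand is unique on each side. The GSC+ condition then forces $|(S\setminus\{i\})\triangle T|\le1$, i.e.\ $|S\triangle T|\le 2$, a contradiction.

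For (b), minimality, for each $v$ with at most two non-zero entries I would exhibit a function on $2^{[n]}$ (essentially a two-item function padded with dummy items of zero marginal value) whose LIP has a facet with normal $v$. Concretely, $\pm(1,0)$ comes from an additive function, $\pm(1,-1)$ from $f_1$ of \Cref{fig:lip_examples}, and $\pm(1,1)$ from $f_3$. I then need to check that each such function is GSC+, which follows from (a), and that it is also GSC, which settles (c) in the same stroke. $f_1$ is GS, hence GSC with a trivial partition, and $f_3$ is supermodular, hence GSC with the partition $\{1\},\{2\}$. For a pair $\{j,k\}$ inside an $n$-item function, embed via the partition that puts $j,k$ on the same side (for a substitute pair) or on opposite sides (for a complement pair).

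Still for (c), the containment GSC $\subseteq$ GSC+ follows either from (a) plus the Sun--Yang cover from \citep{demand_types}, or directly: under a change in $p_a$, GSC demand can only swap one item within $A_i$ or drop complements across, and those drops with a single facet crossing again reduce to the two-coordinate normals. I would cite the \citep{demand_types} cover of GSC, namely vectors with at most one $+$ and one $-$ within a part and sign-aligned across parts. Since that cover consists of vectors with at most two non-zero coordinates, it is contained in $\demandcover_n$. The union over all partitions is exactly $\demandcover_n$, which is $\mincover$ of GSC.

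The main obstacle I expect is the ($\Leftarrow$) direction of (a). It needs the genericity and limiting argument, with ties at $p$ and with paths passing through lower-dimensional LIP pieces. A second risk is that the result of \citep{demand_types} for the GSC cover is stated for general integer domains, so its minimality in our $2^{[n]}$ setting (\Cref{remark:redudant_vectors}) has to be rechecked via the explicit examples above.
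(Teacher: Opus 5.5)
Your proposal is correct and follows essentially the same route as the paper: exactness via the single-coordinate price-change characterization (your crossing-count argument re-derives \Cref{lem:axis_parallel_equivalent} for this cover, and the genericity and limiting step you flag is already carried out in the proof of \Cref{lem:axis_parallel_demand_sets}); no redundant vectors, via explicit GSC examples realizing each two-coordinate normal; and GSC coverage by citing the known GSC demand cover. The differences are cosmetic: you use padded copies of $f_1$ and $f_3$, whereas the paper appeals to exactness of the GS cover and uses $f(S)=\indicator[\{i,j\}\subseteq S]$.
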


\paragraph{A broader demand type: ASC}
We now state the exact demand-cover family for \gooddemand, which is a key part of our analysis. 

\begin{proposition}[\gooddemand is an exact demand-type family, \Cref{app:asc_exact}] \label{prop:gooddemand_type}
    \gooddemand is an exact demand-type family with the exact demand-cover family given by vectors $v$ with entries in $\{-1, 0,1\}$ satisfying one of the following:
    \begin{enumerate}
        \item all non-zero coordinates of $v$ have the same sign. 
        \item $v$ has at exactly one positive entry and exactly one negative entry.
    \end{enumerate}
\end{proposition}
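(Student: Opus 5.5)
Let $\demandcover^{ASC}_n$ denote the set of vectors described in the statement. We must show two things. First, ASC (as given in \Cref{def:our-set-functions}) equals $\demandtype(\demandcover^{ASC}_n)$ for every $n$. Second, $\demandcover^{ASC}_n$ is minimal: every vector in it is realized as a facet normal of some ASC function. The plan is to reduce everything to single-coordinate price increases, using the same device that underlies \Cref{prop:near-substitutes_demand_type} and \Cref{prop:gsc+}. This is the equivalence (\Cref{lem:axis_parallel_equivalent}) between a demand type and the set of demand changes arising from small increases in one item's price.

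\textbf{Step 1 (ASC $\subseteq$ demand type).} Take $f\in$ ASC and a facet $F$ of $\lip_f$ separating UDRs with bundles $S,T$. By \Cref{obs:change_in_demand_normal}, $v=\chi_S-\chi_T$ is normal to $F$. We need $v\in\demandcover^{ASC}_n$. Pick $p$ in the relative interior of $F$ and a coordinate $i$ with $v_i\neq 0$; this means $F$ is not parallel to $\standard{i}$. A small move along $\pm\standard{i}$ from just inside one UDR then crosses $F$ into the other, and without loss of generality this moves from $S$ (where $v_i=1$, so $i\in S$) to $T$. The ASC condition at the starting point forces $T$, the unique demanded bundle after the move, to satisfy either $|(S\setminus\{i\})\triangle T|\le 1$ or $T\subseteq S$.
\begin{itemize}
\item If $T\subseteq S$, all nonzero entries of $v$ are $+1$, which is case 1.
\item If $|(S\setminus\{i\})\triangle T|\le1$, then $v$ has entry $+1$ at $i$ and at most one further nonzero entry. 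That gives either a unit vector (case 1), two positive entries (case 1), or one positive and one negative entry (case 2).
\end{itemize}
Negation symmetry handles the reverse crossing direction.

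\textbf{Step 2 (demand type $\subseteq$ ASC).} Take $f$ with $V(f)\subseteq\demandcover^{ASC}_n$, a price $p$, $S\in D_f(p)$, item $i$ and $\delta\ge0$. First perturb $p$ generically so that $S$ is uniquely demanded and the segment $p+t\standard{i}$, $t\in[0,\delta]$, crosses facets only at relative-interior points. Along the segment, demand changes by successive vectors $-v^{(1)},-v^{(2)},\dots$. Each $v^{(k)}$ lies in the cover and has $v^{(k)}_i=1$, since raising $p_i$ can only drop $i$.

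After the first crossing, $i$ is no longer demanded, so no later facet with $v_i\neq0$ can be crossed while moving in direction $\standard{i}$. Hence there is at most one crossing, with $v=v^{(1)}$. If $v$ is same-signed, then $T\subseteq S$. If $v$ has exactly one positive and one negative entry, then $T=(S\setminus\{i\})\cup\{j\}$. Either way $T$ satisfies the ASC condition. A limiting/closedness argument on demand correspondences (upper hemicontinuity of $D_f$) then transfers the conclusion back to non-generic $p$.

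This step is the main obstacle. The concern is that the tie structure at a non-generic starting point may let $S$ be non-unique, so the perturbation must be chosen to keep $S$ demanded. I would first try perturbing $p$ by $-\epsilon\chi_S+\epsilon\chi_{S^c}$, which makes $S$ uniquely demanded.

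\textbf{Step 3 (minimality).} For each $v\in\demandcover^{ASC}_n$ we exhibit an $f$ in the class with $v\in V(f)$.
\begin{itemize}
\item For same-signed $v$ with support $A$, use the supermodular function $f(X)=\indicator[A\subseteq X]$ (plus a small additive part if monotonicity or nonzero-ness is needed). This has $\pm\chi_A$ as a facet normal; \Cref{lem:supermodular_cover_minimality} already gives this.
\item For $v=\standard{j}-\standard{k}$, a unit-demand function over $\{j,k\}$ works, as in \Cref{lem:gs_cover_minimality}.
\end{itemize}
Since the union of the covers of GS and gross complements (\Cref{prop:gross_covers}) is exactly $\demandcover^{ASC}_n$, and both classes lie in $\demandtype(\demandcover^{ASC}_n)=$ ASC, minimality follows directly from those two lemmas.
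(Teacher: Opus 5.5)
Your proof is correct, but it takes a different route from the paper on both halves.

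\textbf{Exactness.} For the equality $\gooddemand=\demandtype(\demandcover_n)$, the paper just cites \Cref{lem:axis_parallel_equivalent}. That lemma rests on Theorem~1 of \citet{baldwinworkingpaper}, via the perturbation argument in the proof of \Cref{lem:axis_parallel_demand_sets}. You instead prove both inclusions from first principles. Step~1 crosses a single facet along $\standard{i}$. Step~2 uses a generic axis-parallel segment, the law of demand (a transversal crossing in direction $\standard{i}$ always drops $i$, so there is at most one crossing), and a closed-graph limit.

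Your route buys something concrete. \Cref{lem:axis_parallel_equivalent} only produces some $T\in\demandset{f}{p+\delta\standard{i}}$ with $\chi_T-\chi_S\in\demandcover_n$. A priori this permits changes that do not involve $i$ at all, e.g.\ $T\supsetneq S$ with several added goods, or a swap of two other goods. These are not literally ASC responses. The paper leaves implicit the revealed-preference step showing that in those cases $S$ itself remains demanded. Your sign observation $v_i=+1$ gives $T\subseteq S$ or a single swap of $i$ for another good directly.

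\textbf{Minimality.} The paper goes through ultra (\Cref{prop:ultra_demand}): it shows ultra $\subseteq$ ASC and uses symmetric threshold functions and GS functions as witnesses. This simultaneously shows that ASC is the minimal containing type of ultra. You instead observe that $\demandcover_n$ is exactly the union of the exact GS and gross-complements covers, and reuse \Cref{lem:gs_cover_minimality} and \Cref{lem:supermodular_cover_minimality}. That is shorter and needs no facts about ultra.

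\textbf{A shared technicality.} \Cref{def:our-set-functions} quantifies only over $p\in\reals^n_{\ge 0}$. The facet point you use in Step~1, and the point $p-\varepsilon\standard{i}$, may lie outside this orthant. The paper's appeal to \Cref{lem:axis_parallel_equivalent}, which ranges over all of $\reals^n$, has the same mismatch. For monotone $f$ this can be repaired. The set $J$ of goods with negative prices is always demanded, so such facets are facets of the contraction $X\mapsto f(X\cup J)$. That contraction inherits the ASC condition at nonnegative prices.
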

Observe that, as can be seen in \Cref{prop:gooddemand_type}, the exact demand-cover family of ASC contains that of gross substitutes, gross complements and GSC+, and thus ASC contains all of them.

We now connect ASC to ultra (\Cref{def:our-set-functions}(\ref{def:ultra})), the proof is deferred to \Cref{sec:asc_min_ultra}.

\begin{restatable}{proposition}{ultrademandcover}\label{prop:ultra_demand}
    \gooddemand is the minimal containing demand-type family of ultra.
\end{restatable}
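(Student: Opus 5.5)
Proving \Cref{prop:ultra_demand} amounts to showing that the minimal demand-cover family of ultra, i.e.\ $\bigcup_{f \text{ ultra on } [n]} V(f)$, equals the ASC cover of \Cref{prop:gooddemand_type}. Since ASC is an exact demand-type family (\Cref{prop:gooddemand_type}), this yields $\mincoverfam(\text{ultra})$ equal to the exact cover of ASC, hence the claim. The proof therefore splits into two inclusions for each $n$: (a) every facet normal of an ultra function is of ASC form, and (b) every ASC vector arises as a facet normal of some ultra function on $[n]$.

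For (a), let $f$ be ultra and $F$ a facet of $\lip_f$ separating UDRs with demands $S$ and $T$. Then $v=\chi_S-\chi_T$ is normal to $F$ by \Cref{obs:change_in_demand_normal}. Suppose for contradiction that $v$ has mixed signs but is not of the form $\standard{a}-\standard{b}$. Then, up to swapping $S$ and $T$, we have $|S\setminus T|\ge 1$, $|T\setminus S|\ge 1$, and $|S\setminus T|+|T\setminus S|\ge 3$. Take $p$ in the relative interior of $F$, so that $S,T$ are exactly the demanded bundles. Assume wlog $|S|\le|T|$, pick $x\in S\setminus T$, and apply the ultra exchange to get $y\in T\setminus S$ with $f(S)+f(T)\le f(S-x+y)+f(T-y+x)$. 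Subtracting prices, which cancel, gives $u(S)+u(T)\le u(S')+u(T')$, so both $S',T'$ are demanded at $p$. Since $|S\triangle T|\ge 3$, at least one of $S',T'$ differs from both $S$ and $T$. This contradicts $p$ being in a facet's relative interior, where only two bundles are demanded. (If $|S\setminus T|=|T\setminus S|=1$, then $v=\standard{x}-\standard{y}$, which is allowed.)

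For (b), I would give explicit ultra constructions. For $v=\standard{a}-\standard{b}$ or $v=\pm\standard{a}$, an additive function or a simple unit-demand function (which is GS, hence ultra) realizes it. For $v=\chi_A$ with $|A|\ge2$, take $f(X)=\indicator[A\subseteq X]$ restricted appropriately, or a function that depends only on $|X\cap A|$ and is convex in it, e.g.\ $f(X)=g(|X\cap A|)$ with $g(k)=0$ for $k<|A|$ and $g(|A|)=1$. Symmetric (cardinality-dependent) functions are ultra, since the exchange preserves the multiset of cardinalities. At prices near $p_i=1/|A|$ on $A$ and large elsewhere, the only demanded bundles are $\emptyset$ and $A$, giving the normal $\chi_A$. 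The function values can be scaled into $[0,1]$ if needed. The sign-reversed vectors come from the symmetry of demand covers. One should double-check that the indifference locus there is genuinely an $(n-1)$-dimensional facet, which follows because the other bundles are strictly worse nearby.

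The main obstacle I expect is step (a): making rigorous that a facet's relative interior has exactly two demanded bundles, and handling the case where $S'$ or $T'$ coincides with $S$ or $T$. I would use the general-position property of facets' relative interiors from the LIP structure. The coincidence case is excluded by $|S\triangle T|\ge3$, since the swap $S'=S-x+y$ has $|S'\triangle S|=2$ and $|S'\triangle T|=|S\triangle T|-2\ge1$.
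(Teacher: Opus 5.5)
Your part (a) is correct and is the paper's argument: take $p$ with $\demandset{f}{p}=\{S,T\}$ (\Cref{lem:normal_vec_price_vec}), apply the ultra exchange at $x\in S\setminus T$, let the prices cancel, and conclude $T=(S\setminus\{x\})\cup\{y\}$. Your realizations of $\pm\standard{a}$ and $\standard{a}-\standard{b}$ by additive and unit-demand functions are also fine. One minor point: you only need that ASC equals the demand type of its cover, which follows from \Cref{lem:axis_parallel_equivalent}. The paper derives the minimality half of \Cref{prop:gooddemand_type} from this very proposition, so citing full exactness would be circular, though your (b) would supply that half anyway.

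The gap is in part (b), for $v=\chi_A$ with $|A|\ge 2$ and $A\subsetneq[n]$. Your function $f(X)=\indicator[A\subseteq X]=g(|X\cap A|)$ depends on $|X\cap A|$ rather than on $|X|$. So it is not symmetric on $[n]$, and your justification via preserved cardinalities does not apply. In fact it is not ultra. Take $z\notin A$, $S=\{z\}$, $T=A$ (so $|S|\le|T|$) and $x=z$. Then for every $y\in A$ we get $f(\{y\})+f((A\setminus\{y\})\cup\{z\})=0<1=f(S)+f(T)$.

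This is not fixed by choosing a better $g$. The same exchange shows that no ultra function whose outside items are null (i.e.\ $f(X)=f(X\cap A)$) can have $\demandset{f}{p}=\{\emptyset,A\}$ at any $p$. Indeed, it yields some $y\in A$ with $f(\{y\})+f(A\setminus\{y\})\ge f(\emptyset)+f(A)$, which forces $\{y\}$ and $A\setminus\{y\}$ into the demand set as well.

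The missing idea is to make $f$ depend on all items through $|X|$, and to use prices rather than values to exclude items outside $A$. The paper takes the genuinely symmetric, hence ultra, function $f(X)=\indicator[|X|\ge|A|]$, with $p_i=1/|A|$ for $i\in A$ and $p_i=2$ otherwise. Any bundle containing an outside item has value at most $1$ and cost at least $2$. Nonempty proper subsets of $A$ have value $0$ and positive cost. Both $\emptyset$ and $A$ give utility $0$. Hence $\demandset{f}{p}=\{\emptyset,A\}$, and $\chi_A\in V(f)$ by \Cref{lem:normal_vec_price_vec}.
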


We also show in \Cref{sec:venn_diagram_examples} that \gooddemand is not only a superclass of the other classes we discuss, it is in fact, greater than their union.
\begin{lemma}
    ASC is a strict superclass of the union ultra $\cup$ gross complements $\cup$ GSC+.
\end{lemma}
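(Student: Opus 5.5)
The plan is to split the statement into containment, which is easy, and strictness, which needs an explicit witness.

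\textbf{Containment.} I will use the demand-cover descriptions already established.
\begin{enumerate}
\item By \Cref{prop:gross_covers}(\ref{item:gc_cover}), every vector in the gross-complements cover has all nonzero entries of the same sign. Such vectors lie in the ASC cover of \Cref{prop:gooddemand_type}, so gross complements $\subseteq$ ASC.
\item By \Cref{prop:gsc+}, every vector in the GSC+ cover has at most two nonzero entries. Such a vector is either sign-uniform or has exactly one $+1$ and one $-1$, so GSC+ $\subseteq$ ASC.
\item By \Cref{prop:ultrademandcover}, ASC is the minimal containing demand-type family of ultra, so ultra $\subseteq$ ASC.
\end{enumerate}
Hence the union is contained in ASC.

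\textbf{Strictness.} I will exhibit one monotone $f$ in ASC that is neither gross complements, nor GSC+, nor ultra. I build it as a direct sum over disjoint blocks of items, $f(S)=\sum_k g_k(S\cap B_k)$, and use one structural fact.

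\emph{Direct-sum fact.} For a direct sum, $\demandset{f}{p}=\{\bigcup_k S_k : S_k\in\demandset{g_k}{p|_{B_k}}\}$. So the LIP of $f$ is the union of the cylinders over the LIPs of the $g_k$. Each facet of $f$ is a cylinder over a facet of some $g_k$, and its normal is that $g_k$'s normal padded with zeros. Therefore $V(f)$ is the union of the zero-padded sets $V(g_k)$. Consequently $f$ lies in any demand type whose cover contains all of these padded vectors. Conversely, any single facet normal of any block appears in $V(f)$.

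I take items $1,\dots,7$ and the following two blocks.

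\emph{Block $g$ on items $\{1,\dots,5\}$.} Set $g(S)=\indicator[\{1,2\}\subseteq S]+\indicator[\{3,4,5\}\subseteq S]$.
\begin{itemize}
\item It is itself a direct sum of two single-threshold functions. Only $\emptyset$ and the full threshold set are ever demanded in each, so its facet normals are standard basis vectors together with $\pm(1,1,0,0,0)$ and $\pm(0,0,1,1,1)$. All of these are sign-uniform, and $g$ is supermodular.
\item The normal $(0,0,1,1,1)$ has three nonzero entries, so $g$, and hence $f$, is not GSC+.
\item $g$ is not ultra. Take $S=\{1,2\}$, $T=\{3,4\}$ and $x=1$. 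For $y\in\{3,4\}$,
\[
g(S)+g(T)=1>0=g(\{2,y\})+g(\{1\}\cup T\setminus\{y\}).
\]
\end{itemize}

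\emph{Block $h$ on items $\{6,7\}$.} Take unit demand, $h(\{6\})=h(\{7\})=h(\{6,7\})=1$. It is GS with facet normal $\pm(1,-1)$, which is in the ASC cover. It is not supermodular, since $h(7\mid\emptyset)=1>0=h(7\mid\{6\})$.

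\textbf{Verifying the witness.}
\begin{itemize}
\item \emph{$f\in$ ASC.} By the direct-sum fact, $V(f)$ consists of padded standard basis vectors, $\pm(1,1,0,\dots)$, $\pm(0,0,1,1,1,0,0)$ and $\pm(0,\dots,0,1,-1)$. Each is sign-uniform or has exactly one $+1$ and one $-1$, so $f\in$ ASC by \Cref{prop:gooddemand_type}.
\item \emph{Not GSC+.} The normal $(0,0,1,1,1,0,0)$ of $g$ survives in $V(f)$ and has three nonzero entries.
\item \emph{Not gross complements.} The normal $(0,\dots,0,1,-1)$ of $h$ survives in $V(f)$, so \Cref{prop:gross_covers}(\ref{item:gc_cover}) fails. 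Directly: $f(7\mid\emptyset)=1>0=f(7\mid\{6\})$, so $f$ is not supermodular.
\item \emph{Not ultra.} Use the same $S,T\subseteq\{1,\dots,4\}$ as for $g$. All sets in the ultra inequality then avoid items $6,7$, so $f$ coincides with $g$ on them and the violation persists.
\end{itemize}

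\textbf{Main obstacle.} The step needing the most care is the direct-sum fact: that every facet normal of a block survives in $V(f)$ and no new normals arise. This follows from the product form of the demand correspondence: indifference in $f$ at a generic LIP point comes from exactly one block being indifferent.
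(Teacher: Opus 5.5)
Your proof is correct, but your strictness witness follows a different route from the paper's.

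\textbf{Containment.} This half matches the paper. You compare the exact covers of gross complements and GSC+ with the ASC cover, and invoke \Cref{prop:ultra_demand} for ultra.

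\textbf{Strictness: the paper's witness.} The paper uses one hand-crafted function on three items: $f(\{i\})=1$, $f(\{1,2\})=f(\{2,3\})=1.5$, $f(\{1,3\})=2$, $f([3])=10$.
\begin{itemize}
\item It is ASC because $f([3])>f(\{i\})+f([3]\setminus\{i\})$ rules out any demand set $\{\{i\},[3]\setminus\{i\}\}$. These are the only possible non-ASC normals in dimension $3$.
\item It is not GSC+ because $p=(10/3,10/3,10/3)$ gives demand $\{\emptyset,[3]\}$.
\item It fails ultra and supermodularity by direct inequalities.
\end{itemize}

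\textbf{Strictness: your witness.} You put each obstruction in its own block and glue the blocks by a direct sum. This relies on $V$ of a direct sum being the union of the zero-padded blockwise $V$'s. The cleanest justification is \Cref{lem:normal_vec_price_vec}: $|\demandset{f}{p}|$ is the product of the blockwise demand-set sizes, so it equals $2$ exactly when one block is indifferent between two bundles and the others are not.

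\textbf{Comparison.} Your route is more modular and explains the strictness structurally. The ASC cover is closed under zero-padding and unions, so ASC is closed under direct sums, while the union of the three smaller classes is not. The paper's route is more economical. Its example has $n=3$, the smallest possible, since for $n=2$ every vector of $\{-1,0,1\}^2$ is in the GSC+ cover. Yours needs seven items plus the direct-sum lemma.

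\textbf{One imprecision.} The claim that ``only $\emptyset$ and the full threshold set are ever demanded'' is false, because the LIP lives in all of $\reals^n$. For example, at $(p_3,p_4,p_5)=(1/2,1/2,-1)$ the threshold block on $\{3,4,5\}$ demands exactly $\{\{5\},\{3,4,5\}\}$, so $\pm(0,0,1,1,0,0,0)\in V(f)$. The claim also contradicts your own inclusion of the standard basis vectors. Your list of $V(f)$ is therefore incomplete, but nothing breaks. Justify sign-uniformity of $g$'s normals by supermodularity of $g$ together with \Cref{prop:gross_covers}, which you already hint at.
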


\section{Poly-Many Critical Points for \gooddemand\ Demand Type}\label{sec:poly_many_result}
In this section, we present our main result, stated in the following theorem.

\begin{theorem} \label{thm:gooddemand_poly_crit}
When $f$ is \gooddemand, the number of critical points is $O(n^2)$.
\end{theorem}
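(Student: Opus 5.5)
We follow the ray $p(\alpha)=\alpha^{-1}c$ as $\alpha$ increases from $0$ to $1$; the price point then moves from infinity toward $c$, decreasing every coordinate proportionally. We first establish the bound under a genericity (``facet-piercing'') assumption: whenever the ray meets the LIP, it crosses the relative interior of a single facet, so no lower-dimensional face is hit. Under this assumption each critical value corresponds to crossing one facet $F$ between the UDRs of two bundles $S$ (before) and $T$ (after). By Observation~\ref{obs:change_in_demand_normal}, $\chi_T-\chi_S$ is normal to $F$. By Proposition~\ref{prop:gooddemand_type}, $\chi_T-\chi_S$ is either sign-uniform or has exactly one $+1$ and one $-1$. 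Along the ray, the agent's best response has $f(S_\alpha)$ nondecreasing and total cost $c(S_\alpha)$ nondecreasing. Together with the fact that the prices decrease along the ray, this forces the sign-uniform case to be $T\supsetneq S$, i.e.\ pure additions of one or more items. The other case is a single swap $T=S\setminus\{i\}\cup\{j\}$. A swap requires $c_j/f$-ratios ordered appropriately; more precisely, crossing with prices all scaled down and $f(T)\ge f(S)$ forces $c_j> c_i$.

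Next we bound the transitions with the potential function of \citet{DuettingEFK21,DEFK21-journal}. Order the actions by cost, $c_1\le\dots\le c_n$, and let $\Phi(S)=\sum_{i\in S} \mathrm{rank}(i)$, the sum of cost ranks of the items in $S$, which lies between $0$ and $n^2$. We also track $|S|$. An addition step strictly increases $|S|$, and a swap step keeps $|S|$ fixed while strictly increasing $\Phi$, since it replaces $i$ by the costlier $j$. We claim $|S|$ never decreases, because no transition in the ASC cover removes items without adding at least as many. Hence there are at most $n$ addition steps. We need, however, to control $\Phi$ across additions. Additions increase $\Phi$ as well, since ranks are positive. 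Thus $\Phi$ strictly increases by an integer at every critical value and is bounded by $n(n+1)/2$, giving $O(n^2)$ critical values. The step requiring care is the swap direction argument: at the crossing point we have $f(S)-p(S)=f(T)-p(T)$, and on the side with lower prices (larger $\alpha$) $T$ is better. This gives $\alpha f(T)-c(T)>\alpha f(S)-c(S)$ beyond the crossing with equality at it, so $f(T)>f(S)$ and therefore $c(T)>c(S)$, i.e.\ $c_j>c_i$. Strictness in the ranking uses distinct costs, which we also include in the genericity assumption. The same computation shows $c(T)>c(S)$ in the sign-uniform case, excluding $T\subsetneq S$.

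Finally, we remove the facet-piercing assumption by perturbation, and we expect this to be the main obstacle. If the ray passes through a lower-dimensional face, the best response may jump between bundles whose difference is not in the ASC cover. To handle this, we perturb $c$ to $c+\epsilon\eta$ for generic $\eta$ (and break cost ties) so that the perturbed ray is facet-piercing. The set of such perturbations is dense, since the ray through the origin misses faces of codimension $\ge2$ for generic directions. We then argue that each critical value of the original instance is a limit of critical values of the perturbed instances. The upper envelope $\max_S(\alpha f(S)-c(S))$ is continuous in $c$, and each distinct best-response segment of positive length persists under small perturbation. Hence the number of critical values is lower semicontinuous in the right sense, and the $O(n^2)$ bound transfers. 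The tie-breaking rule in favour of the principal must be checked so that it does not create extra breakpoints; it only selects among bundles on the envelope, so the breakpoints coincide with kinks of the envelope and the count is unaffected.
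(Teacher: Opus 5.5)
Your proposal is correct and follows essentially the same route as the paper. Both arguments reduce to facet-piercing cost vectors with distinct entries via a measure-zero and perturbation argument; the paper cites Lemma~4.13 of D\"utting et al.\ for the perturbation step you sketch. Both then show that the potential $\Phi(S)=\sum_{i\in S}\mathrm{rank}(i)$ strictly increases at every crossing, using the two ASC cases. Two minor differences: you derive $c(T)-c(S)=\alpha^*(f(T)-f(S))>0$ directly from the crossing equation instead of citing the paper's weak cost-monotonicity lemma, which gives strictness for free, and your extra tracking of $|S|$ is not needed.
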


This theorem generalizes all previously known results guaranteeing at most polynomially many critical points --- specifically for the classes of gross substitutes~\citep{DuettingEFK21}, ultra~\citep{ultracontracts}, and supermodular functions~\citep{VuongDPP24, DuettingFG24}. Since each of these classes admits a polynomial-time implementation of demand queries via value queries, this implies (by \Cref{prop:eisner-severance}) a polynomial-time algorithm for computing the optimal contract in all these cases.
However, \Cref{thm:gooddemand_poly_crit} is strictly more general than these cases (recall \Cref{fig:asc_classes_venn}).

We start, in \Cref{sec:lin_cont_line} by interpreting critical points through the demand-type machinery of \citet{demand_types}, using the terminology set up in \Cref{sec:demand_types}. This interpretation sets up the framework for the proof of \Cref{thm:gooddemand_poly_crit}, which is presented in \Cref{sec:poly_crit_proof}

\subsection{Analyzing Critical Points Through Demand Types}\label{sec:lin_cont_line}

Throughout this section we assume that $f$ and $c$ are such that the agent's best response is non-constant.
Recall that under a linear contract $\alpha\in(0,1]$, the agent's best response problem is to find $S$ such that $\alpha f(S) - c(S) = \alpha (f(S) - \sum_{j \in S} c_j/\alpha)$ is maximized. Equivalently, we seek to solve a demand query with the set function $f$ at prices $\left(\frac{c_1}{\alpha}, \dots, \frac{c_n}{\alpha}\right)=\alpha^{-1} c$.

This leads us to define the \emph{linear contract ray}
\[
\ray{c} = \left\{\alpha^{-1} c \mid \alpha \in (0,1) \right\},
\]
which is exactly the ray given by $\{(c_1\cdot x, \dots, c_n\cdot x)\mid x\in (1, \infty)\}$. 
This is the set of prices we may feed into a demand query to find out the agent's best response to a linear contract.

To avoid having to deal with ties, 
we  consider ``facet-piercing'' cost vectors, defined as follows.

\begin{figure}
    \centering
    
    \begin{subfigure}[t]{0.32\textwidth}
        \centering
        \begin{tikzpicture}[scale=0.75]
            \draw[thick, ->] (-0.1, 0) -- (4, 0) node[above] {\footnotesize $p_1$};
            \draw[thick, ->] (0, -0.1) -- (0, 4) node[right] {\footnotesize $p_2$};
            
            \draw[blue,thick] (1, 0) -- (1, 2);
            \draw[red,thick] (0, 2) -- (1, 2);
            \draw[orange,thick] (1, 2) -- (2, 3);
            \draw[green,thick] (2, 3) -- (2, 4);
            \draw[yellow,thick] (2, 3) -- (4, 3);
            
            \node at (0.5, 1.5) {\footnotesize $\{1,2\}$};
            \node at (2.5, 3.5) {\footnotesize $\emptyset$};
            \node at (1, 3) {\footnotesize $\{1\}$};
            \node at (2, 2) {\footnotesize $\{2\}$};

            \draw (1,0) -- (1,0) node [below] {\scriptsize $1$};
            \draw (2,0) -- (2,0) node [below] {\scriptsize $2$};
            \draw (0,2) -- (0,2) node [left] {\scriptsize $2$};
            \draw (0,3) -- (0,3) node [left] {\scriptsize $3$};

            \draw[dashed] (0.5, 1) -- (2,4);
            \fill (0.5, 1) circle (2pt) node[below] {\scriptsize $c$};
        \end{tikzpicture}
        \caption{
        $f(\{1\})=2, f(\{2\})=3$, \\
        $f(\{1,2\})=4, c=(0.5, 1)$. \\
        The facet-piercing condition is violated at $(1,2)$, where $\ray{c}$ hits the red, orange, and blue facets at once.}
    \end{subfigure}
    \hfill
    \begin{subfigure}[t]{0.32\textwidth}
        \centering
        \begin{tikzpicture}[scale=0.75]
            \draw[thick, ->] (-0.1, 0) -- (4, 0) node[above] {\footnotesize $p_1$};
            \draw[thick, ->] (0, -0.1) -- (0, 4) node[right] {\footnotesize $p_2$};
            
            \draw[blue,thick] (1, 0) -- (1, 1);
            \draw[red,thick] (0, 1) -- (1, 1);
            \draw[orange,thick] (1, 1) -- (3,3);
            \draw[green,thick] (3, 3) -- (3, 4);
            \draw[yellow,thick] (3, 3) -- (4, 3);

            \node at (0.5, 0.5) {\footnotesize $\{1,2\}$};
            \node at (3.5, 3.5) {\footnotesize $\emptyset$};
            \node at (1.5, 2.5) {\footnotesize $\{1\}$};
            \node at (2.5, 1.5) {\footnotesize $\{2\}$};

            \draw (1,0) -- (1,0) node [below] {\scriptsize $1$};
            \draw (3,0) -- (3,0) node [below] {\scriptsize $3$};
            \draw (0,1) -- (0,1) node [left] {\scriptsize $1$};
            \draw (0,3) -- (0,3) node [left] {\scriptsize $3$};

            \draw[dashed] (0.1,0.1) -- (4,4);
            \fill (0.1, 0.1) circle (2pt) node[right] {\scriptsize $c$};
        \end{tikzpicture}
        \caption{
        $f(\{1\})=3, f(\{2\})=3$, \\ 
        $f(\{1,2\})=4,c=(0.1, 0.1)$. \\
        The facet-piercing condition is violated at the endpoints of the orange facet: $\ray{c}$ hits multiple facets.}
    \end{subfigure}
    \hfill
    \begin{subfigure}[t]{0.32\textwidth}
        \centering
        \begin{tikzpicture}[scale=0.75]
            \draw[thick, ->] (-0.1, 0) -- (4, 0) node[above] {\footnotesize $p_1$};
            \draw[thick, ->] (0, -0.1) -- (0, 4) node[right] {\footnotesize $p_2$};
            \draw[blue,thick] (3, 0) -- (3, 1);
            \draw[red,thick] (0, 3) -- (1, 3);
            \draw[orange,thick] (1, 3) -- (3,1);
            \draw[green,thick] (1, 3) -- (1, 4);
            \draw[yellow,thick] (3, 1) -- (4, 1);

            \node at (1, 1) {\footnotesize $\{1,2\}$};
            \node at (3, 3) {\footnotesize $\emptyset$};
            \node at (0.5, 3.5) {\footnotesize $\{1\}$};
            \node at (3.5, 0.5) {\footnotesize $\{2\}$};

            \draw (1,0) -- (1,0) node [below] {\scriptsize $1$};
            \draw (3,0) -- (3,0) node [below] {\scriptsize $3$};
            \draw (0,1) -- (0,1) node [left] {\scriptsize $1$};
            \draw (0,3) -- (0,3) node [left] {\scriptsize $3$};
            \draw[dashed] (2, 1) -- (4,2);
            \fill (2, 1) circle (2pt) node[below] {\scriptsize $c$};
        \end{tikzpicture}
        \caption{
        $f(\{1\})=1, f(\{2\})=1$,\\ 
        $f(\{1,2\})=4, c=(2,1)$.\\
        The facet-piercing condition holds, as there are no multi-facet intersections between $\ray{c}$ and the LIP.}
    \end{subfigure}

    \caption{Examples of the linear contract ray relative to the facet-piercing property. The colored lines represent the LIP facets, the black dot represents $c$, and the dashed line represents $\ray{c}$. (a) Violation due to intersection with red, orange, and blue facets. (b) Violation at the endpoints of the orange facet. (c) A case where the condition holds.}
    \label{fig:facet_piercing}
\end{figure}
\begin{definition} [Facet-piercing vector] 
\label{def:facet-piercing} A vector $c\in \reals^n$ 
is said to be \emph{facet-piercing} with respect to a success probability function $f$ if $\ray{c}$ intersects with the LIP one facet at a time.
\end{definition}

Figure~\ref{fig:facet_piercing} shows two examples of the facet-piercing condition being violated, and one where it is satisfied. The colored lines are the LIP facets, the black dot is $c$, and the dashed line represents the linear contract ray in both instances.
Note that 
a cost vector being facet-piercing implies that the intersection between $\ray{c}$ and the LIP is $0$-dimensional.  

When $c$ is facet-piercing, the agent's best response and critical points are defined by the linear contract line, as follows. For values of $\alpha$ where the linear contract line does not intersect with the LIP, the best response corresponds to the UDR in which the line lies. Critical points occur precisely when the linear contract line intersects a facet of the LIP, and at such points, the agent's best response changes according to the vector normal to the facet being crossed (by \Cref{obs:change_in_demand_normal}).  
To illustrate, consider the rightmost example in \Cref{fig:facet_piercing}. When $\alpha$ is small, the linear contract line lies within the UDR $\emptyset$, which is therefore the agent's best response. There is one critical point --- where the linear contract line crosses the orange facet --- beyond which the best response becomes $\{1,2\}$. The vector normal to the orange facet is $(1,1)$, which indeed equals $\chi_{\{1,2\}} - \chi_{\emptyset}$.

The following two lemmas show that, for the purpose of upper bounding the number of critical values, we can restrict attention to facet-piercing cost vectors. We defer their proofs to \Cref{app:main_result}.

\begin{restatable}{lemma} {facetpiercingmeasure} \label{lemma:facet_piercing_measure_0}
        For any success probability function $f$, the set of cost vectors $c$ that are not facet-piercing is of measure $0$.
\end{restatable}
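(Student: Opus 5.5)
We prove \Cref{lemma:facet_piercing_measure_0} by showing that the set of bad cost vectors is contained in a finite union of sets of dimension at most $n-1$. Since $f$ is fixed, the LIP is determined by the finitely many affine functions $u_S(p) = f(S) - \sum_{i\in S} p_i$ for $S\subseteq[n]$. Each facet of $\lip_f$ lies in a hyperplane $H_{S,T} = \{p : u_S(p) = u_T(p)\}$, whose normal is $\chi_S - \chi_T$. Each facet is a relatively open $(n-1)$-dimensional polyhedral piece. Its relative boundary, and more generally every point where two or more facets meet, lies in the finite union of the sets $H_{S,T}\cap H_{S',T'}$ for pairs of hyperplanes that are not identical. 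Each such set is an affine subspace of dimension at most $n-2$.

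A cost vector $c$ fails to be facet-piercing exactly when $\ray{c}$ misbehaves in one of two ways.
\begin{enumerate}
\item[(i)] The ray meets a point lying on two or more facets, that is, a point of one of these codimension-$\ge 2$ affine subspaces $A$.
\item[(ii)] The ray meets the LIP in a set of positive length, that is, it runs inside some hyperplane $H_{S,T}$.
\end{enumerate}

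For case (i), fix one subspace $A$ of dimension at most $n-2$. The set of $c$ with $xc\in A$ for some $x\in(1,\infty)$ is the image of $A\times(1,\infty)$ under the smooth map $(a,x)\mapsto a/x$. That image has dimension at most $n-1$, so it has Lebesgue measure zero in $\reals^n$.

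For case (ii), containment of $\ray{c}$ in $H_{S,T}$ forces both
\[
\langle \chi_S-\chi_T, c\rangle = 0 \quad\text{and}\quad f(S)=f(T),
\]
because the condition must hold for two distinct scalings $x$. The first equation already confines $c$ to a hyperplane through the origin, which has measure zero.

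Taking the finite union over all pairs $(S,T)$ and all subspaces $A$ gives a finite union of measure-zero sets. The complement of this union consists of vectors whose ray crosses the LIP only at isolated points, each lying in the relative interior of a single facet. These vectors are facet-piercing.

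The step needing the most care is making precise that ``intersects the LIP one facet at a time'' fails only through the two mechanisms (i) and (ii). This requires that every point of the LIP not in the relative interior of a unique facet lies in an intersection of two distinct hyperplanes $H_{S,T}$. To justify it, use the structure of $\lip_f$ as the non-smooth locus of the convex piecewise-linear function $\max_S u_S(p)$. Its $(n-1)$-dimensional cells are the facets, and its lower-dimensional cells are each contained in at least two distinct bisector hyperplanes. With that in place, the rest is a direct dimension count.
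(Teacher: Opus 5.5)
Your proposal is correct and is essentially the paper's argument. The paper notes that a non-facet-piercing $c$ gives some $\alpha$ at which three distinct bundles $S_1,S_2,S_3$ are simultaneously optimal. It writes this as two independent linear constraints on $(c,\alpha)\in\reals^{n+1}$ and projects the resulting codimension-$2$ subspace to $\reals^n$. That is your case (i), phrased in $(c,\alpha)$-space rather than as the cone over $H_{S_1,S_2}\cap H_{S_1,S_3}$ in price space.

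Your separate case (ii), where the ray runs inside some $H_{S,T}$ and so forces $\langle \chi_S-\chi_T,c\rangle=0$, handles a situation the paper's proof does not treat explicitly. The structural claim in your last paragraph is most simply justified by the paper's three-bundle observation. If $|\demandset{f}{p}|=2$, continuity puts $p$ in the relative interior of a single facet. Otherwise $p\in H_{S_1,S_2}\cap H_{S_1,S_3}$, and $\chi_{S_1}-\chi_{S_2}$ and $\chi_{S_1}-\chi_{S_3}$ are non-parallel for distinct $S_1,S_2,S_3$.
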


\begin{restatable}{lemma} {almostallcostswlg} \label{lemma:almost_all_costs_wlg}
     Let $f:2^{[n]}\rightarrow \reals_{\ge 0}$ be a success probability function and $U\subseteq \reals_{\ge 0}^n$ be a set of cost vectors which captures all but a measure $0$ set of cost vectors. For any cost vector $c\in \reals_{\ge 0}^n$, there exists a cost vector $\hat{c} \in U$ such that there are at least as many critical points under $f$ and $\hat{c}$ as there are under $f$ and $c$.
\end{restatable}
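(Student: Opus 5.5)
The idea is a continuity/perturbation argument. The critical points of $(f,c)$ are witnessed by finitely many contract values, each with a strict margin in the agent's utility. A small enough perturbation of $c$ preserves these margins, and every neighborhood of $c$ meets $U$ because $U$ has full measure.

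\emph{Step 1: finitely many witnesses.} Fix $c$ and suppose there are $k$ critical points $0<\alpha^{(1)}<\dots<\alpha^{(k)}\le 1$. The agent's utility for $S$ is the line $\ell_S(\alpha)=\alpha f(S)-c(S)$, determined by the pair $(f(S),c(S))$. Between consecutive critical points the upper envelope coincides with a single line. Choose witness points $\alpha_0<\alpha^{(1)}<\alpha_1<\dots<\alpha^{(k)}<\alpha_k$, all in $(0,1)$, lying in the open intervals between critical points.

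\emph{Step 2: a strict margin.} For each $j$, let $L_j$ be the envelope line on the interval containing $\alpha_j$, and let $\mathcal{S}_j$ be the sets $S$ with $\ell_S=L_j$. Since there are finitely many sets, there is $\eps>0$ such that for every $j$,
\[
\alpha_j f(T)-c(T)\le L_j(\alpha_j)-\eps \quad \text{for all } T\notin\mathcal{S}_j .
\]

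\emph{Step 3: consecutive witnesses have distinct values of $f$.} All sets in $\mathcal{S}_j$ share the value $f_j:=f(S)$. Consecutive lines $L_j\neq L_{j+1}$ must have $f_j\neq f_{j+1}$. Otherwise they would be parallel with different intercepts, so one would dominate the other everywhere and the envelope could not switch between them. By the standard monotonicity of $f(S_\alpha)$ in $\alpha$, we get $f_0<f_1<\dots<f_k$.

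\emph{Step 4: perturb.} Let $\delta=\eps/(3n)$. For any $\hat c\ge 0$ with $\|\hat c-c\|_\infty<\delta$, every set's utility at each $\alpha_j$ moves by less than $\eps/3$. Hence every maximizer of $\alpha_j f(S)-\hat c(S)$ lies in $\mathcal{S}_j$. In particular, the agent's best response $\hat S_{\alpha_j}$ under $\hat c$, whatever the tie-breaking, satisfies $f(\hat S_{\alpha_j})=f_j$. Since these values are pairwise distinct, the best response under $\hat c$ changes at least once in each interval $(\alpha_{j-1},\alpha_j)$, which gives at least $k$ critical points.

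The set $\{\hat c\in\reals^n_{\ge 0}:\|\hat c-c\|_\infty<\delta\}$ has positive measure, even when $c$ lies on the boundary of the orthant. Since the complement of $U$ has measure $0$, this set contains some $\hat c\in U$, which is the required vector.

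\emph{Main obstacle.} The delicate point is bookkeeping under ties. The "best response" is defined with tie-breaking, and distinct sets can share a line, so one must not track individual sets. The fix is to count critical points through the value $f(S_\alpha)$, or equivalently the envelope line, and to note that distinct consecutive lines have distinct values of $f$. That observation is what makes the count robust under perturbation.

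A critical point located exactly at $\alpha=1$ is an edge case. There the tie-breaking rule favors larger $f$, so the switch is already visible at $\alpha=1$ itself. In that case I would take $\alpha_k=1$ and apply the same margin argument there, using the tie-breaking rule to pick out the larger $f$.
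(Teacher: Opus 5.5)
Your route matches the paper's: a perturbation lemma, plus the fact that a small box around $c$, intersected with $\reals^n_{\ge 0}$, has positive measure and so meets $U$. The difference is that the paper imports the perturbation step as Lemma~4.13 of \citet{DuettingEFK21}, while your Steps 1--4 prove it directly. For critical points in $(0,1)$ that proof is sound. At a non-breakpoint every maximizer lies on the envelope line, so the uniform margin $\eps$ exists and the maximizers at each $\alpha_j$ stay in $\mathcal{S}_j$ under $\hat c$. Counting via the value of $f$ (the envelope slope) is the right way to handle ties. Your remark about $c$ on the orthant boundary makes explicit something the paper leaves implicit.

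The step that fails is your treatment of a critical point at $\alpha^{(k)}=1$. There $L_{k-1}(1)=L_k(1)$, so at $\alpha_k=1$ no margin separates $\mathcal{S}_k$ from $\mathcal{S}_{k-1}$. The tie-breaking rule only acts on exact ties. A perturbation that makes $\mathcal{S}_k$ slightly more expensive relative to $\mathcal{S}_{k-1}$ breaks the tie in favor of $L_{k-1}$ and pushes the crossing past $1$. Concretely, take $n=1$, $f(\{1\})=1$, $c_1=1$. The best response switches from $\emptyset$ to $\{1\}$ exactly at $\alpha=1$, but for $\hat c_1=1+\eta$ there is no critical point in $[0,1]$ at all. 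So ``every $\delta$-perturbation preserves the count'' is false here, and only one-sided perturbations work.

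An easy repair is to first replace $c$ by $\lambda c$ for some $\lambda\in(0,1)$. Since $\alpha f(S)-\lambda c(S)=\lambda\bigl(\tfrac{\alpha}{\lambda}f(S)-c(S)\bigr)$ and the tie-breaking toward larger $f$ is unchanged, each $\alpha^{(j)}$ yields a critical point $\lambda\alpha^{(j)}\in(0,1)$ of $\lambda c$. Your Steps 1--4 then apply around $\lambda c$, whose $\delta$-box still has positive measure in the orthant. If only critical points in $(0,1)$ are counted, as in the proof of \Cref{thm:gooddemand_poly_crit}, this case never arises and your main argument is already complete.
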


\subsection{Applying the Demand Type Framework for \gooddemand}\label{sec:poly_crit_proof}
The proof of \Cref{thm:gooddemand_poly_crit} also relies on the well-known property of increasing best-response costs. For completeness, the proof can be found in \Cref{app:cost_increasing}.
\begin{restatable} {lemma} {brcostmonotone} \label{lem:cost_incr_with_alpha}
For $\alpha_2>\alpha_1$ it holds that $c(S_{\alpha_2})\ge c(S_{\alpha_1})$.
\end{restatable}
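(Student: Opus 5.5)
The plan is a standard revealed-preference exchange argument. Let $S_1=S_{\alpha_1}$ and $S_2=S_{\alpha_2}$ be the agent's best responses, with $0\le \alpha_1<\alpha_2\le 1$. Write $c(S)=\sum_{i\in S}c_i$.

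By optimality of $S_1$ at $\alpha_1$,
\[
\alpha_1 f(S_1)-c(S_1)\ \ge\ \alpha_1 f(S_2)-c(S_2),
\]
and by optimality of $S_2$ at $\alpha_2$,
\[
\alpha_2 f(S_2)-c(S_2)\ \ge\ \alpha_2 f(S_1)-c(S_1).
\]
Adding the two inequalities cancels the costs and gives $(\alpha_2-\alpha_1)\bigl(f(S_2)-f(S_1)\bigr)\ge 0$. Hence $f(S_2)\ge f(S_1)$.

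To get the cost inequality, I would rearrange the second inequality to
\[
c(S_2)-c(S_1)\ \le\ \alpha_2\bigl(f(S_2)-f(S_1)\bigr),
\]
which only bounds the cost difference from above. The useful bound comes from the first inequality, which gives
\[
c(S_2)-c(S_1)\ \ge\ \alpha_1\bigl(f(S_2)-f(S_1)\bigr)\ \ge\ 0,
\]
using $\alpha_1\ge 0$ and the monotonicity of $f$ along the best responses just shown. This yields $c(S_{\alpha_2})\ge c(S_{\alpha_1})$.

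The only delicate step is tie-breaking. Each $S_\alpha$ is some maximizer, so both optimality inequalities hold for whichever maximizers were selected, and ties cause no trouble in deriving them. The edge cases also need a check. If $\alpha_1=0$, the first inequality reads $c(S_2)\ge c(S_1)$ directly. If $\alpha_2=1$, nothing changes, since the argument never uses $\alpha_2<1$. So no case analysis on the tie-breaking rule is required, and I expect no real obstacle.
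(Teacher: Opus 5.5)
Your proof is correct and is essentially the paper's revealed-preference argument. The paper writes the two optimality conditions in price form, $f(S)-c(S)/\alpha$, so that adding them cancels the $f$-terms and gives $(\alpha_2-\alpha_1)c(S_{\alpha_1})\le(\alpha_2-\alpha_1)c(S_{\alpha_2})$ in one step; you instead cancel costs to get $f(S_{\alpha_2})\ge f(S_{\alpha_1})$ and then reuse the $\alpha_1$-inequality, which is equally short and also covers $\alpha_1=0$ without dividing by $\alpha_1$.
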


We are now ready to prove \Cref{thm:gooddemand_poly_crit}, applying techniques from \citet{DuettingEFK21} alongside the demand type framework.
\begin{proof} [Proof of \Cref{thm:gooddemand_poly_crit}]
    Let $f:2^{[n]}\rightarrow \reals_{\ge 0}$ be an \gooddemand success probability function, and let $c\in \reals_{\ge 0}^n$ be some cost vector. 
    Note that the set of cost vectors with non-unique costs (i.e., there exist $i,j$ such that $c_i=c_j$) is a measure $0$ set. Thus,
    By \Cref{lemma:almost_all_costs_wlg} and \Cref{lemma:facet_piercing_measure_0}, it is without loss of generality to assume that the $c$ is facet-piercing and costs are unique (the union of two measure $0$ sets is also measure $0$). 
    We reorder the actions such that $c_1 < c_2 < \dots < c_n$.  Observe that, after reordering, $f$ is still \gooddemand and $c$ is still facet-piercing.

    The proof will now follow a potential argument. Consider the potential function  $\Phi:2^{[n]}\rightarrow \mathbb{N}$ from \citet{DuettingEFK21}, defined as 
    \[
    \Phi(S) = \sum_{i\in S} i.
    \]
    $\Phi$ is upper bounded by $\Phi([n])=\sum_{i=1}^n i = \frac{n(n+1)}{2}$. We show that it strictly increases at each critical point, concluding the proof.
    Let $\alpha\in(0,1)$ be some critical point and let $S_1$ be the agent's best response directly below $\alpha$, and $S_2$ be the agent's best response directly above it.
    Because $c$ is facet-piercing, $v=\chi_{S_2} - \chi_{S_1}\in V(f)$, and therefore satisfies one of the conditions in \Cref{prop:gooddemand_type}.
    
    If $v$ satisfies condition $(1)$ in \Cref{prop:gooddemand_type}, i.e., all non-zero coordinates of $v$ have the same sign, then either $S_1 \subsetneq S_2$ or $S_2\subsetneq S_1$. Since the cost of the best response is increasing with $\alpha$
    (\Cref{lem:cost_incr_with_alpha})
    , we have $S_1\subsetneq S_2$, and thus $\Phi(S_2) > \Phi(S_1)$, as needed.

    If $v$ satisfies condition $(2)$ in \Cref{prop:gooddemand_type}, i.e., $v$ has exactly one positive entry and exactly one negative entry, then there exist $a\in S_1\setminus S_2$ and $b\in S_2\setminus S_1$ such that $S_2 = (S_1 \setminus \{a\})\cup \{b\}$. Once again applying the monotonicity of best response cost we get $c_b> c_a$ and therefore $b>a$. Thus, $\Phi(S_2)-\Phi(S_1) = b-a > 0$, as needed.
\end{proof}
\section{Computing Demand Queries}
One way to efficiently compute a demand query with value queries is by a specialized greedy algorithm, as done for GS \citep{gross_substitutability} and ultra \citep{ultra_valuations}. In this section, we offer an alternative approach, by utilizing the demand type perspective. We thus obtain an efficient polynomial-time algorithm for computing a demand query (with value queries only) for $\Delta$-substitutes where $\Delta$ is constant, GSC+ and GSC.
Our approach works, beyond these, for any \emph{succinct} demand-type family.
\begin{definition}[Succinct demand-type family]\label{def:succinct}
    Let $\mathcal{C}$ be demand-type family with the exact demand cover family $\demandcoverfam = \{\demandcover_n\}_{n\in \positivenats}$. $\mathcal{C}$ is called \emph{succinct} if there is a polynomial-time algorithm that receives $n$ and outputs $\demandcover_n$.
\end{definition}

\begin{theorem}\label{thm:demand_query}
    Let $\mathcal{C}$ be a succinct demand type family. For any $f\in \mathcal{C}$ and price vector $p\in \reals^n$, there exist a poly-time algorithm that given $p$ and value oracle access to $f$, outputs a set $S\in \demandset{f}{p}$. 
\end{theorem}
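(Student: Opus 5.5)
The plan is to follow the path-tracing idea sketched after \Cref{main_thm:demand_query}. Let $M > f([n]) - \min_S f(S) + \max_i |p_i|$, computed with polynomially many value queries or a crude bound. At the starting price vector $q^0 = (M,\dots,M)$ the bundle $\emptyset$ is the unique demand. We then lower the coordinates one at a time. Write $q^k = (p_1,\dots,p_k, M,\dots,M)$; the $k$-th segment moves coordinate $k$ from $M$ to $p_k$ while keeping the others fixed. We maintain a bundle $S^k \in \demandset{f}{q^k}$, and output $S^n$.

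\textbf{Structure of one segment.} Along the $k$-th segment only $p_k$ varies. Every bundle containing $k$ has its utility shifted by the same amount, and bundles not containing $k$ are unaffected. Hence the best bundle among those containing $k$ is the same at every point of the segment, and likewise for those not containing $k$. Demand therefore switches at most once along the segment, and only from a bundle $S$ with $k\notin S$ to a bundle $T$ with $k \in T$.

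\textbf{Finding the new bundle.} Suppose for now that the segment is generic, meaning it crosses the LIP only in relative interiors of facets. Then at the single switch point the segment crosses one facet. By \Cref{obs:change_in_demand_normal}, $\chi_T - \chi_S \in V(f)\subseteq \demandcover_n$. Succinctness lets us list $\demandcover_n$ in polynomial time. For each $v \in \demandcover_n$ with $v_k = 1$ and $\chi_S + v \in \{0,1\}^n$, we compute the value of $p_k$ at which $T_v = S + v$ ties with $S$; this takes two value queries. We then take the candidate whose tie point is reached first as $p_k$ decreases, provided this happens before $p_k$ reaches its target. The first such tie must be the true switch: if some candidate tied with $S$ strictly earlier, it would beat $S$ while $S$ is still demanded, a contradiction. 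If no candidate ties within the segment, $S$ remains demanded. Each segment thus costs $O(|\demandcover_n|)$ value queries, and the whole algorithm is polynomial.

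\textbf{Main obstacle: degeneracy.} The step I expect to be hardest is handling degeneracy. If the segment passes through a lower-dimensional face of the LIP, or a tie between two candidates occurs, the jump $\chi_T-\chi_S$ need not lie in $\demandcover_n$. My plan is a symbolic perturbation: run the algorithm on prices $p + \varepsilon w$ with $w$ generic, for example $w_i = \delta^i$, treating $\varepsilon,\delta$ symbolically. All comparisons are then lexicographic comparisons of polynomials in $\varepsilon$ and $\delta$ with explicitly computable coefficients. For all sufficiently small $\varepsilon>0$ the perturbed path is generic, since the non-generic set is a finite union of lower-dimensional affine pieces and the path avoids it for generic $w$. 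So the procedure above outputs a bundle demanded at $p+\varepsilon w$ for all small $\varepsilon$. Finally, by continuity of $f(S)-\sum_{i\in S}p_i$ in $p$ and finiteness of the bundle set, a bundle demanded at $p+\varepsilon w$ for arbitrarily small $\varepsilon$ is also in $\demandset{f}{p}$, which proves \Cref{thm:demand_query}.
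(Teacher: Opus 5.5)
Your proposal is correct. Its path and update rule are effectively those of \Cref{alg:demand_query}: every bundle containing $k$ gains utility at the same rate along the $k$-th segment, so picking the candidate whose tie with $S$ comes first is the same as taking $\arg\max_{T\in\neighbors{\demandcover_n}{S}} u_B(q^k,T)$, which is what the paper does. The real difference is how correctness at degenerate prices is obtained.

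The paper perturbs only in the analysis. \Cref{lem:axis_parallel_demand_sets} shows that for \emph{any} $S\in\demandset{f}{p}$ and any $q$ differing from $p$ in one coordinate, some bundle in $\neighbors{\demandcover_n}{S}$ is demanded at $q$. It does this by moving $p$ slightly into the UDR of $S$ and applying \Cref{thm:baldwin_singletons}. The base case uses the local-maximum characterization \Cref{lem:local_max_global_demand}, so $M$ only has to dominate $f$ on the neighbours of $\emptyset$, with no global bound or monotonicity. Your $M$ relies on $f([n])$ being the maximum, which is fine in the monotone contract setting.

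You instead prove the generic case directly, via the at-most-one-switch observation and \Cref{obs:change_in_demand_normal} (or \Cref{lem:normal_vec_price_vec}). This gives a short self-contained proof of the direction of \Cref{thm:baldwin_singletons} that matters: when both endpoints lie in UDRs, the switch point has demand set exactly $\{S,T\}$. You then handle degeneracy with symbolic perturbation plus a closedness argument. That is valid, but the genericity claim needs more than ``lower-dimensional'', since the LIP itself is lower-dimensional and every switching segment meets it. What you need is that prices with at least three demanded bundles form a codimension-$2$ set. This holds because for distinct $S_1,S_2,S_3$ the vectors $\chi_{S_1}-\chi_{S_2}$ and $\chi_{S_1}-\chi_{S_3}$ are linearly independent. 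The projection of this set along $\standard{k}$ then has codimension at least $1$, so a generic perturbation line avoids it for small $\varepsilon$. This must hold inside the slice where coordinates $k+1,\dots,n$ stay at the unperturbed value $M$, and that step uses the fact that items priced at $M$ are never demanded.

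Finally, given \Cref{lem:axis_parallel_demand_sets}, your first-tie rule is already correct at degenerate prices, so the symbolic layer could be dropped. What your route buys is independence from \Cref{lem:local_max_global_demand} and \Cref{thm:baldwin_singletons}, using only the basic facet-normal fact. The cost is a more complicated algorithm and a genericity argument that must be made precise.
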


Note that $\Delta$-substitutes is a succinct demand-type family, since one can easily compute a poly-size demand cover $\demandcover_n$ for any $n$, as given by \Cref{prop:near-substitutes_demand_type}. We thus get the following corollary, which may be of independent interest. 

\begin{corollary}\label{cor:gsc+_demand_query}
    For any constant $\Delta>0$, a demand query $\Delta$-substitutes (and thus, GSC+ and GSC, which are subclasses of $2$-substitutes) can be computed in polynomial time with value oracle access.
\end{corollary}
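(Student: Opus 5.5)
The corollary follows by applying \Cref{thm:demand_query}. The only work is to check that $\Delta$-substitutes is a succinct demand-type family for constant $\Delta$, and that GSC and GSC+ sit inside $2$-substitutes.

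\textbf{Succinctness.} By \Cref{prop:near-substitutes_demand_type}, $\Delta$-substitutes is an exact demand-type family. Its exact demand cover $\demandcover_n$ is the set of vectors in $\{-1,0,1\}^n$ with at most $\Delta$ positive and at most $\Delta$ negative coordinates. To enumerate it, choose a set $P\subseteq[n]$ of size at most $\Delta$ for the positive coordinates, then a disjoint set $N\subseteq[n]\setminus P$ of size at most $\Delta$ for the negative ones. This gives at most $\bigl(\sum_{k\le\Delta}\binom{n}{k}\bigr)^2 = O(n^{2\Delta})$ vectors, and they can be listed by nested loops in time $O(n^{2\Delta+1})$. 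For constant $\Delta$ this is polynomial, so \Cref{def:succinct} holds. \Cref{thm:demand_query} then yields a polynomial-time demand query algorithm using value queries only, for every $f$ in the class and every $p$.

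\textbf{Containment of GSC+ and GSC.} By \Cref{prop:gsc+}, GSC+ is an exact demand-type family whose cover consists of the $\{-1,0,1\}$-vectors with at most $2$ non-zero coordinates. Each such vector has at most two positive and at most two negative entries, so it lies in the $2$-substitutes cover of \Cref{prop:near-substitutes_demand_type}. Demand types are monotone in the cover: if $V(f)\subseteq\demandcover\subseteq\demandcover'$ then $f\in\demandtype(\demandcover')$. Hence GSC+ $\subseteq$ $2$-substitutes. \Cref{prop:gsc+} also states that GSC+ is the minimal containing demand-type family of GSC, so GSC $\subseteq$ GSC+, and the chain GSC $\subseteq$ GSC+ $\subseteq$ $2$-substitutes follows.

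There are two ways to conclude for GSC and GSC+. One is to run the algorithm of \Cref{thm:demand_query} for $\Delta=2$ directly on any function in these subclasses. The other is to note that GSC+ is itself succinct, since its cover has $O(n^2)$ vectors, and apply \Cref{thm:demand_query} to it. No step is a real obstacle here. The one point to check is that \Cref{thm:demand_query} only needs $f$ to lie in the demand type of the enumerated cover, not that the cover be minimal for $f$'s own class, which is exactly what allows a subclass to reuse the $2$-substitutes cover.
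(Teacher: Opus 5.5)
Your proposal is correct and follows the paper's route. The paper observes that the exact cover of $\Delta$-substitutes from \Cref{prop:near-substitutes_demand_type} is polynomial-size and easily computable for constant $\Delta$, so the family is succinct and \Cref{thm:demand_query} applies, with GSC and GSC+ handled through their containment in $2$-substitutes. Your explicit $O(n^{2\Delta})$ enumeration, and your remark that the algorithm only needs $f$ to lie in the demand type of the supplied cover (which is what lets GSC, not itself a demand-type family, inherit the result), spell out what the paper leaves implicit.
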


\Cref{thm:demand_query} and \Cref{thm:gooddemand_poly_crit} imply the following new result for computing an optimal contract.
\begin{corollary} \label{cor:opt_computation}
    Let $\mathcal{C}$ be a 
    succinct demand-type family contained in ASC.
    For any $f\in \mathcal{C}$, the optimal contract can be computed in polynomial time with value oracle access to $f$.
\end{corollary}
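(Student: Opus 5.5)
The corollary combines the two ingredients of \Cref{prop:eisner-severance}, so the plan is to verify each one for an arbitrary $f\in\mathcal{C}$.

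\textbf{Polynomially many critical values.} Since $\mathcal{C}\subseteq$ ASC, every $f\in\mathcal{C}$ is an \gooddemand success probability function. \Cref{thm:gooddemand_poly_crit} then bounds the number of critical values by $O(n^2)$ for every cost vector $c\in\reals_{\ge 0}^n$. This gives condition (1) directly, with no dependence on the particular cost vector.

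\textbf{Efficient demand queries.} Since $\mathcal{C}$ is a succinct demand-type family, \Cref{thm:demand_query} gives a polynomial-time algorithm that, for any price vector $p$ and with value-oracle access to $f$, returns a set in $\demandset{f}{p}$. For a given contract $\alpha\in(0,1]$, I will call it with $p=\alpha^{-1}c$. As noted in \Cref{sec:contracts_model}, $\demandset{f}{\alpha^{-1}c}=\demandset{\alpha f}{c}$ is exactly the set of agent best responses to $\alpha$. The edge case $\alpha=0$ is handled by the footnote there, or can simply be excluded, since $\alpha=0$ gives the principal utility $\le f(S_0)$, which the enumeration can compare separately.

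\textbf{Tie-breaking, the only delicate point.} \Cref{thm:demand_query} returns some demanded set, not necessarily the one favoring the principal. The Eisner--Severance procedure of \citet{DuettingFG24,VuongDPP24} needs only a demand oracle: it intersects the lines $\alpha f(S)-c(S)$ of the returned sets to locate the breakpoints of the upper envelope. Any optimal contract is a critical value, where the principal-favoring tie-break picks the set of largest $f$ among those demanded. I will recover that set by evaluating the agent's best response just to the right of each critical point, i.e.\ at $\alpha+\epsilon$ for $\epsilon$ smaller than the gap to the next critical value. Since $f(S_\alpha)$ is nondecreasing in $\alpha$, this set is the principal-favoring tie-break at $\alpha$. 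Alternatively, I can compare the two envelope sets adjacent to each breakpoint using value queries.

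\textbf{Conclusion.} With both conditions of \Cref{prop:eisner-severance} satisfied, the optimal contract is computed with polynomially many value queries and polynomial overhead. The total cost is $O(n^2)$ critical values, times polynomially many demand queries, each costing polynomially many value queries. I expect no real obstacle beyond the tie-breaking bookkeeping above.
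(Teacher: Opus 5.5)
Your proposal is correct and is essentially the paper's argument: the corollary follows by feeding \Cref{thm:gooddemand_poly_crit} (condition (1)) and \Cref{thm:demand_query} at prices $\alpha^{-1}c$ (condition (2)) into \Cref{prop:eisner-severance}. Your extra tie-breaking bookkeeping is sound and is left implicit in the paper; strictly, it also needs continuity of the upper envelope, which makes the set demanded just to the right of a breakpoint demanded at the breakpoint itself, not only monotonicity of $f(S_\alpha)$.
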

In particular, \Cref{cor:opt_computation} implies that we can efficiently compute the optimal contract for reward functions in GSC+, and more generally, the intersection of $\Delta$-substitutes and ASC.

\paragraph{Algorithm overview.}
The idea of our algorithm (\Cref{alg:demand_query}) is to trace a path in price-space, starting from a point where $\emptyset$ is demanded, and ending at $p$, updating the ``current'' demand bundle as we move along. More concretely, the path we trace is comprised of $n$ steps which are parallel to an axis. In the first step, we change the first coordinate to $p_1$, in the second, we change the second coordinate to $p_2$, and so on, see \Cref{fig:demand_query_path} for an example of such a path. As we show, moving in parallel to an axis ensures that our new demand bundle is an ``adjacent bundle'' of our old demand bundle, in the sense that it belongs to an adjacent UDR. This key property allows us to easily update our demand bundle as we get closer and closer to $p$,  in each step of the path.

\begin{figure}
\begin{center}
\begin{tikzpicture}[scale=0.85]
    \draw[thick, ->] (-0.1, 0) -- (4, 0) node[above] {\footnotesize $p_1$};
    \draw[thick, ->] (0, -0.1) -- (0, 4) node[right] {\footnotesize $p_2$};
    \draw[blue,thick] (2, 0) -- (2, 1);
    \draw[red,thick] (0, 1) -- (2, 1);
    \draw[orange,thick] (2, 1) -- (3, 2);
    \draw[green,thick] (3, 2) -- (3, 4);
    \draw[yellow,thick] (3, 2) -- (4, 2);
    
    \node at (1, 0.5) {$\{1,2\}$};
    \node at (3.5, 2.5) {$\emptyset$};
    \node at (3, 1) {$\{1\}$};
    \node at (2, 2) {$\{2\}$};

    \draw (0,1) -- (0,1) node [left] {$1$};
    \draw (0,2) -- (0,2) node [left] {$2$};
    \draw (2,0) -- (2,0) node [below] {$2$};
    \draw (3,0) -- (3,0) node [below] {$3$};
    \fill (3.5, 3) circle (2pt) node[above] {\scriptsize $q^0$};
    \draw[dashed] (3.5, 3) -- (0.5, 3);
    \draw[dashed] (0.5, 3) -- (0.5, 0.8);
    \fill (0.5, 0.8) circle (2pt) node[left] {\scriptsize $q$};
\end{tikzpicture}
\end{center}
\caption{A two-item example of the price-space path traced by our demand query algorithm, displayed over the LIP. To compute a demand for price vector $q$, our path starts at $q^0$, which lies in the UDR of $\emptyset$, ends at $q$, and is made of $n=2$ segments, each parallel to an axis.
}\label{fig:demand_query_path}
\end{figure}
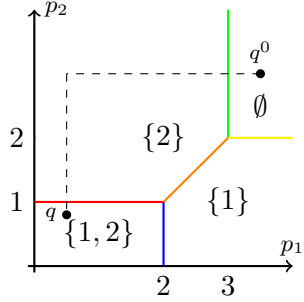

\begin{algorithm}[t]
\caption{Demand query for succinct demand types}\label{alg:demand_query}
\KwIn{A demand cover $\demandcover_n$, value oracle to a function $f\in \demandtype(\demandcover_n)$, and a price vector $p\in \reals^n$.}
\KwOut{A set $S\in \demandset{f}{p}$.}
\tcp{For the definition of $\neighbors{\demandcover_n}{S}$ see \Cref{def:demand_query_notation}}
$M \gets \max_{S\in \neighbors{\demandcover_n}{\emptyset}~\text{s.t.}~ S\ne \emptyset} f(S)$\;
Let $p^0\in \reals^n$ defined by $p^0 \gets (M, M, \dots, M)$\;
Let $S_0 = \emptyset$\;
\For{$i \gets 1$ \KwTo $n$}{
    Let $p^i\in\reals^n$ be defined by $p^i_j \gets \begin{cases}
        p_j & j\le i \\
        M &\text{else}\\
    \end{cases}$\;
    Let $S_i \in \argmax_{T\in \neighbors{\demandcover_n}{S_{i-1}}} f(T) - \sum_{j\in T} p^i_j$\;
}
\Return $S_n$\;
\end{algorithm}

We start with useful notations and key observations. 
\paragraph{Adjacent bundles. } A key component of our algorithm is adjacent bundles; these are the candidates for the updated demanded bundle at each step.
\begin{definition}[Adjacent Bundles]\label{def:demand_query_notation}
The set of adjacent bundles of a bundle $S\subseteq [n]$ with respect to a demand-cover $\demandcover_n$ is defined as 
\[
\neighbors{\demandcover_n}{S}=\{T\subseteq [n]\mid \chi_T-\chi_S \in \demandcover_n \} \cup \{S\}.
\]
This is the set of bundles with UDRs which may be adjacent to $S$ according to $\demandcover_n$, and for simplicity of presentation we include $S$ itself.
\end{definition}
Observe that $\neighbors{\demandcover_n}{S}$ is easily computable given $\demandcover_n$ and $S$. The importance of the adjacent bundles is from the following two properties.
Firstly, as was shown in \citet{baldwinthesis}, a bundle is demanded if and only if it is a local maximum of the buyer's utility, among its adjacent bundles.
\begin{lemma}[\citet{baldwinthesis} Theorem~4.5]\label{lem:local_max_global_demand}
    Let $\demandcover_n$ be a demand cover, $f\in \demandtype(\demandcover_n)$ and $p\in \reals^n$ be a price vector. For any set $S\subseteq [n]$, it holds that $S\in \demandset{f}{p}$ if and only if $u_B(p, S) \ge u_B(p, S')$ for all adjacent bundles $S'\in \neighbors{\demandcover_n}{S}$.
\end{lemma}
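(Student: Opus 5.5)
The forward direction is immediate: if $S\in\demandset{f}{p}$, then $S$ maximizes $u_B(p,\cdot)$ over all bundles, in particular over $\neighbors{\demandcover_n}{S}$. The plan for the converse is to argue by contradiction along a path in price space. The path starts at prices where $S$ is the unique demand and ends at $p$. We locate the first facet of the LIP at which $S$ stops being demanded and read off an adjacent bundle that beats $S$ already at $p$.

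Suppose $u_B(p,S)\ge u_B(p,S')$ for all $S'\in\neighbors{\demandcover_n}{S}$ but $S\notin\demandset{f}{p}$. Fix a direction $d\in\reals^n$ with $d_j<0$ for $j\in S$ and $d_j>0$ for $j\notin S$, and consider $q(\lambda)=p+\lambda d$ for $\lambda\ge 0$. For any $T\neq S$,
\[
u_B(q(\lambda),S)-u_B(q(\lambda),T) = \bigl[u_B(p,S)-u_B(p,T)\bigr] + \lambda\Bigl(\sum_{j\in T\setminus S} d_j - \sum_{j\in S\setminus T} d_j\Bigr).
\]
The coefficient of $\lambda$ is strictly positive. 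So each such difference is affine and strictly increasing in $\lambda$, and $S$ is the unique demand for all large $\lambda$. Let $\lambda^*=\inf\{\lambda\ge0 : S\in\demandset{f}{q(\lambda)}\}$. Then $\lambda^*>0$, since $S$ is not demanded at $\lambda=0$ and the demanded set is closed. For $\lambda>\lambda^*$ all the differences are strictly positive, so $S$ is the unique demand there. At $\lambda^*$ some difference vanishes, and just below $\lambda^*$ some bundle $T'$ is strictly better than $S$.

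The key step, and the main obstacle, is to ensure that $T'$ is adjacent to $S$. For this we need the path to cross the LIP at $q(\lambda^*)$ through the relative interior of a single facet separating the UDR of $S$ from the UDR of $T'$. I would obtain this by a genericity argument in the spirit of \Cref{lemma:facet_piercing_measure_0}. The sign conditions on $d$ define an open cone, so $d$ can be chosen freely in an open set. The set of $d$ for which the line $p+\lambda d$ meets a face of the LIP of dimension at most $n-2$ is a finite union of lower-dimensional sets, hence of measure zero (apart from the base point $p$ itself, which corresponds to $\lambda=0$ and is irrelevant since $\lambda^*>0$). For a generic $d$, then, $q(\lambda^*)$ lies in the relative interior of a facet whose two sides are the UDRs of $S$ (for $\lambda$ slightly above $\lambda^*$) and of some $T'$ (for $\lambda$ slightly below). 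By \Cref{obs:change_in_demand_normal}, $\chi_{T'}-\chi_S\in V(f)\subseteq\demandcover_n$, so $T'\in\neighbors{\demandcover_n}{S}$.

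Finally, $h(\lambda)=u_B(q(\lambda),S)-u_B(q(\lambda),T')$ is affine and strictly increasing, and $h(\lambda^*)=0$ with $\lambda^*>0$. Hence $h(0)<0$, i.e.\ $u_B(p,T')>u_B(p,S)$ with $T'$ adjacent to $S$. This contradicts the local maximality of $S$, so $S\in\demandset{f}{p}$.
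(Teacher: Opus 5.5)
Your proof is correct. The paper gives no argument of its own here: it imports the statement as Theorem~4.5 of \citet{baldwinthesis}, so your proposal is a self-contained alternative rather than a reconstruction of the paper's proof.

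Your key move is specific to the domain $2^{[n]}$. Every $\chi_S$ is a vertex of the cube, and prices may be negative. So a direction $d$ that lowers prices on $S$ and raises them off $S$ makes every difference $u_B(\cdot,S)-u_B(\cdot,T)$ strictly increasing along the ray. Hence the first tie point $q(\lambda^*)$ is well defined with $\lambda^*>0$, and any bundle tied with $S$ there already beats $S$ at $p$. Genericity then reduces to a two-bundle tie, where \Cref{lem:normal_vec_price_vec} (or \Cref{obs:change_in_demand_normal}) gives adjacency. This is the same toolkit the paper uses for \Cref{lem:axis_parallel_demand_sets} and \Cref{lemma:facet_piercing_measure_0}.

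You can make the genericity step fully explicit without appealing to the face structure of the LIP. A price at which three distinct bundles $S_1,S_2,S_3$ are demanded lies in the affine set $\{q: u_B(q,S_1)=u_B(q,S_2)=u_B(q,S_3)\}$. This set has codimension $2$, because no three distinct $0/1$ vectors are collinear, so $\chi_{S_1}-\chi_{S_2}$ and $\chi_{S_1}-\chi_{S_3}$ are independent. The directions whose ray meets such a set at some $\lambda>0$ lie in a linear subspace of dimension at most $n-1$. There are finitely many such sets, so a generic $d$ in your open cone yields $\demandset{f}{q(\lambda^*)}=\{S,T'\}$ exactly.

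What your route buys is a short, elementary proof inside the paper's own framework. What it gives up is portability: it would not transfer verbatim to the general domains of \citet{demand_types}, where a bundle may never be demanded at any price. There the natural argument uses the duality between price-complex and demand-complex cells. At $q(\lambda^*)$, any edge at $\chi_S$ of the cell $\mathrm{conv}\{\chi_T : T\in\demandset{f}{q(\lambda^*)}\}$ is dual to a facet, hence parallel to a vector of $V(f)$. That argument dispenses with genericity, at the cost of heavier polyhedral machinery.
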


Secondly, when we only change one coordinate of a price vector, an adjacent demand bundle to a previously demanded bundle is demanded at updated prices.

\begin{lemma}\label{lem:axis_parallel_demand_sets}
    Let $\demandcover_n$ be a demand cover. A set function $f$ belongs to $\demandtype(\demandcover_n)$ if and only if for any two vectors $p,q\in \reals^n$ which differ by only one coordinate, and for any $S\in \demandset{f}{p}$, there exists a set $T\in \demandset{f}{q}$ such that $T\in \neighbors{\demandcover_n}{S}$.
\end{lemma}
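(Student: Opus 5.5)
We prove the two directions separately. In both, the key tool is that along an axis-parallel segment the LIP can only be crossed at isolated points. The one exception is when the segment lies inside a facet, and that case is handled by perturbation.

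\emph{Forward direction.} Suppose $f\in\demandtype(\demandcover_n)$, and let $p,q$ differ only in coordinate $i$, with $S\in\demandset{f}{p}$. We move along the segment from $p$ to $q$ and track demanded bundles.

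First we perturb. Choose a small vector $\varepsilon$ in the coordinates other than $i$ so that the segment $[p+\varepsilon,q+\varepsilon]$ meets the LIP only in the relative interiors of facets, at finitely many points. This is possible because the non-generic translates form a lower-dimensional set, exactly as in \Cref{lemma:facet_piercing_measure_0}. Along the perturbed segment, each crossing changes demand by a facet normal. By \Cref{obs:change_in_demand_normal} these normals lie in $V(f)\subseteq\demandcover_n$.

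A consecutive chain of such changes could in principle leave $\neighbors{\demandcover_n}{S}$, so we need more. Because the segment is parallel to $\standard{i}$, the utility of each bundle $T$ along it is $f(T)-\sum_{j\in T}p_j$, which is affine in $p_i$. Its slope is $-1$ if $i\in T$ and $0$ otherwise. Hence, moving monotonically in $p_i$, once the demanded bundle flips the status of $i$ it can never flip back. Moreover, between any two bundles of equal $i$-status the utility difference is constant, so no crossing can occur between them. So along the segment there is at most one genuine facet crossing, and the bundle after it is adjacent to the bundle before it.

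We then pass to the limit $\varepsilon\to0$. Demanded sets are upper semicontinuous, and $\neighbors{\demandcover_n}{\cdot}$ is a finite set, so the adjacency relation survives the limit. The remaining concern is that $S$ may not be the bundle demanded just beside $p$ on the perturbed path. To handle this, we choose the perturbation direction so that $S$ is the unique demand at $p+\varepsilon$. This is possible because $S$ is demanded on a full-dimensional closed region adjacent to $p$, namely the closure of its UDR.

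\emph{Reverse direction.} Suppose the adjacency property holds, and let $F$ be a facet of $\lip_f$ separating UDRs of bundles $S$ and $T$. We need to show $\chi_S-\chi_T\in\demandcover_n$.

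Pick a point $r$ in the relative interior of $F$. Since $\chi_S-\chi_T\neq0$, the normal is not orthogonal to every axis, so some coordinate $i$ has $(\chi_S-\chi_T)_i\neq0$. Moving from $r$ slightly in direction $\pm\standard{i}$ therefore enters the UDR of $S$ on one side and that of $T$ on the other.

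Take $p$ on the $S$ side and $q$ on the $T$ side. These differ only in coordinate $i$, and demand is unique at each: $\demandset{f}{p}=\{S\}$ and $\demandset{f}{q}=\{T\}$. The hypothesis then gives $T\in\neighbors{\demandcover_n}{S}$, so $\chi_T-\chi_S\in\demandcover_n$. Since $\demandcover_n$ is symmetric, $\chi_S-\chi_T\in\demandcover_n$ as well. Thus $V(f)\subseteq\demandcover_n$.

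\emph{Main obstacle.} The hard part is the forward direction, specifically guaranteeing a single adjacency step rather than a chain. The at-most-one-flip argument is what we expect to resolve this. The monotonicity in $p_i$ is the fact that has to be checked carefully: bundles sharing the same $i$-membership keep a constant utility gap along the segment, so they never swap.
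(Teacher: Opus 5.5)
Your proof is correct once one slip is repaired. In the forward direction you take $\varepsilon$ supported on the coordinates other than $i$, and later require $S$ to be the unique demand at $p+\varepsilon$. These two requirements can be incompatible. If $\varepsilon_i=0$, the utility gap between $S$ and $S\triangle\{i\}$ does not move. So whenever both lie in $\demandset{f}{p}$ (e.g.\ $p$ lies on a facet with normal $\standard{i}$), no such $\varepsilon$ exists.

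The restriction is unnecessary, though. For any $\varepsilon$, the points $p+\varepsilon$ and $q+\varepsilon$ still differ only in coordinate $i$. Genericity of the line $p+\varepsilon+\reals\standard{i}$ depends only on the component of $\varepsilon$ orthogonal to $\standard{i}$, so the bad $\varepsilon$'s form a null set. You can therefore pick $\varepsilon$ from the positive-measure set of small vectors with $p+\varepsilon$ in the UDR of $S$. This is exactly the paper's choice of $u\in\mathcal{R}$.

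With that fix, your route differs from the paper's mainly in where the geometric content comes from. The paper proves neither direction from scratch:
\begin{itemize}
\item The reverse direction is immediate there, because your hypothesis is stronger than the UDR-to-UDR condition of \Cref{thm:baldwin_singletons}.
\item The forward direction uses the same device as you: perturb into the UDR of $S$ and use upper semicontinuity. It then simply invokes the ``only if'' part of \Cref{thm:baldwin_singletons} on the pair $u$, $u+(q-p)$.
\end{itemize}
You instead prove that UDR-to-UDR statement yourself. For the reverse direction, you step off a relative-interior point of a facet along a coordinate axis not orthogonal to its normal. For the forward direction, you note that along an axis-parallel line all bundles containing $i$ shift by the same amount while the others stay fixed, so demand switches at most once. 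This makes your argument self-contained and shows why axis-parallel moves are special, the same fact that drives \Cref{alg:demand_query}. The paper's version is shorter but relies on an external theorem.

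You can also trim your argument in two places:
\begin{itemize}
\item Once both perturbed endpoints are off the LIP, with demands $\{S\}$ and $\{T\}$, the one-switch structure already forces the demand set at the switch point to be exactly $\{S,T\}$. So \Cref{lem:normal_vec_price_vec} gives $\chi_S-\chi_T\in V(f)$ without any genericity assumption on the rest of the segment.
\item No limit $\varepsilon\to 0$ is needed, since a single small $\varepsilon$ already gives $\demandset{f}{q+\varepsilon}\subseteq\demandset{f}{q}$.
\end{itemize}
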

This lemma is a generalization of a result \citet{baldwinworkingpaper} which shows it for the case where both price vectors lie in a UDR.
\begin{theorem}[Theorem~1 of \citep{baldwinworkingpaper}]\label{thm:baldwin_singletons}
    Let $\demandcover_n$ be a demand cover. A set function $f$ belongs to $\demandtype(\demandcover_n)$ if and only if for any two vectors $p,q\in \reals^n$ which differ by only one coordinate and such that $\demandset{f}{p}=\{S\}$ and $\demandset{f}{p}=\{T\}$ it holds that $T\in \neighbors{\demandcover_n}{S}$.
\end{theorem}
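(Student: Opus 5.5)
The approach is to study the agent's utility along the axis-parallel segment joining $p$ and $q$, using the fact that bundles are $0/1$ vectors. (In the second condition of the statement, read $\demandset{f}{q}=\{T\}$.)

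\textbf{The "only if" direction.} Let $f\in\demandtype(\demandcover_n)$. Write $q=p+\tau\standard{i}$, and assume without loss of generality that $\tau>0$; otherwise swap the roles of $p$ and $q$, which is harmless because $\demandcover_n$ is symmetric. If $S=T$ there is nothing to prove, since $S\in\neighbors{\demandcover_n}{S}$.

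Along $r(t)=p+t\standard{i}$, the utility $u_B(r(t),X)$ is affine in $t$ with slope $-1$ if $i\in X$ and slope $0$ otherwise. So the upper envelope is the maximum of two functions: $A-t$, coming from the best bundle containing $i$, and $B$, coming from the best bundle not containing $i$. The relative order of bundles inside each class does not depend on $t$.

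Since $S\neq T$, the unique optimum switches from a bundle containing $i$ to one not containing $i$. Because utility within each class only shifts by a constant in $t$, the uniqueness of $S$ at $p$ and of $T$ at $q$ gives two facts. First, $S$ is the unique maximizer among bundles containing $i$. Second, $T$ is the unique maximizer among bundles not containing $i$. Both hold for every $t$.

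Hence there is a single crossing point $t^*\in(0,\tau)$, and at $r^*=r(t^*)$ we have exactly $\demandset{f}{r^*}=\{S,T\}$.

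\textbf{Step 2: $r^*$ lies in the relative interior of a facet.} By continuity, every price vector near $r^*$ demands only bundles from $\{S,T\}$. The set $\{r : u_B(r,S)=u_B(r,T)\}$ is a hyperplane with normal $\chi_S-\chi_T$; on one side $S$ is the unique demand and on the other side $T$ is. Therefore, near $r^*$, the LIP coincides with this hyperplane, so it is an $(n-1)$-dimensional piece of $\lip_f$ separating the UDRs of $S$ and $T$.

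By \Cref{obs:change_in_demand_normal}, this gives $\chi_S-\chi_T\in V(f)\subseteq\demandcover_n$. Since $\demandcover_n$ is closed under negation, also $\chi_T-\chi_S\in\demandcover_n$, i.e.\ $T\in\neighbors{\demandcover_n}{S}$.

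\textbf{The "if" direction (contrapositive).} Suppose there is $v\in V(f)\setminus\demandcover_n$, and pick $r$ in the relative interior of a facet with normal $v$. Generically $\demandset{f}{r}=\{S,T\}$ with $\chi_S-\chi_T=\pm v$. Choose a coordinate $i$ with $v_i\neq 0$, say $i\in S\setminus T$.

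Then for small $\eps>0$ the price vectors $r-\eps\standard{i}$ and $r+\eps\standard{i}$ have unique demands $S$ and $T$ respectively. Only $S$ and $T$ compete near $r$, and this perturbation strictly separates them. These two vectors differ in a single coordinate, yet $\chi_T-\chi_S\notin\demandcover_n$, so the condition fails.

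\textbf{Main obstacle.} The delicate step is Step 2: ruling out that the segment passes through a lower-dimensional stratum of the LIP where many bundles tie. Here the binary structure does the work: $S$ is the unique best bundle containing $i$ and $T$ the unique best bundle not containing $i$, which forces $\demandset{f}{r^*}$ to have exactly two elements.
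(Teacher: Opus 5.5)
Your proof is correct. The paper gives no proof of this statement: it imports it as Theorem~1 of \citep{baldwinworkingpaper}, stated there in a more general framework. It uses the result only as a black box, together with a perturbation argument, to derive \Cref{lem:axis_parallel_demand_sets}. So your argument is a self-contained alternative to the citation, specific to the $2^{[n]}$ setting.

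Its engine is a single-crossing observation. Along $p+t\standard{i}$ the bundles split into those containing $i$ (slope $-1$) and those not containing $i$ (slope $0$), and the order within each class is fixed. Hence a unique demand can change at most once, from the best bundle containing $i$ to the best one not containing $i$. This produces a price $r^*$ with $\demandset{f}{r^*}=\{S,T\}$ exactly.

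From there you do not need to reprove anything:
\begin{itemize}
\item Your Step~2 is exactly one direction of \Cref{lem:normal_vec_price_vec}. The paper derives that lemma from the duality result of \citet{demand_types}, independently of the present theorem, so citing it creates no circularity.
\item The other direction of the same lemma removes the hedge ``generically'' in your converse. It gives a price $r$ with $\demandset{f}{r}=\{S,T\}$ and $\chi_S-\chi_T=v$ exactly, after which your $\pm\eps\standard{i}$ perturbation works as written.
\end{itemize}

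Make two small points explicit.
\begin{itemize}
\item The claim that the optimum must switch from a bundle containing $i$ to one not containing $i$ needs its one-line reason. If $i\notin S$, then $S$ stays uniquely demanded for all $t\ge 0$. If $i\in T$, then $T$ is uniquely demanded for all $t\le\tau$. Either case forces $S=T$.
\item Both the swap of $p$ and $q$ when $\tau<0$ and the conclusion $\chi_T-\chi_S\notin\demandcover_n$ in the converse use closure of $\demandcover_n$ under negation.
\end{itemize}

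Your argument depends on uniqueness of demand at $p$ and $q$. With ties, the crossing point can have more than two demanded bundles. That is why the paper still needs its perturbation step to pass from this theorem to \Cref{lem:axis_parallel_demand_sets}.
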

We use \Cref{thm:baldwin_singletons} to prove \Cref{lem:axis_parallel_demand_sets}.
\begin{proof} [Proof of \Cref{lem:axis_parallel_demand_sets}]
Let $\demandcover_n$ be a demand cover, and $f:2^{[n]}\rightarrow \reals_{\ge 0}$ be a set function. We will show a logical implication on both directions, thus proving the if and only if. One direction is simple -  if $f$ satisfies the adjacent bundle condition of \Cref{lem:axis_parallel_demand_sets} then it satisfies the weaker condition of \Cref{thm:baldwin_singletons}, and thus $f\in \demandtype(\demandcover_n)$.

For the other direction, assume that $f\in \demandtype(\demandcover_n)$, let $p,q\in \reals^n$ be two price vectors which differ by only one coordinate, and let $S\in \demandset{f}{p}$. By continuity of the buyer's utility, there exists $\varepsilon > 0$ such that any $u\in \reals^n$ which satisfies $\lVert u \rVert \le \varepsilon$ also satisfies $\demandset{f}{q+u} \subseteq \demandset{f}{q}$ and $\demandset{f}{p+u}\subseteq \demandset{f}{p}$. Consider the intersection of the UDR of $S$ and the $\varepsilon$-radius ball around $p$, denote by $\mathcal{R}$. Observe that starting at $p$, decreasing the prices of elements in $S$ and increasing the prices of elements not in $S$ leads to a point in the UDR of $S$, thus the region $\mathcal{R}$ has a positive measure. Moreover, for any vector $u\in \mathcal{R}$ it holds that $\demandset{f}{u}=\{S\}$ and $\demandset{f}{u+(q-p)} \subseteq \demandset{f}{q}$. Since the LIP is of measure $0$ (it is the finite union of $n-1$-dimensional pieces), there exists some $u\in \mathcal{R}$ such that $u+(q-p)\notin \lip_f$, and thus $\demandset{f}{u+(q-p)}=\{T\}$ for some $T\subseteq [n]$. We can now invoke \Cref{thm:baldwin_singletons} on $u$ and $u+(q-p)$, to see that $T\in \neighbors{\demandcover_n}{S}$. Finally, it also holds that $T\in \demandset{f}{u+(q-p)}\subseteq \demandset{f}{q}$, as needed. 
\end{proof}

\paragraph{Putting it all together.} The correctness of \Cref{alg:demand_query} is now evident from \Cref{lem:local_max_global_demand} and \Cref{lem:axis_parallel_demand_sets}.
\begin{theorem}\label{thm:demand_query_alg}
    Let $\demandcover_n$ be a demand cover, $f\in \demandtype(\demandcover_n)$, and $p\in \reals^n$. Given $\demandcover_n$, $p$, and value oracle access to $f$, \Cref{alg:demand_query} runs in polynomial time (in $|\demandcover_n|$ and $n$) and outputs a set $S\in \demandset{f}{p}$. 
\end{theorem}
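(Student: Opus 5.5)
The proof is an induction along the axis-parallel path traced by \Cref{alg:demand_query}. The invariant is that $S_i\in\demandset{f}{p^i}$ for every $i=0,1,\dots,n$. Running time and correctness are handled separately.

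\textbf{Running time.} Computing $M$ and each argmax step requires evaluating $f$ on at most $|\neighbors{\demandcover_n}{S}|\le|\demandcover_n|+1$ bundles. There are $n+1$ such steps, so the algorithm uses $O(n\,|\demandcover_n|)$ value queries and polynomial arithmetic.

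\textbf{Base case.} We need $\emptyset\in\demandset{f}{p^0}$. By \Cref{lem:local_max_global_demand}, it suffices that $u_B(p^0,\emptyset)=f(\emptyset)\ge f(S)-M|S|$ for every adjacent $S\in\neighbors{\demandcover_n}{\emptyset}$.

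\begin{itemize}
\item Take any adjacent $S\neq\emptyset$. Then $|S|\ge1$ and, by the choice of $M$, $f(S)\le M$. Hence $f(S)-M|S|\le 0$.
\item Since $f\ge0$, we have $f(\emptyset)\ge 0$, which gives the inequality.
\end{itemize}

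This is where nonnegativity of $f$ enters. If one wants to be careful about $f(\emptyset)>0$ or degenerate cases, one may replace $M$ by $\max(M,0)$ plus a positive slack. The local-to-global lemma then still applies, since only adjacent bundles need to be checked.

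\textbf{Inductive step.} Assume $S_{i-1}\in\demandset{f}{p^{i-1}}$. The vectors $p^{i-1}$ and $p^i$ differ only in coordinate $i$.

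\begin{itemize}
\item By \Cref{lem:axis_parallel_demand_sets}, there exists $T^*\in\demandset{f}{p^i}\cap\neighbors{\demandcover_n}{S_{i-1}}$.
\item The algorithm picks $S_i$ maximizing $u_B(p^i,\cdot)$ over $\neighbors{\demandcover_n}{S_{i-1}}$. This set contains $T^*$, so $u_B(p^i,S_i)\ge u_B(p^i,T^*)=\max_T u_B(p^i,T)$.
\item Therefore $S_i\in\demandset{f}{p^i}$.
\end{itemize}

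Since $p^n=p$, the output $S_n$ is demanded at $p$. Note that the inductive step does not need \Cref{lem:local_max_global_demand}: the existence of a globally optimal adjacent bundle already makes the local argmax globally optimal.

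\textbf{Main obstacle.} The base case is the only delicate point, namely certifying that $\emptyset$ is demanded at $p^0$. That is why \Cref{lem:local_max_global_demand} is invoked, so that only the polynomially many adjacent bundles need to be checked rather than all of $2^{[n]}$. The rest is a direct application of \Cref{lem:axis_parallel_demand_sets}.
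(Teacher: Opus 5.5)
Your proof is correct and essentially matches the paper's: induction on the invariant $S_i\in\demandset{f}{p^i}$, with the base case certified by the choice of $M$ together with \Cref{lem:local_max_global_demand}, and each axis-parallel step handled by \Cref{lem:axis_parallel_demand_sets}. Your base case uses $f(\emptyset)\ge 0$ instead of assuming $f(\emptyset)=0$, which is a slight tightening, and the proposed change to $M$ is unnecessary because $f\ge 0$ already forces $M\ge 0$.
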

\begin{proof}
    It is easy to see that \Cref{alg:demand_query} runs in polynomial time in $n$ and $|\demandcover_n|$, since $|\neighbors{\demandcover_n}{S}|\le|\demandcover_n|+1$ for all $S\subseteq [n]$.
    To see that \Cref{alg:demand_query} outputs a set in $\demandset{f}{p}$, we inductively show that for all $i\in \{0,\dots, n\}$, $S_i \in \demandset{f}{p^i}$.
    The induction base is by our choice of $M$, which is such that $u_B(p^0, S) = f(S) - M\cdot |S| \le 0 = u_B(p^0, \emptyset)$ for all $S\in \neighbors{\demandcover_n}{\emptyset}$, and by \Cref{lem:local_max_global_demand}.
    The induction step is by \Cref{lem:axis_parallel_demand_sets}.
\end{proof}

\bibliographystyle{abbrvnat}
\bibliography{bib.bib}

\appendix
\section{Disclosing AI and LLM Usage}

We disclose the use of Artificial Intelligence (AI) and Large Language Models (LLMs) in the preparation of this manuscript as follows:
\begin{itemize}
    \item[(1)] Gemini 3 and ChatGPT 5.2 were used to write an initial draft of the introduction based on bullets that we provided. No BibTeX files or citations were done with generative AI.
    \item[(2)] ChatGPT 5.2 was used to refine details of all figures in the paper.
    \item[(3)] We used Refine.ink and DeepThink to identify some typographical and calculation errors. While these corrections helped refine our calculations and wording of statements and proofs, they did not alter their structure or conclusions.
\end{itemize}

\section{Separating Examples for Classes of Functions}\label{sec:venn_diagram_examples}
In this appendix we present examples that show all regions of the Venn diagram presented in \Cref{fig:asc_classes_venn} are non-empty. 
As for the containment relations in \Cref{fig:asc_classes_venn}, it is known that GS is contained in ultra and GSC \citep{ultra_valuations,sun_and_yang}. All functions are contained in ASC by our demand types analysis in \Cref{sec:demand_types}.

\begin{lemma} [ASC $\setminus ($ Ultra $\cup$ Supermodular $\cup$ GSC+$)$] \label{lem:asc_greater}
    There exists a set function $f:2^{[3]}\rightarrow \reals_{\ge 0}$ which is ASC but is not ultra, supermodular, or GSC+.
\end{lemma}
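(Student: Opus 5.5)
The plan is to exhibit one explicit function on three items. We then certify the four required properties using the definitions of \Cref{sec:model} together with the demand-cover characterizations of \Cref{sec:demand-covers}. Failure of ultra and of supermodularity is checked directly on values. Failure of GSC+ and membership in ASC are read off $V(f)$, via \Cref{prop:gsc+} and \Cref{prop:gooddemand_type}.

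\textbf{Choice of $f$.} The function needs three features at once.
\begin{enumerate}
\item A pair of items that are substitutes somewhere, so that $f$ is not supermodular, i.e.\ not in the gross-complements cover of \Cref{prop:gross_covers}.
\item A facet with normal $\pm(1,1,1)$, which has three nonzero coordinates, so that $f$ is not in the GSC+ cover.
\item Asymmetric pair values, so that the ultra exchange inequality fails.
\end{enumerate}
A first candidate is $f(\emptyset)=0$, $f(\{i\})=2$ for all $i$, $f(\{1,2\})=4$, $f(\{1,3\})=f(\{2,3\})=2$, and $f([3])$ large, say $6$ or slightly more.

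\textbf{Direct checks.} Supermodularity fails because $f(3\mid\emptyset)=2>0=f(3\mid\{1\})$. Ultra fails with $S=\{3\}$, $T=\{1,2\}$, $x=3$: for either $y\in\{1,2\}$,
\[f(S)+f(T)=6>4=f(\{y\})+f(\{3,3-y\}),\]
where $3-y$ denotes the other element of $\{1,2\}$. These are short verifications.

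\textbf{Demand-type checks.} Here we compute the LIP, equivalently all pairs of bundles $S,T$ that are simultaneously and uniquely-adjacently demanded on a full $2$-dimensional region. We check that every normal $\chi_S-\chi_T$ is either sign-uniform or of the form $\pm(\standard{a}-\standard{b})$, which places $f$ in ASC by \Cref{prop:gooddemand_type}. We also check that some facet separates $\emptyset$ from $[3]$, which by \Cref{prop:gsc+} excludes GSC+.

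\textbf{Main obstacle.} The hard step is this facet enumeration, together with nondegeneracy. With the symmetric values above, prices near $(2,2,2)$ make many bundles tie at once. The $\emptyset$/$[3]$ indifference may then occur only at a lower-dimensional point rather than on a genuine facet. To fix this we will tune $f([3])$ and the singleton values so that there is an open price region, e.g.\ high equal prices just above the singleton values, where only $\emptyset$ and $[3]$ are optimal. This is a small system of strict inequalities on the values; the intent is that it holds with $f([3])$ raised slightly above $6$ and the prices near $(2.1,2.1,2.1)$, though the exact values still need to be confirmed. We must then re-verify that no facet has a forbidden normal such as $(1,1,-1)$. Concretely, no price region may have $\{1,2\}$ and $\{3\}$ as the only demanded bundles, so the $\{1,2\}$ versus $\{3\}$ transition has to pass through $\emptyset$ or $[3]$. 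We will check this by comparing utilities on each candidate boundary hyperplane. If a forbidden normal survives, we will adjust the pair values, keeping $f(\{1,2\})>f(\{1,3\})$ so that ultra still fails, and repeat the enumeration.
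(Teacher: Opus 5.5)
Your route is the paper's: an explicit function on three items, direct value checks for ultra and supermodularity, and the ASC and GSC+ checks reduced to two-element demand sets. Your ultra and supermodularity checks are correct, and they do not involve $f([3])$.

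Your diagnosis of $f([3])=6$ is also right. A tie between $\emptyset$ and $[3]$ forces $p_1+p_2+p_3=6$, so some $p_i\le 2$ and $\{i\}$ is weakly demanded too. Hence no price has demand set exactly $\{\emptyset,[3]\}$.

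The gap is that the proposal stops at the decisive step. ASC membership and the $\{\emptyset,[3]\}$ price are left ``to be confirmed,'' with a fallback of re-tuning values and re-enumerating facets. As written, nothing certifies that $f$ is ASC or that $f$ is not GSC+. You also name only the split $\{1,2\}$ versus $\{3\}$. The splits $\{1\}$ versus $\{2,3\}$ and $\{2\}$ versus $\{1,3\}$ give forbidden normals as well.

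No facet enumeration is needed to close this.
\begin{itemize}
\item For $n=3$, the only vectors outside the cover of \Cref{prop:gooddemand_type} are $\pm(\chi_{\{i\}}-\chi_{[3]\setminus\{i\}})$. By \Cref{lem:normal_vec_price_vec}, you only need to exclude $D_f(p)=\{\{i\},[3]\setminus\{i\}\}$.
\item Take $f([3])=7$ (any value $>6$ works). Then $f([3])>f(\{i\})+f([3]\setminus\{i\})$ for every $i$, since the right side is $6$ or $4$.
\item At such a $p$, the common utility $t$ satisfies $t\ge 0$, because $u_B(p,\emptyset)=0$. Then $u_B(p,[3])>2t\ge t$, a contradiction. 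This handles all three splits at once.
\item For GSC+, take $p=(7/3,7/3,7/3)$. This gives utility $0$ to $\emptyset$ and $[3]$, $-1/3$ to each singleton, and at most $4-14/3<0$ to each pair. So $D_f(p)=\{\emptyset,[3]\}$, hence $(1,1,1)\in V(f)$.
\end{itemize}

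This is exactly the paper's argument. The paper uses $f(\{i\})=1$, $f(\{1,2\})=f(\{2,3\})=1.5$, $f(\{1,3\})=2$, $f([3])=10$, and the price $p=(10/3,10/3,10/3)$.
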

\begin{proof}
Consider the function $f:2^{[3]} \rightarrow \reals_{\ge0}$ defined by $f(\{1\})=f(\{2\})=f(\{3\})=1$, $f(\{1,2\})=f(\{2,3\})=1.5, f(\{1,3\})=2, f(\{1,2,3\})=10$. By \Cref{lem:normal_vec_price_vec}, we need to show that there exists no vector of prices such that $\demandset{f}{p} = \{\{i\}, [3]\setminus \{i\}\}$ for some $i\in [3]$ and that it's not GSC+. This suffices since clearly $f$ is not supermodular, and it's not ultra since, when taking $S=\{2\}, T=\{1,3\}$ no exchange  yields a non-decreasing sum of valuations.
Assume towards contradiction that there exists a price vector $p$ such that $\demandset{f}{p}=\{\{i\}, [3]\setminus \{i\}\}$, in particular $u_B(p,\{i\})=u_B(p, [3]\setminus \{i\})\ge 0$, but then, since $f([3]) > f(\{i\}) + f([3]\setminus \{i\})$, it holds that $[3]\in \demandset{f}{p}$ as well, contradiction.

Additionally, by \Cref{lem:normal_vec_price_vec}, $p=(10/3, 10/3, 10/3)$ shows that $(1,1,1)$ is a demand type vector, thus $f$ isn't GSC+.
\end{proof}
\begin{lemma}[Ultra $\setminus($ GSC $\cup$ Supermodular$)$] \label{lem:ultra_solitary_region_ex}
    There exists a set function $f:2^{[3]}\rightarrow \reals_{\ge 0}$ which is ultra but is not GSC or supermodular.    
\end{lemma}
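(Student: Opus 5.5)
The plan is to take a \emph{symmetric} function $f(S)=g(|S|)$ on three items, which makes ultra-ness automatic, and then choose $g$ so that $f$ fails both supermodularity and GSC. Concretely, I will set $f(\emptyset)=0$, $f(\{i\})=1$, $f(\{i,j\})=1$ and $f([3])=10$. This $f$ is monotone.

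\textbf{Ultra.} For any $S,T$ with $|S|\le|T|$, any $x\in S\setminus T$ and any $y\in T\setminus S$, the exchanged sets $(S\setminus\{x\})\cup\{y\}$ and $(T\setminus\{y\})\cup\{x\}$ have the same cardinalities as $S$ and $T$. Because $f$ depends only on cardinality, the inequality in \Cref{def:uncommon-set-functions}(\ref{def:ultra}) holds with equality. Also, $T\setminus S$ is nonempty whenever $x\in S\setminus T$ and $|S|\le|T|$, so a $y$ always exists.

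\textbf{Not supermodular.} We have $f(2\mid\emptyset)=1>0=f(2\mid\{1\})$, which violates \Cref{def:common-set-functions}(\ref{def:supermod}).

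\textbf{Not GSC.} By \Cref{prop:gsc+}, GSC is contained in GSC+, whose exact demand cover consists of vectors with at most two non-zero entries. It therefore suffices to exhibit $(1,1,1)\in V(f)$. I will mirror the argument of \Cref{lem:asc_greater} and take $p=(10/3,10/3,10/3)$. At these prices:
\begin{itemize}
\item $u_B(p,\emptyset)=0$ and $u_B(p,[3])=10-10=0$;
\item every singleton has utility $1-10/3<0$;
\item every pair has utility $1-20/3<0$.
\end{itemize}
Hence $\demandset{f}{p}=\{\emptyset,[3]\}$.

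\textbf{Main obstacle.} The only delicate step is converting this two-bundle demand set into $(1,1,1)\in V(f)$. I will invoke \Cref{lem:normal_vec_price_vec} exactly as \Cref{lem:asc_greater} does: a price vector at which exactly two bundles are demanded lies in the relative interior of a facet whose normal is the difference of their characteristic vectors. The concrete reasoning is that a small perturbation of $p$ in the directions $\pm(1,1,1)$ strictly separates $\emptyset$ from $[3]$ while keeping all other bundles strictly worse. So the LIP near $p$ is locally the hyperplane $\{q: \sum_i q_i=10\}$, with normal $(1,1,1)$.

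Since $(1,1,1)$ lies outside the GSC+ cover, $f\notin$ GSC+ $\supseteq$ GSC, which completes the separation.
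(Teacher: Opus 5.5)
Your proof is correct. The skeleton matches the paper's: a symmetric $f$ is ultra for free, a decreasing marginal rules out supermodularity, and a price vector with exactly two demanded bundles, via \Cref{lem:normal_vec_price_vec}, yields a facet normal outside the GSC cover. The witness, however, differs.

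The paper takes $f=\indicator[|S|\ge 2]$ and shows that every $\standard{i}+\standard{j}$ lies in $V(f)$, using prices $1/2$ on $\{i,j\}$ and $2$ elsewhere. Any bipartition of three items puts some pair in the same part, where such a complementarity vector is forbidden, so no GSC partition exists.

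You instead build in a three-way complementarity and certify non-GSC with the single vector $(1,1,1)$, going through GSC $\subseteq$ GSC+ (\Cref{prop:gsc+}).

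\begin{itemize}
\item \emph{What your route buys.} It needs no case analysis over partitions, and it proves the stronger fact $f\notin$ GSC+. The cost is that it leans on the containment GSC $\subseteq$ GSC+, which the paper imports from an external result. You could also bypass the demand-type machinery entirely. At $p=(10/3,10/3,10/3)$ with $S=[3]$, raising any single price by $\delta>0$ makes $\emptyset$ the unique demanded bundle. Choosing $a$ in whichever part contains two items then directly violates the within-part substitutes clause of the GSC definition.
\item \emph{What the paper's route buys.} Its witness does lie in GSC+ (it is reused in \Cref{ex:gsc+_not_gsc}), so it gives a finer separation. It shows that pairwise complementarities alone, with no three-way complementarity, already rule out GSC.
\end{itemize}
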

\begin{proof}
    Consider the symmetric function $f:2^{[3]}\rightarrow \reals_{\ge 0}$ defined by 
    \[
    f(S) = \begin{cases}
        0 & \text{if }|S|=0 \\
        0 & \text{if }|S|=1 \\ 
        1 & \text{if }|S|=2 \\
        1 & \text{if }|S|=3.
    \end{cases}
    \]
    As was noted by \citet{ultra_valuations}, any symmetric function is ultra, and therefore $f$ is ultra. 
    $f$ is not supermodular, since, for example $f(3\mid \{1,2\}) < f(3\mid \{1\})$.
    It remains to show that $f$ is not GSC.
    To do this we show that any two items display complementarities and so no partition as in~\Cref{def:uncommon-set-functions}(\ref{def:gsc}) exists. Let $i\ne j \in \{1,2\}$. Consider the price vector $p\in \reals^3$ defined by $p_k = \begin{cases}
        2 & \text{if } k\notin \{i,j\} \\
        \frac{1}{2} & \text{else}.
    \end{cases}$
    It holds that $\demandset{f}{p} = \{\emptyset, \{i,j\}\}$, thus proving that $e_i+e_j\in V(f)$ by \cref{lem:normal_vec_price_vec}, as needed.
\end{proof}
\begin{lemma}[GSC $\setminus($ Supermodular $\cup$ Ultra$)$]
    There exists a set function $f:2^{[3]}\rightarrow \reals_{\ge 0}$ which is GSC but is not supermodular or ultra.    
\end{lemma}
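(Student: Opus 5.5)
The plan is to exhibit an explicit function on three items and check the three properties separately. To certify GSC, I will not use \Cref{def:uncommon-set-functions}(\ref{def:gsc}) directly. Instead I will use the Sun--Yang basis-change characterisation mentioned in the footnote to \Cref{main_thm:demand_query}.

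\textbf{Basis change.} Take the partition $A_1=\{1,2\}$, $A_2=\{3\}$ and define $g(T)=f(T\triangle A_2)$. Writing $S=T\triangle A_2$, we get
\[
f(S)-\sum_{i\in S}p_i = g(T)-\sum_{i\in T}q_i - p_3, \qquad q=(p_1,p_2,-p_3).
\]
Demand for $f$ at $p$ therefore corresponds bijectively to demand for $g$ at $q$. Under this correspondence the GSC condition for $f$ becomes the GS condition for $g$, with prices allowed to be arbitrary reals. So I will only need to check that $g$ is GS, which for three items amounts to checking the $M^\natural$-concave exchange inequalities directly.

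\textbf{Candidate.} Intend items $1,2$ to be substitutes and item $3$ a complement to each. Take
\[
f(\{i\})=1 \ (i=1,2,3),\quad f(\{1,2\})=1,\quad f(\{1,3\})=3,\quad f(\{2,3\})=2,\quad f(\{1,2,3\})=3,
\]
with $f(\emptyset)=0$. This $f$ is monotone.
\begin{itemize}
\item \emph{Not supermodular:} $f(2\mid\{1\})=0<1=f(2\mid\emptyset)$.
\item \emph{Not ultra:} take $S=\{2\}$, $T=\{1,3\}$, $x=2$. Then $f(S)+f(T)=4$. The exchange with $y=1$ gives $f(\{1\})+f(\{2,3\})=3$, and the exchange with $y=3$ gives $f(\{3\})+f(\{1,2\})=2$. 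Both are strictly smaller than $4$, so ultra fails.
\end{itemize}

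\textbf{Main obstacle: checking GSC.} The flipped function is
\[
\begin{aligned}
&g(\emptyset)=1,\quad g(\{1\})=3,\quad g(\{2\})=2,\quad g(\{3\})=0,\\
&g(\{1,2\})=3,\quad g(\{1,3\})=1,\quad g(\{2,3\})=1,\quad g(\{1,2,3\})=1.
\end{aligned}
\]
I must confirm that $g$ is $M^\natural$-concave, i.e.\ GS. I will check the exchange property over all pairs $(X,Y)$ on three items; the pairs with $|X\triangle Y|\le 2$ are immediate. If some inequality fails, I will adjust the free values $f(\{1,2\})$, $f(\{1,2,3\})$ and $f(\{2,3\})$ (which corresponds to $g(\{2\})$) while keeping the strict ultra violation $f(\{2\})+f(\{1,3\})>\max\{f(\{1\})+f(\{2,3\}),\,f(\{3\})+f(\{1,2\})\}$ and the supermodularity violation $f(\{1,2\})<f(\{1\})+f(\{2\})$.

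\textbf{Backup route.} As an alternative I can compute the LIP normals of $f$ via \Cref{lem:normal_vec_price_vec}. I would then show that each normal has at most two nonzero coordinates, with opposite signs within $A_1$ and equal signs across $A_1$ and $A_2$; this is the GSC sign pattern, i.e.\ the image of the GS cover under the basis change.
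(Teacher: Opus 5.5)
\textbf{Gap: your witness is not GSC.} The basis-change strategy is sound, but the function you propose fails the $M^\natural$-concavity check. With no valid witness, the existence proof does not go through. For your $g$, take $X=\{2,3\}$, $Y=\{1\}$, $i=2$. Then $g(X)+g(Y)=4$, while $g(\{3\})+g(\{1,2\})=3$ and $g(\{1,3\})+g(\{2\})=3$. Equivalently, among the three sums $g(\{1,2\})+g(\{3\})$, $g(\{1,3\})+g(\{2\})$, $g(\{2,3\})+g(\{1\})$ (values $3,3,4$), the maximum is attained only once.

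This is not an artifact of the partition $\{1,2\},\{3\}$. At $p=(1,\tfrac12,\tfrac32)$ you get $u_B(p,\{2\})=u_B(p,\{1,3\})=\tfrac12$, while every other bundle has utility at most $0$. So $D_f(p)=\{\{2\},\{1,3\}\}$, and by \Cref{lem:normal_vec_price_vec} we have $(1,-1,1)\in V(f)$. A normal with three nonzero coordinates lies outside the GSC+ cover of \Cref{prop:gsc+}. Hence your $f$ is not GSC for any partition, and your backup route via LIP normals would fail as well. The real content of the proof sits in your contingency (``adjust the free values''), and the proposal does not say which adjustment works or why.

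\textbf{Missing idea and fix.} The facet exists because $f(\{1,2,3\})=3<f(\{2\})+f(\{1,3\})$. If instead $f(\{1,2,3\})\ge f(\{2\})+f(\{1,3\})$, the facet disappears. Indeed, suppose $\{2\}$ and $\{1,3\}$ are both demanded at a common utility $u$. Then $u\ge 0$, since $\emptyset$ is available. The grand bundle earns at least $2u\ge u$, so it is demanded too.

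The paper's example avoids your problem by this mechanism. It uses singletons worth $1$, pairs worth $1.5,2,3$, and $f([3])=4$, with ultra violated at $S=\{1\}$, $T=\{2,3\}$. It satisfies $f([3])\ge f(\{i\})+f([3]\setminus\{i\})$ for every $i$. The remaining forbidden normals for the partition $\{1,2\},\{3\}$ are then excluded by a case analysis via \Cref{lem:normal_vec_price_vec}.

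For your function, raising $f(\{1,2,3\})$ to $4$ is enough. This makes $g(\{1,2\})=4$, and the three sums become $4,3,4$, so the maximum is attained twice. The local submodularity inequalities of $g$ still hold; for example $g(\{1,2\})+g(\emptyset)=5=g(\{1\})+g(\{2\})$. Neither your non-supermodularity witness nor your non-ultra witness involves $f(\{1,2,3\})$, and monotonicity is preserved.

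With that correction, your route (flip item $3$ and verify $M^\natural$-concavity of $g$) is a legitimate alternative to the paper's normal-vector case analysis. It is more mechanical: it reduces GSC to finitely many local exchange inequalities, instead of ruling out each forbidden normal by hand.
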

\begin{proof}
    Consider the function $f:2^{[3]} \rightarrow \reals_{\ge 0}$ defined by $f(\{1\})=f(\{2\})=1, f(\{3\})=1$, $f(\{1,2\})=1.5,f(\{1,3\})=2, f(\{2,3\})=3, f(\{1,2,3\})=4$. 

To see that $f$ is not ultra we take $S=\{1\}, T=\{2,3\}$. Clearly no exchange leads to a non-decreasing sum of valuations.

To see that $f$ is not supermodular, we note that $f(1\mid \{2\}) < f(1\mid \emptyset)$.

We claim that $f$ is GSC with $S_1 = \{1,2\}, S_2=\{3\}$. The demand-cover vectors of this GSC partition are the standard basis and $\pm(1,-1, 0), \pm(1, 0, 1), \pm (0, 1,1)$. Thus, we need to show \[
V(f) \cap \{(1,1,1), (-1, 1, 1), (1,-1, 1), (1, 1, -1), (1, 1, 0), (1, 0, -1), (0, 1, -1)\} =\emptyset.
\]
We do this by a case analysis:
\begin{enumerate}
    \item $(1,1,1) \notin V(f)$: Assume towards contradiction that there exists a price vector $p$ such that $\emptyset$ and $\{1,2,3\}$ are demanded. Note that $\{1\}$ is also demanded.
    \item Let $i\in [3]$, $\begin{cases}
        1 & j\ne i \\
        -1 & j =i.
    \end{cases}\notin V(f)$: Assume towards contradiction that there exists a price vector $p$ such that $\{i\}$ and $[3]\setminus \{i\}$ are demanded. Clearly $[3]$ is also demanded.
    \item $(1,1,0)\notin V(f)$: $f|_{\{1,2\}}$ and $f(\circ\mid 3)$ are GS.
    \item $(1,0,-1)\notin V(f)$: $f|_{\{1,3\}}$ and $f(\circ\mid 2)$ are supermodular.
    \item $(0, 1, -1)\notin V(f)$: $f|_{\{2,3\}}$ and $f(\circ \mid 1)$ are supermodular.
\end{enumerate}

\end{proof}

\begin{lemma} [Supermodular $\setminus ($ Ultra $\cup$ GSC$)$]
    There exists a set function $f:2^{[3]}\rightarrow \reals_{\ge 0}$ which is supermodular, but is not ultra or GSC.
\end{lemma}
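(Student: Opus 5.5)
The plan is to exhibit an explicit non-symmetric supermodular function on three items. Symmetric functions are ultra, so the asymmetry is what will break ultra. I would then certify non-GSC in the same way as \Cref{lem:ultra_solitary_region_ex}: show that every pair of items exhibits a complementarity facet. Concretely, take $f(\emptyset)=f(\{1\})=f(\{2\})=f(\{3\})=0$, $f(\{1,2\})=f(\{1,3\})=1$, $f(\{2,3\})=2$ and $f(\{1,2,3\})=4$.

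\emph{Supermodularity.} I would check directly that $f(j\mid S)\le f(j\mid T)$ for $S\subseteq T$ and $j\notin T$; with three items this is a short list. Adding an item to $\emptyset$ contributes $0$. Adding to a singleton contributes $1$ or $2$. Adding to a pair contributes $f(\{1,2,3\})$ minus the pair's value, which is $3$ or $2$. For each $j$ these marginals are non-decreasing along every chain. For instance, $f(3\mid\emptyset)=0\le f(3\mid\{2\})=2\le f(3\mid\{1,2\})=3$, and $f(1\mid\{2\})=1\le f(1\mid\{2,3\})=2$.

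\emph{Not ultra.} Take $S=\{1\}$, $T=\{2,3\}$ and $x=1$, so $f(S)+f(T)=2$. The two possible exchanges give $f(\{2\})+f(\{1,3\})=1$ and $f(\{3\})+f(\{1,2\})=1$. Both are strictly smaller than $2$, so the ultra condition fails.

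\emph{Not GSC.} In the GSC demand cover, the vector $e_i+e_j$ is forbidden whenever $i$ and $j$ lie in the same part of the partition. Any partition of $[3]$ into two parts puts some pair in the same part. It therefore suffices to show $e_i+e_j\in V(f)$ for every pair, which I would do via \Cref{lem:normal_vec_price_vec} by exhibiting prices at which exactly $\emptyset$ and $\{i,j\}$ are demanded.
\begin{itemize}
\item For $\{2,3\}$, take $p=(10,1,1)$. Then $\emptyset$ and $\{2,3\}$ have utility $0$, the singletons $\{2\},\{3\}$ have utility $-1$, and every bundle containing item $1$ is negative.
\item For $\{1,2\}$, take $p=(\tfrac12,\tfrac12,10)$. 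Then $\emptyset$ and $\{1,2\}$ have utility $0$, the singletons $\{1\},\{2\}$ have utility $-\tfrac12$, and every bundle containing item $3$ is very negative.
\item For $\{1,3\}$, take $p=(\tfrac12,10,\tfrac12)$, which works symmetrically to the previous case.
\end{itemize}
The only step needing care is confirming that each demand set is exactly $\{\emptyset,\{i,j\}\}$. In particular, the large price must drive all bundles containing the third item strictly below $0$, and the value of $[3]$ (which is $4$) is dominated by the price $10$, so this holds.
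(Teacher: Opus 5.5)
Your proposal is correct and follows essentially the same route as the paper: exhibit an explicit asymmetric supermodular function on $[3]$, refute ultra with $S=\{1\}$, $T=\{2,3\}$, $x=1$, and refute GSC by showing $e_i+e_j\in V(f)$ for every pair via \Cref{lem:normal_vec_price_vec}, using prices under which exactly $\emptyset$ and $\{i,j\}$ are demanded. The only difference is the witness: the paper perturbs $|S|^2$ by setting $f(\{2,3\})=4.01$ and uses prices $f(S)/2$ on $S$ and $10$ elsewhere, while your function and prices also check out, and you verify supermodularity and the ultra failure explicitly where the paper just says ``clearly''.
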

\begin{proof}
    Consider the function $f:2^{[3]}\rightarrow \reals_{\ge 0}$ defined by 
    \[
    f(S) = \begin{cases}
        |S|^2 & \text{if } S\ne \{2,3\} \\
        |S|^2 + 0.01 & \text{else.}
    \end{cases}
    \]
    Clearly $f$ is supermodular. To see that it is not ultra, we take $S=\{1\}, T=\{2,3\}$. Clearly no exchange leads to a non-decreasing sum of valuations.
    To see that $f$ is not GSC, we show that $\{(1,1,0), (1,0,1), (0, 1,1)\}\subseteq V(f)$. This rules out any partition which satisfies the GSC condition, since any two (of the three) elements must be in opposite parts. 
    Indeed, let $S\subseteq [3]$ be of size $2$ (i.e., $|S|=2$). Consider $p\in \reals^n$ defined by
    \[
    p_i = \begin{cases}
        \frac{f(S)}{2} & \text{if }i\in S\\
        10 &\text{else.}
    \end{cases}
    \]
    It holds that $\demandset{f}{p} = \{\emptyset, S\}$, which shows $\chi_S \in V(f)$ by \Cref{lem:normal_vec_price_vec}, concluding the proof.  
\end{proof}
\begin{lemma}[$($ GSC $\cap$ Supermodular$)$ $\setminus$ ultra]
    There exists a set function $f:2^{[3]}\rightarrow \reals_{\ge 0}$ which is GSC and supermodular but is not ultra.    
\end{lemma}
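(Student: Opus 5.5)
I would prove this by exhibiting an explicit function on three items and checking the three properties separately, following the pattern of the previous lemmas in this appendix. The natural GSC partition for a function that is also supermodular is one where a pair of items inside one part behaves \emph{additively}, so that no substitution is ever needed. All complementarities then run across the partition. I would take the partition $A_1=\{1,3\}$, $A_2=\{2\}$ and the candidate
\[
f(\emptyset)=0,\; f(\{1\})=2,\; f(\{2\})=0,\; f(\{3\})=1,\; f(\{1,2\})=3,\; f(\{1,3\})=3,\; f(\{2,3\})=3,\; f(\{1,2,3\})=6 .
\]
Items $1$ and $3$ are additive both without item $2$ ($f(\{1,3\})=f(\{1\})+f(\{3\})$) and conditionally on item $2$ ($f(\{1,2,3\})-f(\{2\})=(f(\{1,2\})-f(\{2\}))+(f(\{2,3\})-f(\{2\}))$). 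Item $2$ is complementary to each of them.

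\textbf{Supermodularity and failure of ultra.} Supermodularity is a finite check of marginals.
\begin{itemize}
\item The marginals of $2$ are $0,1,2,3$ over $\emptyset,\{1\},\{3\},\{1,3\}$.
\item The marginals of $1$ are $2,3,2,3$ over $\emptyset,\{2\},\{3\},\{2,3\}$.
\item The marginals of $3$ are $1,1,3,3$ over $\emptyset,\{1\},\{2\},\{1,2\}$.
\end{itemize}
Each is monotone along inclusions. For non-ultra, take $S=\{1\}$, $T=\{2,3\}$ and $x=1$, so that $f(S)+f(T)=5$. Then $y=2$ gives $f(\{2\})+f(\{1,3\})=3<5$, and $y=3$ gives $f(\{3\})+f(\{1,2\})=4<5$.

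\textbf{GSC.} For the partition $\{1,3\},\{2\}$, the GSC cover consists of the standard basis together with $\pm(1,0,-1)$, $\pm(1,1,0)$ and $\pm(0,1,1)$. Since $f$ is supermodular, $V(f)$ already lies in the gross-complements cover by \Cref{prop:gross_covers}. It therefore suffices to show $(1,0,1)\notin V(f)$ and $(1,1,1)\notin V(f)$, using \Cref{lem:normal_vec_price_vec} as in the earlier lemmas.

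For $(1,0,1)$, suppose a price vector $p$ has exactly two demanded bundles differing by adding $\{1,3\}$. Whether the common part contains $2$ or not, the relevant conditional function on $\{1,3\}$ (namely $f$ restricted to $\{1,3\}$, or $f(\cdot\mid 2)$) is additive. So indifference between $X$ and $X\cup\{1,3\}$ forces $X\cup\{1\}$ or $X\cup\{3\}$ to be demanded as well, a contradiction.

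For $(1,1,1)$, suppose $\emptyset$ and $[3]$ are the only demanded bundles, so that $p_1+p_2+p_3=6$. Comparing $\{1,3\}$ and $\{2\}$ with both gives
\[
u_B(p,\{1,3\})+u_B(p,\{2\})=f(\{1,3\})+f(\{2\})-6=-3<0,
\]
so additivity alone does not immediately rule this out. I would instead use the additive structure of $1$ and $3$ once more. Writing $[3]=\{2\}\cup\{1\}\cup\{3\}$, we have
\[
u_B(p,[3])=u_B(p,\{2\})+f(1\mid\{2\})-p_1+f(3\mid\{2\})-p_3 .
\]
Comparing with $\emptyset$ and with the singletons, one of $\{1,2\}$ or $\{2,3\}$ should attain at least the average, which would force a third demanded bundle.

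\textbf{Main obstacle.} This last exclusion of $(1,1,1)$ is the step I expect to be hardest. If the averaging argument does not close, I would directly solve the linear system of indifference $u_B(p,\emptyset)=u_B(p,[3])$ together with strict dominance over all six other bundles, and show it is infeasible. If it turns out to be feasible, I would perturb $f(\{1,2,3\})$ slightly downward (keeping the two additivity identities and supermodularity) so that the system becomes infeasible.
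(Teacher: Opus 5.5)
Your supermodularity and non-ultra checks are correct. So are the reduction to $(1,0,1)$ and $(1,1,1)$ via the gross-complements cover and your exclusion of $(1,0,1)$. There is, however, a genuine gap exactly at the step you flagged: your candidate is not GSC for any partition, because $(1,1,1)\in V(f)$. Take $p=(5/2,\,3/2,\,2)$. Then $u_B(p,\emptyset)=u_B(p,\{1,2,3\})=0$, while $u_B(p,\{1\})=-1/2$, $u_B(p,\{2\})=-3/2$, $u_B(p,\{3\})=-1$, $u_B(p,\{1,2\})=-1$, $u_B(p,\{1,3\})=-3/2$ and $u_B(p,\{2,3\})=-1/2$. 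So $D_f(p)=\{\emptyset,\{1,2,3\}\}$, and \Cref{lem:normal_vec_price_vec} gives $(1,1,1)\in V(f)$. You can also see this directly from the definition: raise slightly above $p$ the price of an item lying in a part with at least two elements. Then $\emptyset$ becomes the unique demanded bundle, so from $S=\{1,2,3\}$ that item's partner in its part is dropped, violating substitutes within the part. Your averaging step cannot close this. Conditional additivity only yields $u_B(p,\{1,2\})+u_B(p,\{2,3\})=u_B(p,\{2\})+u_B(p,\{1,2,3\})$, which equals $-3/2$ here.

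The fallback does not help either. Every strict inequality above holds with slack at least $1/2$, so after any small perturbation of $f$ a small shift of $p_1$ restores $D_f(p)=\{\emptyset,[3]\}$. Moreover, $f(\{1,2,3\})$ cannot move alone, because your $(1,0,1)$ argument uses $f(\{1,2,3\})=f(\{1,2\})+f(\{2,3\})-f(\{2\})$.

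The missing idea is what actually rules out $(1,1,1)$: a split $f([3])=f(\{i\})+f([3]\setminus\{i\})$. Then at any $p$ where $\emptyset$ and $[3]$ are demanded, $u_B(p,\{i\})+u_B(p,[3]\setminus\{i\})=u_B(p,[3])=0$ forces $\{i\}$ to be demanded too. This is the paper's argument, with $f(\{1,2,3\})=f(\{1\})+f(\{2,3\})=4$. For supermodular $f$ on three items with $f(\emptyset)=0$, such a split is also necessary. For your $f$ the best split gives only $5<6$. Lowering $f(\{1,2,3\})$ to $5$ breaks supermodularity, since then $f(1\mid\{2,3\})=2<3=f(1\mid\{2\})$.

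Within your scheme (items $1,3$ additive both unconditionally and given $2$), a split forces item $2$ to be additive with item $1$ or item $3$. Non-ultra then requires the remaining pair to have strictly the largest synergy. An example is $f(\{i\})=1$ for all $i$, $f(\{1,2\})=f(\{1,3\})=2$, $f(\{2,3\})=3$, $f(\{1,2,3\})=4$. That is the paper's function, and with the partition $\{1,3\},\{2\}$ the rest of your argument goes through unchanged.
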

\begin{proof} 
Consider the function $f:\{1,2,3\} \rightarrow \reals_{\ge 0}$ defined by $f(\{1\})=f(\{2\})=f(\{3\})=1$, $f(\{1,2\})=f(\{1,3\})=2, f(\{2,3\})=3, f(\{1,2,3\})=4$. 

To see that $f$ is not ultra we take $S=\{1\}, T=\{2,3\}$. Clearly no exchange leads to a non-decreasing sum of valuations.

Note that marginals are non-decreasing and therefore $f$ is supermodular.

We claim that $f$ is GSC with $S_1 = \{1,2\}, S_2=\{3\}$. The demand-cover vectors of this GSC partition are the standard basis and $\pm(1,-1, 0), \pm(1, 0, 1), \pm (0, 1,1)$. Thus, we need to show \[
V(f) \cap \{(1,1,1), (-1, 1, 1), (1,-1, 1), (1, 1, -1), (1, 1, 0), (1, 0, -1), (0, 1, -1)\} =\emptyset.
\]
We do this by a case analysis: 
\begin{enumerate}
    \item $(1,1,1) \notin V(f):$ Assume towards contradiction that there exists a price vector $p$ such that $\emptyset$ and $\{1,2,3\}$ are demanded. Note that $\{1\}$ is also demanded.
    \item Let $i\in [3]$, $\begin{cases}
        1 & j\ne i \\
        -1 & j =i.
    \end{cases}\notin V(f)$: Assume towards contradiction that there exists a price vector $p$ such that $\{i\}$ and $[3]\setminus \{i\}$ are demanded. Clearly $[3]$ is also demanded.
    \item $(1,1,0)\notin V(f)$: $f|_{\{1,2\}}$ and $f(\circ\mid 3)$ are additive.
    \item $(1,0,-1)\notin V(f)$: $f|_{\{1,3\}}$ and $f(\circ\mid 2)$ are additive.
    \item $(0, 1, -1)\notin V(f)$: $f|_{\{2,3\}}$ and $f(\circ \mid 1)$ are supermodular.
\end{enumerate}
\end{proof}

\begin{lemma} [$($ Ultra $\cap$ Supermodular $)\setminus$ GSC]
    There exists a set function $f:2^{[3]}\rightarrow \reals_{\ge 0}$ which is ultra and supermodular, but is not GSC.
\end{lemma}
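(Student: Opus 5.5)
The plan is to use the simplest candidate that is obviously both ultra and supermodular: the symmetric function $f:2^{[3]}\rightarrow \reals_{\ge 0}$ given by $f(S)=|S|^2$. Ultra membership then follows from the observation of \citet{ultra_valuations}, used in \Cref{lem:ultra_solitary_region_ex}, that every symmetric function is ultra. Supermodularity is immediate, since $f(j\mid S)=2|S|+1$ is non-decreasing in $S$. The function is also monotone and non-negative. So the only real work is to show that $f$ is not GSC.

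To rule out GSC, I would argue as in the preceding supermodular example. I will show that every pair of items is complementary, i.e.\ $\chi_S\in V(f)$ for every $S\subseteq[3]$ with $|S|=2$. Fix such an $S$ and take prices $p_i=f(S)/2=2$ for $i\in S$ and $p_k=10$ for the remaining item $k$. The utilities are then:
\begin{itemize}
\item $u_B(p,\emptyset)=0$ and $u_B(p,S)=4-4=0$;
\item $u_B(p,\{i\})=1-2=-1$ for each singleton $\{i\}\subseteq S$;
\item every bundle containing $k$ has utility at most $9-14<0$.
\end{itemize}
Hence $\demandset{f}{p}=\{\emptyset,S\}$, and by \Cref{lem:normal_vec_price_vec} the vector $\chi_S-\chi_\emptyset=\chi_S$ lies in $V(f)$.

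It remains to conclude. A GSC partition $A_1\uplus A_2=[3]$ would have to put two of the three items in the same part. By the demand-cover description of GSC used in the previous lemmas, vectors supported on two items of the same part must have opposite signs (substitutes). Equivalently, by the definition, a price increase on one of those items cannot cause demand for the other to drop. So $\chi_S$ for that pair $S$ cannot lie in the GSC cover, which is a contradiction. Hence no valid partition exists, and $f$ is not GSC.

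The only step needing care is the demand-set computation. The prices must make $\emptyset$ and $S$ exactly tied while keeping every other bundle strictly worse, so that \Cref{lem:normal_vec_price_vec} applies to a two-bundle demand set. The choices above do this with strict margins, so I do not expect a real obstacle.
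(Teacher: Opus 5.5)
Your proposal is correct and follows the paper's proof essentially step for step. You use the same function $f(S)=|S|^2$ (ultra by symmetry, supermodular by increasing marginals), the same prices ($p_i=f(S)/2$ on $S$, $10$ on the remaining item) giving $D_f(p)=\{\emptyset,S\}$, and the same conclusion that all three vectors $\chi_S$ with $|S|=2$ lie in $V(f)$, so no GSC partition of $[3]$ can exist.
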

\begin{proof}
    Consider the set function $f:2^{[3]}\rightarrow \reals_{\ge 0}$ defined by $f(S)=|S|^2$.
    Because it is symmetric, it must be ultra (as was observed in \citet{ultra_valuations}). Additionally, it is clearly supermodular.
    To see that $f$ is not GSC, we show that $\{(1,1,0), (1,0,1), (0, 1,1)\}\subseteq V(f)$. This rules out any partition which satisfies the GSC condition, since any two (of the three) elements must be in opposite parts. 
    Indeed, let $S\subseteq [3]$ be of size $2$ (i.e., $|S|=2$). Consider $p\in \reals^n$ defined by
    \[
    p_i = \begin{cases}
        \frac{f(S)}{2} & \text{if }i\in S\\
        10 &\text{else.}
    \end{cases}
    \]
    It holds that $\demandset{f}{p} = \{\emptyset, S\}$, which shows $\chi_S \in V(f)$ by \Cref{lem:normal_vec_price_vec}, concluding the proof.  
\end{proof}

\begin{lemma}[$($Ultra $\cap$ Supermodular $ \cap$ GSC$)\setminus$ GS]
    There exists a set function $f:2^{[3]} \rightarrow \reals_{\ge 0}$ which is ultra, supermodular, and GSC, but is not GS
\end{lemma}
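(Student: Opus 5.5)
I would exhibit an explicit example on three items: additive plus a single pairwise complementarity between items $1$ and $3$. Concretely, I would take
\[
f(S)=|S|+\indicator[\{1,3\}\subseteq S],
\]
so $f(\{i\})=1$, $f(\{1,2\})=f(\{2,3\})=2$, $f(\{1,3\})=3$ and $f([3])=4$. I would then verify four properties, in increasing order of effort.

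\emph{Supermodular and not GS.} Supermodularity is immediate, since $f$ is an additive function plus the indicator of containing $\{1,3\}$, and that indicator has non-decreasing marginals. For not GS, I would use that GS functions are submodular (equivalently, that $(1,0,1)\in V(f)$, which violates \Cref{prop:gross_covers}(\ref{item:gs_cover})). Indeed $f(3\mid\{1\})=2>1=f(3\mid\emptyset)$, so $f$ is not submodular. Alternatively, the prices $p=(1.5,10,1.5)$ give $\demandset{f}{p}=\{\emptyset,\{1,3\}\}$, and \Cref{lem:normal_vec_price_vec} then shows $(1,0,1)\in V(f)$.

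\emph{Ultra.} The additive part cancels in the exchange inequality, because $(S\setminus\{x\})\cup\{y\}$ and $(T\setminus\{y\})\cup\{x\}$ together contain the same multiset of items as $S$ and $T$. So it suffices to check the ultra condition for $g(S)=\indicator[\{1,3\}\subseteq S]$. I would do this by a short case analysis on whether $S$ and $T$ contain $\{1,3\}$.
\begin{itemize}
\item If neither contains $\{1,3\}$, the left side is $0$ and there is nothing to check.
\item If $S\supseteq\{1,3\}$, then $|T|\ge 2$ and $x\in\{1,3\}$ forces $T$ to miss $x$. Choosing $y$ suitably (for instance $y=2$ when available) moves $x$ into $T$ while keeping a copy of $\{1,3\}$ on at least one side.
\item If $T\supseteq\{1,3\}$ and $S$ does not, then $x\notin T$ forces $x=2$, and some $y\in T\setminus S$ can be swapped without destroying $\{1,3\}$ on both sides.
\end{itemize}
With only three items this is finite and routine.

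\emph{GSC with partition $A_1=\{1,2\}$, $A_2=\{3\}$.} Following the template of the preceding lemmas, the allowed vectors are the standard basis together with $\pm(1,-1,0)$, $\pm(1,0,1)$ and $\pm(0,1,1)$. I would show that $V(f)$ meets none of the excluded vectors $(1,1,1)$, $(-1,1,1)$, $(1,-1,1)$, $(1,1,-1)$, $(1,1,0)$, $(1,0,-1)$, $(0,1,-1)$ (up to sign), using \Cref{lem:normal_vec_price_vec}. The key observation is that item $2$ is separable: $f(S)=h(S\setminus\{2\})+\indicator[2\in S]$. Hence at any price $p$, membership of $2$ in the demand is independent of the rest. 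So if two bundles differing in coordinate $2$ and some other coordinate were both demanded, flipping item $2$ in one of them would give a third demanded bundle. Such a configuration can therefore never be the unique two-bundle indifference that witnesses a facet normal. This rules out every excluded vector with a nonzero second coordinate. The only remaining excluded vector is $(1,0,-1)$, and restricted to $\{1,3\}$ the function is supermodular, so by \Cref{prop:gross_covers}(\ref{item:gc_cover}) the only normals there are $\pm e_1$, $\pm e_3$ and $\pm(1,0,1)$.

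The main obstacle is making the separability argument rigorous. It needs the precise form of \Cref{lem:normal_vec_price_vec}, namely that a vector lies in $V(f)$ exactly when some price yields a demand set consisting of exactly those two bundles. I would confirm that this is the form in which the earlier lemmas apply it before relying on it here.
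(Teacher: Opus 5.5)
Your example is not ultra, and this is fatal rather than a matter of finishing the case analysis. Take $S=\{2\}$, $T=\{1,3\}$, $x=2$. Then $f(S)+f(T)=1+3=4$, whereas $y=1$ gives $f(\{1\})+f(\{2,3\})=3$ and $y=3$ gives $f(\{3\})+f(\{1,2\})=3$. Your third bullet is exactly where it breaks: every $y\in T\setminus S=\{1,3\}$ removes one of $1,3$ from $T$, and $(S\setminus\{x\})\cup\{y\}$ is a singleton, so neither side keeps $\{1,3\}$. On three items the only nontrivial ultra instances have $|S|=1$, $|T|=2$. So ultra is equivalent to the maximum of $s_x=f(\{x\})+f([3]\setminus\{x\})$ being attained at least twice. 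Here $s_2=4>s_1=s_3=3$, and the same happens whenever item $2$ is separable and $1,3$ are strict complements. In fact, after swapping labels $1$ and $2$, your $f$ is precisely the paper's example showing that (GSC $\cap$ supermodular) $\setminus$ ultra is nonempty. Your supermodularity, non-GS and GSC (separability) arguments are correct, but they certify the wrong region of the diagram.

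The paper instead uses $f(S)=|S\cap\{1,2\}|\cdot(1+\indicator[3\in S])$, where item $3$ complements both $1$ and $2$; it passes the ultra test ($s_1=s_2=3>s_3=2$). Be aware, though, that this function fails GSC. At $p=(1.5,1.5,1)$ its demand set is $\{\emptyset,[3]\}$, so $(1,1,1)\in V(f)$, which the partition $\{1,2\},\{3\}$ excludes. Every other partition is ruled out since $(1,0,1),(0,1,1)\in V(f)$. More generally, your own demand-cover reasoning shows that no three-item example exists. Normalize $f(\emptyset)=0$, write $a_i=f(\{i\})$, and take the partition $\{1,2\},\{3\}$ without loss of generality. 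Supermodularity together with $(1,1,0)\notin V(f)$ forces $f(\{1,2\})=a_1+a_2$ (otherwise some price has demand $\{\emptyset,\{1,2\}\}$). It also forces $f([3])=f(\{1,3\})+f(\{2,3\})-a_3$ (otherwise, with $p_3$ very negative, some price has demand $\{\{3\},[3]\}$). Hence $f(S)=\sum_{i\in S}a_i+\gamma_1\indicator[\{1,3\}\subseteq S]+\gamma_2\indicator[\{2,3\}\subseteq S]$ with $\gamma_1,\gamma_2\ge 0$. If both $\gamma_i>0$, the prices $p=a+(\gamma_1/2,\gamma_2/2,(\gamma_1+\gamma_2)/2)$ have demand set $\{\emptyset,[3]\}$, so GSC forces $\min(\gamma_1,\gamma_2)=0$. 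Ultra, with $s_1=A+\gamma_2$, $s_2=A+\gamma_1$, $s_3=A$ for $A=a_1+a_2+a_3$, forces $\gamma_1=\gamma_2$. So $f$ is additive and hence GS. No choice of example on $2^{[3]}$ can close this gap.
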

\begin{proof}
    Consider the set function $f:2^{[3]}\rightarrow \reals_{\ge 0}$ defined by 
    \[
    f(S) = |S\cap \{1,2\}| \cdot  \begin{cases}
        1 & \text{if } 3\notin S \\
        2 &\text{else}.
    \end{cases}
    \]
    Clearly, it is supermodular, and it is GSC with partition $\{1,2\}, \{3\}$. Since it's not submodular, it is not GS. It thus remains to show that $f$ is ultra. The only non-trivial sets $S,T$ to consider in the ultra definition are partitions of $[3]$ such that $|S|=1, |T|=2$. If $S\ne \{3\}$, observe that one of $\{1,2\}$ must be in $T\setminus S$, and exchanging this for the item in $T$ has no impact on each of the valuations, and thus no impact on their sum, as needed. If $S=\{3\}$, then $T=\{1,2\}$ and $f(T)+f(S)=2$. exchanging $3$ for either $1$ or $2$ yields $f(T)+f(S) =1\cdot 2+1=3$, as needed.
\end{proof}

\begin{lemma} [$($ Ultra $\cap$ GSC$) \setminus$ $($ Supermodular $\cup$ GS$)$, Section~9.2 in \citet{ultra_valuations}]
    There exists set function $f:2^{[n]}\rightarrow \reals_{\ge 0}$ which is ultra and GSC but is not supermodular or GS.
\end{lemma}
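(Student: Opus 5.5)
The plan is to give an explicit three-item function, in the same style as the preceding lemmas, and verify each property directly. One point guides the choice. Since $GS=$ submodular $\cap$ ultra, an ultra function fails to be GS exactly when it fails to be submodular. So we need a function that is neither submodular nor supermodular, where items $1,2$ act as substitutes and item $3$ complements each of them. Concretely, I would take
\[
f(S)=\min\bigl(|S\cap\{1,2\}|,1\bigr)\cdot\bigl(1+\indicator[3\in S]\bigr),
\]
that is, $f(\emptyset)=f(\{3\})=0$, $f(\{1\})=f(\{2\})=f(\{1,2\})=1$ and $f(\{1,3\})=f(\{2,3\})=f(\{1,2,3\})=2$. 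This is unit demand on $\{1,2\}$, with the value doubled when $3$ is present. It is a natural candidate for the example of Section~9.2 of \citet{ultra_valuations}; if it fails, I would fall back to theirs.

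\textbf{Easy steps.}
\begin{itemize}
\item \emph{Not supermodular:} $f(2\mid\{1\})=0<1=f(2\mid\emptyset)$.
\item \emph{Not submodular, hence not GS:} $f(3\mid\emptyset)=0<1=f(3\mid\{1\})$. Since ultra $\cap$ submodular $=$ GS, once ultra is established this shows $f$ is not GS.
\end{itemize}

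\textbf{Ultra.} I would check the exchange condition by cardinalities. If $S\setminus T=\emptyset$ there is nothing to check. If $|S|=|T|$ and $|S\triangle T|=2$, the exchange swaps $S$ and $T$ and preserves the sum. The remaining nontrivial cases are $|S|=1$, $|T|=2$ with $S\cap T=\emptyset$:
\begin{itemize}
\item $S=\{3\}$, $T=\{1,2\}$: the sum is $1$, and exchanging $3$ with $1$ gives $f(\{1\})+f(\{2,3\})=3$.
\item $S=\{1\}$, $T=\{2,3\}$ (and symmetrically $S=\{2\}$, $T=\{1,3\}$): exchanging $1$ with $2$ gives $f(\{2\})+f(\{1,3\})=3$, which equals the original sum.
\end{itemize}
The case $|S|=1$, $|T|=3$ has $S\setminus T=\emptyset$.

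\textbf{GSC with partition $\{1,2\},\{3\}$ (main obstacle).} As in the earlier lemmas, I would show via \Cref{lem:normal_vec_price_vec} that none of $(1,1,1)$, $(-1,1,1)$, $(1,-1,1)$, $(1,1,-1)$, $(1,1,0)$, $(1,0,-1)$, $(0,1,-1)$ (up to sign) lies in $V(f)$. Each case assumes two bundles are jointly demanded at some price $p$ and derives that a third bundle is also demanded, or that one of the two is strictly dominated.
\begin{itemize}
\item \emph{$(1,1,0)$:} $\{1,2\}$ versus $\emptyset$, or $\{1,2,3\}$ versus $\{3\}$. Since $f(\{1,2\})=f(\{1\})$ and $f(\{1,2,3\})=f(\{1,3\})$, the larger bundle is weakly beaten by dropping an item, so $\{1\}$ (respectively $\{1,3\}$) is also demanded.
\item \emph{$(1,1,1)$ and $(1,1,-1)$:} $\{1,2,3\}$ against $\emptyset$, or $\{1,2\}$ against $\{3\}$. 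The same redundancy of item $2$ applies: dropping it weakly improves utility, so a third bundle is demanded.
\item \emph{$(1,0,-1)$, i.e.\ $\{1\}$ versus $\{3\}$:} indifference together with $f(\{1,3\})=f(\{1\})+f(\{3\})+1$ forces $\{1,3\}$ to be weakly better whenever both have nonnegative utility; otherwise $\emptyset$ beats them. The case $(0,1,-1)$ is symmetric.
\item \emph{$(-1,1,1)$ and $(1,-1,1)$:} $\{1\}$ versus $\{2,3\}$ (or $\{2\}$ versus $\{1,3\}$). I expect this to be the fiddliest case, since pure redundancy arguments do not obviously apply. I would first try showing that at such a price either $\{1,3\}$ or $\{2\}$ is also demanded, by combining the two indifference equalities with the identity $f(\{1\})+f(\{2,3\})=f(\{2\})+f(\{1,3\})$.
\end{itemize}
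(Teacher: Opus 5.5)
Your example works, and an explicit three-item function with a complete verification is a different, self-contained route. The paper gives no argument for this lemma and only cites the example in Section~9.2 of \citet{ultra_valuations}. Your checks of ultra, non-supermodularity and non-submodularity are correct. The GSC verification, however, has a genuine hole.

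\textbf{The gap.} Each of the vectors $(1,0,-1)$ and $(0,1,-1)$ arises from two pairs of bundles, and you only handle one of them. Besides $\{1\}$ versus $\{3\}$, the vector $(1,0,-1)$ is also $\chi_{\{1,2\}}-\chi_{\{2,3\}}$. Likewise $(0,1,-1)$ is also $\chi_{\{1,2\}}-\chi_{\{1,3\}}$. Your superadditivity argument says nothing about these pairs; they correspond to the ``$f(\circ\mid 2)$'' half of the checks in the paper's analogous lemmas.

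\textbf{The fix.} It uses the same redundancy idea you use elsewhere: inside a bundle containing $2$, item $1$ adds no value.
\begin{itemize}
\item $\{1,2\}\in D_f(p)$ forces $p_1\le 0$, by comparison with $\{2\}$.
\item $\{2,3\}\in D_f(p)$ forces $p_1\ge 0$, by comparison with $\{1,2,3\}$.
\end{itemize}
Hence $p_1=0$, so $\{2\}$ and $\{1,2,3\}$ are demanded as well, and $D_f(p)\neq\{\{1,2\},\{2,3\}\}$. The pair $\{1,2\}$ versus $\{1,3\}$ is symmetric, with $p_2=0$.

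\textbf{Two smaller points.} First, \Cref{lem:normal_vec_price_vec} ranges over all $p\in\reals^n$. So your ``drop item $2$'' arguments for $(1,1,0)$, $(1,1,1)$ and $(1,1,-1)$ need $p_2\ge 0$. This does hold: if the smaller bundle ($\emptyset$ or $\{3\}$) is demanded, comparing it with $\{2\}$ or $\{2,3\}$ respectively forces $p_2\ge 1$. Dropping $2$ from the larger bundle is then strictly profitable, so that bundle is not demanded.

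Second, the case you flagged as fiddly is immediate. For every $p$ we have $u_B(p,\{1\})+u_B(p,\{2,3\})=u_B(p,\{2\})+u_B(p,\{1,3\})$. So if $\{1\}$ and $\{2,3\}$ both attain the maximum, so do $\{2\}$ and $\{1,3\}$. Alternatively, once $f$ is known to be ultra, the vectors $(-1,1,1)$, $(1,-1,1)$ and $(1,1,-1)$ are all excluded by part (1) of the proof of \Cref{prop:ultra_demand}.

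With the missing pairs added, your proof is complete. It relies on the GSC-partition demand cover in the same way the paper's neighbouring lemmas do.
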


\subsection{GSC+ $\ne$ GSC}
In this section we show that the containment between GSC and GSC+(which does no appear in the venn diagram) is in fact, strict.
\begin{example}\label{ex:gsc+_not_gsc}
    Consider the function $f:2^{[3]}\rightarrow [0,1]$ defined as
    \[
    f(S) = \begin{cases}
        0 & \text{if }|S|=0 \\
        0 & \text{if }|S|=1 \\ 
        1 & \text{if }|S|=2 \\
        1 & \text{if }|S|=3.
    \end{cases}
    \]
    This function is GSC+. As $n=3$ we need only demonstrate that $(1,1,1)$ is not in $V(f)$.  But by~\cref{lem:normal_vec_price_vec} if $(1,1,1)\in V(f)$ then there must exist prices $p$ at which the demand set is precisely $\{\emptyset,\{1,2,3\}\}$.  So assume there exist prices $p$ at which $f(\emptyset)=f(\{1,2,3\})-(p_1+p_2+p_3)>f(S)-\sum_{i\in S}p_i$ for all $\emptyset\subsetneq S\subsetneq\{1,2,3\}$.  But then $p_1+p_2+p_3=1$.  If $p_1\leq 0$ then $f(\{1\})-p_1=0-p_1\geq 0=f(\emptyset)$, contradicting our assumption.  But if $p_1>0$ then $f(\{1,2,3\})-(p_1+p_2+p_3)=1-(p_1+p_2+p_3)<1-(p_2+p_3)=f(\{2,3\})-(p_2+p_3)$, again contradicting our assumption.
    
    Additionally, this function is \emph{not} GSC, since any two items display complementarities and so no partition as in~\Cref{def:uncommon-set-functions}(\ref{def:gsc}) exists.
\end{example}

\section{Divergence from the setting of \citet{demand_types}.}\label{app:demand_types}
As mentioned in \Cref{sec:demand_types_prelim}, our setting diverges from that of \citet{demand_types}, in that they  allow the domain of the set function $f$ to be any finite subset of $\mathbb{Z}^{[n]}$, while we only consider set functions $f$ with domain $2^{[n]}$. This set function setting was also considered in \citet{baldwinworkingpaper}, and we now note two differences from the setup of \citet{demand_types}, which result from the restriction of the function domains to the form $2^{[n]}$:
\begin{enumerate}
    \item In our 
    setting, the demand type defined by a demand cover $\demandcover$ is empty unless $\demandcover$ contains $\standard{i}$ for all $i\in[n]$. This is in contrast to \citet{demand_types}, where the demand type is always non-empty. (see \Cref{lem:standard_vecs_cover}).
    \item 
    In our 
    setting, different demand covers can lead to the same demand type (see \Cref{lem:redundant_vecs}), in contrast to the more general setting of \citet{demand_types} where this cannot happen. In our setting this can happen because some vector $v$ in the demand cover may be ``redundant'' i.e., there exists no function $f$ of that demand type with  $v\in V(f)$. 
    Since we're interested in the explicit vectors as well, we introduce notation and terminology for demand covers, which was not needed in \citet{demand_types} due to the exact correspondence between demand types and demand covers (see our definitions in \Cref{sec:demand_types_prelim}).
\end{enumerate}

\begin{lemma}\label{lem:standard_vecs_cover}
    Let $n\in \positivenats$ and $\demandcover \subseteq \{-1, 0, 1\}^n$ be a demand cover. $\demandtype(\demandcover) = \emptyset$ unless $\standard{i}\in \demandcover$ for every $i\in [n]$.
\end{lemma}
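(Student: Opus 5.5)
We argue by contraposition: assuming some $f:2^{[n]}\rightarrow \reals_{\ge 0}$ lies in $\demandtype(\demandcover)$, we show that $\standard{i}\in V(f)\subseteq \demandcover$ for every $i\in[n]$. By definition of demand type, $V(f)\subseteq\demandcover$, so it suffices to show that each standard basis vector is normal to some facet of $\lip_f$. (By the symmetry of demand covers, $-\standard{i}$ then also lies in $\demandcover$, though we do not need it.)

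Fix $i\in[n]$. The plan is to exhibit two adjacent UDRs whose demanded bundles differ exactly in item $i$, and then invoke \Cref{obs:change_in_demand_normal}. Choose $M$ larger than $\max_S f(S)$ plus any slack, and set the prices of all items $j\neq i$ to $M$. At such prices, any bundle containing some $j\ne i$ has utility at most $f(S)-M<0\le u_B(p,\emptyset)$, so only $\emptyset$ and $\{i\}$ can be demanded. Now vary $p_i$. For $p_i > f(\{i\})-f(\emptyset)$ the unique demand is $\emptyset$, and for $p_i < f(\{i\})-f(\emptyset)$ it is $\{i\}$. Hence the segment $\{p : p_j=M\ (j\ne i)\}$ passes from the UDR of $\emptyset$ into the UDR of $\{i\}$. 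The set of such prices with $p_j$ all sufficiently large, for $j\ne i$, is an open region. Within that region the LIP is exactly the hyperplane piece $p_i=f(\{i\})-f(\emptyset)$, so this piece is $(n-1)$-dimensional and is part of a facet separating the UDRs of $\emptyset$ and $\{i\}$.

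By \Cref{obs:change_in_demand_normal}, the normal to this facet is $\chi_{\{i\}}-\chi_\emptyset=\standard{i}$, so $\standard{i}\in V(f)\subseteq\demandcover$. Alternatively, one could apply \Cref{thm:baldwin_singletons} to two generic prices on either side that differ only in coordinate $i$: it gives $\{i\}\in\neighbors{\demandcover}{\emptyset}$, that is, $\standard{i}\in\demandcover$. This route avoids any facet bookkeeping.

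The only delicate point is to check that the set where demand is tied really forms a full $(n-1)$-dimensional facet rather than some lower-dimensional degenerate piece. Perturbing the other coordinates in a small open neighborhood, while keeping them large, settles this, since the tie set is then locally exactly a hyperplane. Using the \Cref{thm:baldwin_singletons} route sidesteps the issue entirely, so that is the version I would write.
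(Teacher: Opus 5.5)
Your argument is correct, and its core is the paper's construction: price every item $j\neq i$ at some $M>\max_S f(S)$, so that only $\emptyset$ and $\{i\}$ compete, and they trade places at $p_i=f(\{i\})-f(\emptyset)$.

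The difference is how you get from this picture to the cover. The paper sits exactly at the tie, notes $\demandset{f}{p}=\{\emptyset,\{i\}\}$, and invokes \Cref{lem:normal_vec_price_vec} (the characterization of $V(f)$ through two-element demand sets). This gives $\standard{i}\in V(f)$ for every set function $f$.

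Your preferred route instead takes prices strictly on either side, e.g.\ $p_i=f(\{i\})-f(\emptyset)\pm 1$, where demand is uniquely $\emptyset$ resp.\ $\{i\}$. You then read off $\standard{i}\in\demandcover$ from \Cref{thm:baldwin_singletons}. This avoids ties and facets altogether, and no genericity is needed with these explicit prices. Since trivially $f\in\demandtype(V(f))$, taking $\demandcover=V(f)$ recovers the paper's stronger conclusion as well. The trade-off is only which external characterization you rely on: the duality result behind \Cref{lem:normal_vec_price_vec}, or the axis-parallel characterization of demand types. Your first, facet-based route is essentially a hand proof of the needed special case of \Cref{lem:normal_vec_price_vec}.

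One small repair is needed. The bound $u_B(p,S)\le f(S)-M$ for bundles containing some $j\neq i$ tacitly assumes $p_i\ge 0$. That fails for the negative $p_i$ you allow when you say demand is $\{i\}$ for all $p_i<f(\{i\})-f(\emptyset)$. For $S\ni i$, compare with $\{i\}$ instead, which gives $u_B(p,S)-u_B(p,\{i\})\le f(S)-f(\{i\})-M<0$ for every value of $p_i$.
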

\begin{proof}
    Let $f:2^{[n]}\rightarrow \reals_{\ge 0}$ be a set function, and let $i\in [n]$. We claim that $\standard{i}\in V(f)$. Let $M=\max_{S\subseteq [n]} f(S) + 1$. Observe that under the price vector $p\in \reals^n$ defined by \[
    p_j=\begin{cases}
        M & j\ne i \\
        f(\{i\})-f(\emptyset) & j=i,\\
    \end{cases}\]
     the demand set is $\demandset{f}{p}=\{\emptyset, \{i\}\}$, and thus by \Cref{lem:normal_vec_price_vec}, $\standard{i}\in V(f)$.
\end{proof}
\begin{lemma}\label{lem:redundant_vecs}.
    There exist $n\in \positivenats$ and different demand covers $\demandcover, \demandcover'\subseteq \{-1,0,1\}^n$ such that $\demandtype(\demandcover)=\demandtype(\demandcover')\ne \emptyset$.
\end{lemma}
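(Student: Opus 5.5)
We exhibit a concrete small example, taking $n=2$. Let $\demandcover=\{-1,0,1\}^2\setminus\{0\}$ be the full cover, and let $\demandcover'$ be obtained from $\demandcover$ by removing one symmetric pair of vectors. We then argue that no function over two items can have both $\pm(1,1)$ and $\pm(1,-1)$ in $V(f)$. Since \Cref{lem:standard_vecs_cover} forces all $\pm\standard{i}$ into $V(f)$, the possible LIPs for $n=2$ are exactly those pictured in \Cref{fig:lip_examples}, together with the degenerate ``cross'' configuration in which all four UDRs meet at a single point and there is no diagonal facet.

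The key step is to show this carefully. Suppose $(1,-1)\in V(f)$. Then there is a facet on which $\{1\}$ and $\{2\}$ are the only demanded bundles. Let $p$ be a point in its relative interior, so that $f(\{1\})-p_1=f(\{2\})-p_2$ and this common value strictly exceeds both $0$ and $f(\{1,2\})-p_1-p_2$.

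Summing these two strict inequalities gives $f(\{1\})+f(\{2\})>f(\{1,2\})+f(\emptyset)$, so $f$ is strictly submodular. Symmetrically, $(1,1)\in V(f)$ means $\emptyset$ and $\{1,2\}$ are the unique demanded bundles at some price vector. Summing those two strict inequalities gives the reverse strict inequality, $f(\{1,2\})+f(\emptyset)>f(\{1\})+f(\{2\})$. The two cannot hold simultaneously. Here I would invoke \Cref{lem:normal_vec_price_vec}, cited earlier, which characterizes $v\in V(f)$ via a price vector whose demand set is exactly the corresponding pair of bundles.

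It follows that every $f$ over two items is covered either by $\demandcover_1=\pm\{\standard{1},\standard{2},(1,-1)\}$ or by $\demandcover_2=\pm\{\standard{1},\standard{2},(1,1)\}$. This alone does not make two covers induce equal types, so we need a sharper choice. We take $\demandcover=\pm\{\standard{1},\standard{2},(1,-1)\}$ (the GS cover) and $\demandcover'=\demandcover\cup\pm\{(1,1),(1,-1)\}$; this is not quite right yet, so we adjust. A cleaner option is to use $n=3$ with the vector $(1,1,-1)$-type patterns, but the simplest route is as follows. Let $\demandcover=\pm\{\standard{1},\standard{2}\}$ and $\demandcover'=\demandcover\cup\pm\{(1,1)\}\setminus$ ...

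On reflection, the cleanest construction is to note that $\demandtype(\demandcover)$ for $\demandcover=\pm\{\standard{1},\standard{2}\}$ is nonempty: additive functions have only axis-parallel facets, giving the cross configuration. We then need some $v$ that can never be added. A better choice is $n=3$: take $\demandcover=\pm\{\standard{1},\standard{2},\standard{3}\}$ and $\demandcover'=\demandcover\cup\pm\{(1,1,1)\}$.

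We claim that if $f$ is covered by $\demandcover'$, then $(1,1,1)\notin V(f)$. Suppose instead that $\emptyset$ and $[3]$ are the only demanded bundles at some $p$. Then $f$ is covered by $\demandcover'$, which contains no 2-item vectors. Restricting attention to the face at which item 3 is priced prohibitively shows that $f|_{\{1,2\}}$ has only axis facets, i.e., it is additive. Similar reasoning shows that all pairs and conditional marginals are additive, which forces $f$ to be additive. For additive $f$ the total gain from $[3]$ equals the sum of the singleton gains, so some singleton is also demanded, a contradiction. Hence $\demandtype(\demandcover)=\demandtype(\demandcover')$, and this common type contains the additive functions, so it is nonempty.

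The main obstacle is making this ``forces additive'' step rigorous. I would first try to show directly, via \Cref{lem:axis_parallel_demand_sets}, that axis-only adjacency is equivalent to additivity.
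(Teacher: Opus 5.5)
Your final construction is the paper's own pair of covers ($n=3$, with and without $\pm(1,1,1)$). Your proof that $(1,1,1)$ is redundant takes a different, more elementary route, and it goes through.

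The step you flag as the obstacle closes without \Cref{lem:axis_parallel_demand_sets}. You only need the converse of your two-item summing computation, which is immediate. If $g(\{1,2\})+g(\emptyset)<g(\{1\})+g(\{2\})$, set $p_i=g(\{i\})-t$ with $g(\emptyset)<t<g(\{1\})+g(\{2\})-g(\{1,2\})$; this gives demand exactly $\{\{1\},\{2\}\}$. Symmetrically, strict supermodularity gives a price with demand exactly $\{\emptyset,\{1,2\}\}$. Apply this to $g(T)=f(T\cup S)$ for a pair $\{i,j\}$ and $S\subseteq[3]\setminus\{i,j\}$. Price the remaining item at $+M$ if $S=\emptyset$ and at $-M$ if $S\neq\emptyset$; negative prices are allowed, and you need them for the conditional marginals. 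A non-modular slice then yields a price at which $f$ demands exactly two bundles differing by a two-support vector. By \Cref{lem:normal_vec_price_vec} that vector lies in $V(f)$, contradicting $V(f)\subseteq\pm\{\standard{1},\standard{2},\standard{3},(1,1,1)\}$. The three pair slices plus one conditional slice force $f$ to be modular. A modular function's demand sets are Boolean intervals, so they never equal $\{\emptyset,[3]\}$.

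The paper instead argues locally at the putative facet $F$ with normal $(1,1,1)$. It shows $F$ must meet another facet, then applies Proposition~2.7 and the balancing theorem (Theorem~2.14) of \citet{demand_types} at the codimension-two cell $G$. If, say, a facet with normal $\standard{1}$ meets $F$, then $G$ is parallel to $\standard{2}-\standard{3}$. This excludes normals $\pm\standard{2},\pm\standard{3}$ at $G$, so the normals there cannot sum to zero.

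Comparing the two:
\begin{itemize}
\item Your route needs nothing beyond \Cref{lem:normal_vec_price_vec}. It also proves more: the common type is exactly the modular functions. The same slicing works in any $n$: a cover with no two-support vectors has only modular functions in its type, so all its vectors with three or more non-zero entries are redundant.
\item The paper's argument is local to one facet and never pins down the whole type. It therefore adapts more readily to covers whose types contain non-modular functions.
\end{itemize}

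In a write-up, drop the $n=2$ detour entirely. For $n=2$, every nonzero vector of a cover with nonempty type is realized by a function in that type: $(1,-1)$ by $\indicator[S\neq\emptyset]$ and $(1,1)$ by $|S|^2$. So no example exists there.
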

\begin{proof}
    We show an example of a demand cover with a ``redundant'' vector, i.e., a vector $v$ such that for any function $f$ of that demand type $v\notin V(f)$.
    
    It is easier to express the example using vectors rather than sets.  Let $n=3$ and $\demandcover=\pm\{\standard{1},\standard{2},\standard{3},\standard{1}+\standard{2}+\standard{3}\}$. Write $v=\standard{1}+\standard{2}+\standard{3}$. Suppose $v\in V(f)$ for some $f\in\demandtype(\demandcover)$.  Then $f$ uniquely demands a bundle $w\in \{0,1\}^3$ on one side of some facet $F$, and $w+v\in \{0,1\}^3$ on the other side, which is only consistent with $w=0$.  As the unique demand regions in which $0$ and $v$ are demanded are both unique, $F$ is the unique facet in the LIP for $f$ with normal $v$.

Next we show that $F$ must somewhere intersect with another facet of the LIP for $f$.  Observe that $f(0,0,0)=f(v)-(p_1+p_2+p_3)$ for any $p$ in the relative interior of $F$; fix some such $p$.  If $F$ meets no other facet then, for every $K>0$, also $p-K\standard{1}+K\standard{2}\in F$.  But $f(\standard{1})-(p_1-K)>f(0)$ for any $K>f(0)-f(\standard{1})+p_1$, so bundle $\standard{1}$ is preferred to $0$ (and thus also to $v$ at price $p-K\standard{1}+K \standard{2}$, we conclude that this price cannot be in $F$.  So $F$ must meet another facet, $F'$.

It follows that $F$ must be one of a collection $\mathcal{F}$ of facets of $\lip_f$, intersecting in a $(n-2)$-dimensional object $G$ in their boundaries (\citet{demand_types} Proposition 2.7).  Moreover, the collection $\mathcal{F}$ is ``balanced'' around $G$: the sum of their normals, appropriately oriented, is zero (\citet{demand_types} Theorem 2.14; note that all ``weights'' are 1 in the setting of this paper).

Since $F$ is the unique facet in the LIP for $f$ with normal $v$, in particular it is the only facet with this normal in $\mathcal{F}$.  So, without loss, we may assume there is a facet with normal $\standard{1}$ in $\mathcal{F}$.  Then $G$ is spanned by $\standard{2}-\standard{3}$.  Thus there cannot be any facet in $\mathcal{F}$ with normal either $\standard{2}$ or $\standard{3}$.  But it is now impossible for the collection $\mathcal{F}$ to be balanced.  This contradiction demonstrates the impossibility of a function $f\in\demandtype(\demandcover)$ satisfying $v\in\demandcover(f)$.
\end{proof}
\section{Demand Covers Appendix}\label{app:demand_covers}
In this appendix we formally prove exact/minimal demand covers for the known classes of functions. Usually, for exact demand covers the fact that the class is exactly equal to the demand type is already known (for example, \citet{sun_and_yang} provide an appropriate demand cover for GSC), but in our setting, to show exactness/minimality of a demand cover, we also need to show that no vectors are redundant (see \Cref{remark:redudant_vectors}). 
In \Cref{app:demand_cover_tools} we present two useful tools for proving a demand cover's exactness/minimality, and in \Cref{app:demand_covers_minimality} we prove exactness/minimality of the demand covers presented in \Cref{sec:demand-covers}.

\subsection{Toolbox}\label{app:demand_cover_tools}
Observe that \Cref{lem:axis_parallel_demand_sets} (a generalization of Theorem~1 by
\citet{baldwinworkingpaper}) can be restated as follows.
\begin{lemma}\label{lem:axis_parallel_equivalent}
    Let $\demandcover \subseteq \{-1, 0, 1\}^n$ be a demand cover. A set function $f: 2^{[n]}\rightarrow \reals_{\ge 0}$ belongs to $\demandtype(\demandcover)$ if and only if for any price vector $p\in \reals^n$, item $i \in [n]$ and $\delta > 0$, if $S\in \demandset{f}{p}$ then there exists $T\in \demandset{f}{p+\delta \standard{i}}$ such that either $T=S$ or $\chi_T-\chi_S \in \demandcover$. 
\end{lemma}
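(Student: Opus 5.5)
The plan is to derive \Cref{lem:axis_parallel_equivalent} directly from \Cref{lem:axis_parallel_demand_sets}, since the two statements differ only in how the pair of price vectors is parametrized and how adjacency is phrased. First I will unpack the definitions. By \Cref{def:demand_query_notation}, $T\in\neighbors{\demandcover}{S}$ holds exactly when $T=S$ or $\chi_T-\chi_S\in\demandcover$. Moreover, two price vectors $p,q\in\reals^n$ that differ in a single coordinate $i$ can always be written as $q=p+\delta\standard{i}$ with $\delta\neq 0$ (and $\delta=0$ when $p=q$).

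For the ``only if'' direction, I will assume $f\in\demandtype(\demandcover)$ and fix $p$, an item $i$, and $\delta>0$. Setting $q=p+\delta\standard{i}$, the vectors $p$ and $q$ differ in exactly one coordinate. \Cref{lem:axis_parallel_demand_sets} then yields, for any $S\in\demandset{f}{p}$, some $T\in\demandset{f}{q}\cap\neighbors{\demandcover}{S}$, which is the required condition.

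For the ``if'' direction, I will assume the stated condition for every $\delta>0$ and verify the hypothesis of \Cref{lem:axis_parallel_demand_sets} for an arbitrary pair $p,q$ differing in at most one coordinate $i$. If $q=p$, take $T=S$. If $q=p+\delta\standard{i}$ with $\delta>0$, the assumption applies directly.

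The only point needing care is the remaining case $q=p-\delta\standard{i}$ with $\delta>0$, because the restated condition mentions only price increases. My first attempt is to avoid this case entirely by invoking the weaker characterization of \Cref{thm:baldwin_singletons}. Given $p,q$ in UDRs with unique demands $\{S\}$ and $\{T\}$, I relabel them so that the second vector is obtained from the first by increasing coordinate $i$. The condition then produces a demanded bundle at the second vector adjacent to the bundle demanded at the first, and uniqueness forces that bundle to be the unique one. Adjacency is symmetric because $\demandcover$ is closed under negation: $\chi_S-\chi_T\in\demandcover$ if and only if $\chi_T-\chi_S\in\demandcover$. So in either orientation we obtain $T\in\neighbors{\demandcover}{S}$, and \Cref{thm:baldwin_singletons} gives $f\in\demandtype(\demandcover)$.

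I expect this symmetry step to be the only real obstacle. The rest is definitional bookkeeping.
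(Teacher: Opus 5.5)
Your proposal is correct and matches the paper's approach: the paper presents this lemma as a restatement of \Cref{lem:axis_parallel_demand_sets}, whose ``if'' direction itself goes through \Cref{thm:baldwin_singletons}. Your explicit handling of price decreases, using uniqueness of demand in UDRs together with closure of $\demandcover$ under negation, fills in the one detail the paper leaves implicit.
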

This statement fits better with some of our set function classes definitions,  and we therefore use it in some of the following proofs.

Another useful tool is the following lemma, which facilitates the identification of $V(f)$ given $f$. It shows that the vectors in $V(f)$ are exactly those giving the difference between demanded bundles in cases where exactly two bundles are demanded.

\begin{lemma} \label{lem:normal_vec_price_vec}
    For any set function $f:2^{[n]}\rightarrow \reals_{\ge 0}$ and any vector $v\in \{-1,0,1\}^n$, it holds that $v\in V(f)$ if and only if there exists a price vector $p\in \reals^n$ such that $\demandset{f}{p}=\{S,T\}$ where $S,T\subseteq [n]$ such that $\chi_S-\chi_T = v$.
\end{lemma}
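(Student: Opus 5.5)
The plan is to prove the two directions separately. Both rest on one fact: for finitely many bundles $R$, the utilities $u_B(q,R)=f(R)-\sum_{i\in R}q_i$ are affine in $q$. Hence the set where two distinct bundles $S\neq T$ tie is the hyperplane
\[
H_{S,T}=\{q : u_B(q,S)=u_B(q,T)\},
\]
whose normal is $\chi_S-\chi_T$. For the direction of $v$ in the converse, I take the convention that $V(f)$ is closed under sign, since both orientations of a facet normal are included.

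\emph{($\Leftarrow$)} Suppose $\demandset{f}{p}=\{S,T\}$ with $\chi_S-\chi_T=v$.
\begin{enumerate}
\item Every bundle $R\notin\{S,T\}$ satisfies $u_B(p,R)<u_B(p,S)$. There are finitely many bundles and the utilities are continuous, so there is a ball $B$ around $p$ with $\demandset{f}{q}\subseteq\{S,T\}$ for all $q\in B$.
\item Inside $B$, the LIP is therefore exactly $H_{S,T}\cap B$.
\item Moving from $p$ in direction $\pm v$ changes $u_B(\cdot,S)-u_B(\cdot,T)$ at rate $\mp\|v\|^2\neq0$. So the two open half-balls of $B\setminus H_{S,T}$ lie in the UDR of $T$ and the UDR of $S$, respectively.
\item Hence $H_{S,T}\cap B$ is an $(n-1)$-dimensional piece of the LIP separating these two UDRs. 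It lies in a facet whose normal is $\chi_S-\chi_T=v$, so $v\in V(f)$.
\end{enumerate}

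\emph{($\Rightarrow$)} Let $v\in V(f)$ be normal to a facet $F$, and let $S,T$ be the bundles uniquely demanded on its two sides.
\begin{enumerate}
\item By \Cref{obs:change_in_demand_normal}, $\chi_S-\chi_T$ is normal to $F$, so after possibly swapping $S$ and $T$ we have $\chi_S-\chi_T=v$.
\item On the relative interior of $F$, both $S$ and $T$ are demanded. It remains to find a point of $F$ at which no third bundle $R$ is also demanded.
\item A point where $R$ ties with $S$ and $T$ lies in $H_{S,T}\cap H_{S,R}$. The vectors $\chi_S-\chi_T$ and $\chi_S-\chi_R$ are distinct nonzero $\{-1,0,1\}$-vectors that cannot be parallel: if $\chi_S-\chi_R=-(\chi_S-\chi_T)$, then $\chi_R=2\chi_S-\chi_T$, which is not a $0/1$-vector when $S\neq T$. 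So this intersection is an affine subspace of dimension $n-2$.
\item There are finitely many such $R$, so their union is a finite union of $(n-2)$-dimensional sets. It cannot cover the $(n-1)$-dimensional relative interior of $F$. Choosing $p$ in the relative interior of $F$ outside this union gives $\demandset{f}{p}=\{S,T\}$.
\end{enumerate}

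The step I expect to need the most care is the dimension argument in the ($\Rightarrow$) direction. In particular, I need the paper's informal notion of ``facet'' to be pinned down: an $(n-1)$-dimensional relatively open piece of $H_{S,T}$ on which demand is exactly the two adjacent bundles, except on lower-dimensional subsets. With that reading, both directions reduce to the continuity and genericity facts above.
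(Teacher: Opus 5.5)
Your proof is correct, but it takes a different route from the paper. The paper's proof is a reduction to the duality theorem of Baldwin and Klemperer (their Proposition 2.20). That theorem gives a bijection between two-element demand sets $\{S,T\}$ and facets $F_{S,T}$. It also says that $\demandset{f}{p}=\{S,T\}$ holds exactly on the relative interior of $F_{S,T}$, and that $F_{S,T}$ is orthogonal to $\chi_S-\chi_T$. Both directions then follow at once, since an $(n-1)$-dimensional facet determines its normal up to sign.

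You instead argue from first principles, using only that each $u_B(\cdot,R)$ is affine. For $(\Leftarrow)$ you describe the local picture around a price with two-element demand; for $(\Rightarrow)$ you use a dimension count on the relative interior of the facet.

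What each route buys:
\begin{itemize}
\item The paper's route is a two-line argument and is rigorous relative to the formal definition of facets as $(n-1)$-cells of the price complex. Under that definition demand is constant on the relative interior of a facet, so your genericity step in $(\Rightarrow)$ is not actually needed.
\item Your route is self-contained and tolerates the paper's informal description of facets. It also makes explicit where the $2^{[n]}$ domain enters. Your observation that $2\chi_S-\chi_T\notin\{0,1\}^n$, i.e.\ that no three cube vertices are collinear, is the same fact behind the paper's footnote that a demand-complex cell is one-dimensional iff the demand set has exactly two elements.
\end{itemize}

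Two small steps you leave implicit are easy to fill.
\begin{itemize}
\item In step 1 of $(\Rightarrow)$: two nonzero $\{-1,0,1\}$-vectors normal to the same hyperplane are equal up to sign, because the scaling factor must be $\pm1$.
\item In step 4 of $(\Leftarrow)$: the LIP is a finite union of closed facets covering the relatively open set $H_{S,T}\cap B$. Hence some facet meets it in an $(n-1)$-dimensional set, so that facet's affine span is $H_{S,T}$ and its normal is $\pm v$.
\end{itemize}
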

\begin{proof}
The result follows from Proposition 2.20, ``duality'', in \citet{demand_types}.\footnote{To translate from the language of that result to the present setting, observe that facets are $(n-1)$-dimensional ``price complex cells'' (Definition 2.5 of \citet{demand_types}).  Next, the  ``demand complex cells'' (Definition 2.15 of \citet{demand_types}) are the sets $\mbox{conv}\{\chi_S:S\in \demandset{f}{p}\}$, which (in our setting of $2^{[n]}$ domain) are 1-dimensional if and only if $\demandset{f}{p}$ contains exactly two elements.}  This shows  (main part and additional parts 1 and 4), that there is a bijective correspondence between two-element demand sets $\demandset{f}{p}=\{S,T\}$, and facets; label the corresponding facet $F_{S,T}$.  Moreover, a price $p$ is in the relative interior of $F_{S,T}$ if and only if $\demandset{f}{p}=\{S,T\}$, and $(p'-p)\cdot(\chi_S-\chi_T)=0$ for all $p,p'$ in $F_{S,T}$.

But by definition, $v\in V(f)$ if and only if there exists a facet $F\subset \reals^n$ with $(p'-p)\cdot v=0$ for all $p,p'\in F$.  And because $F$ is $(n-1)$-dimensional, this defines $v$ up to sign, demonstrating that $v=\pm(\chi_S-\chi_T)$; we may now re-order $S$ and $T$ if necessary.
\end{proof}

\subsection{Proofs of Minimality/Exactness} \label{app:demand_covers_minimality}
In this section we show the minimality of the demand-cover families presented in \Cref{sec:demand-covers}. 

\subsubsection{Minimality of the Demand-Cover Family for Gross Substitutes} 
\begin{lemma} \label{lem:gs_cover_minimality}
    The demand-cover family from \Cref{prop:gross_covers}(\ref{item:gc_cover}) is minimal for the class of gross substitutes functions.
\end{lemma}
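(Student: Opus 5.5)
We read the statement as referring to the gross-substitutes cover of \Cref{prop:gross_covers}(\ref{item:gs_cover}), i.e.\ the vectors in $\{-1,0,1\}^n$ with at most one positive and at most one negative coordinate; the reference to (\ref{item:gc_cover}) appears to be a typo. That GS is contained in the demand type of this cover is the result of \citet{demand_types}. By the remark after \Cref{def:min_cover}, the minimal cover of a class $C$ is $\bigcup_{f\in C}V(f)$. So minimality amounts to showing that every vector $v$ of the cover lies in $V(f)$ for some GS function $f$ on $[n]$.

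Since $V(f)$ is closed under negation (a normal is defined only up to sign), it suffices to treat one representative of each $\pm$ pair. Up to sign, the nonzero vectors of the cover are the standard vectors $\standard{i}$ and the differences $\standard{i}-\standard{j}$ with $i\neq j$. (We ignore the zero vector, which is never a facet normal; presumably the cover is meant to consist of nonzero vectors.)

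In each case the tool is \Cref{lem:normal_vec_price_vec}: it is enough to exhibit a GS function $f$ and a price vector $p$ with $\demandset{f}{p}=\{S,T\}$ exactly and $\chi_S-\chi_T=v$.

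\textbf{Case $v=\standard{i}$.} Take any GS function, for instance an additive one. The argument of \Cref{lem:standard_vecs_cover} applies directly. Set $p_i=f(\{i\})-f(\emptyset)$ and $p_j=M$ for $j\ne i$, with $M$ exceeding $\max_S f(S)$. Then any bundle containing some $j\neq i$ has negative utility relative to $\emptyset$. Hence the demand set is exactly $\{\emptyset,\{i\}\}$, giving $\standard{i}\in V(f)$.

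\textbf{Case $v=\standard{i}-\standard{j}$.} Take the unit-demand function $f(S)=\indicator[S\neq\emptyset]$, which is GS. Choose prices $p_i=p_j=1/2$ and $p_k=2$ for every $k\notin\{i,j\}$. Then:
\begin{itemize}
\item $\{i\}$ and $\{j\}$ each yield utility $1/2$;
\item $\emptyset$ and $\{i,j\}$ yield $0$;
\item every bundle containing some $k\notin\{i,j\}$ yields strictly negative utility.
\end{itemize}
Thus $\demandset{f}{p}=\{\{i\},\{j\}\}$, and \Cref{lem:normal_vec_price_vec} gives $\standard{i}-\standard{j}\in V(f)$.

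Taking the union over these witnesses covers every vector of the cover, which proves minimality. Combined with the containment from \citet{demand_types}, this also yields exactness.

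The only delicate point is checking that each demand set is \emph{exactly} the two claimed bundles, with no ties to a third bundle. \Cref{lem:normal_vec_price_vec} needs a two-element demand set, and a tie would only exhibit a lower-dimensional face rather than a facet. The explicit utility computations above handle this, and the large prices on irrelevant items remove all other bundles.
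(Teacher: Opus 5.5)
Your proposal is correct and follows essentially the same route as the paper. The paper also reads the cover as the gross-substitutes one, handles the single-nonzero-coordinate case via \Cref{lem:standard_vecs_cover}, and treats $\standard{i}-\standard{j}$ with the unit-demand function $f(S)=\indicator[S\neq\emptyset]$ at prices low on $i,j$ and high elsewhere, concluding with \Cref{lem:normal_vec_price_vec}. The only differences are cosmetic: your prices are $1/2$ and $2$ rather than the paper's $0.1$ and $1.1$, and you add explicit remarks on the typo and on the zero vector.
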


\begin{proof}
    Fix any $n\in \positivenats$, and let $v \in \{-1,0,1\}^n$ be a vector from the demand-cover family in \Cref{prop:gross_covers}(\ref{item:gc_cover}) (i.e., a vector with at most one positive entry and at most one negative entry). We need to show a gross substitutes function $f:2^{[n]}\rightarrow \reals_{\ge 0}$ such that $v\in V(f)$. 
    If $v$ has only one non-zero coordinate, then any function (and in particular any gross substitutes function) satisfies $v\in V(f)$, and we are done.

    Otherwise, it holds that $v=\standard{i}-\standard{j}$ for some $i\ne j\in [n]$. Consider the function $f:2^{[n]}\rightarrow \reals_{\ge 0}$ defined by 
    \[
    f(S) = \indicator[S\ne \emptyset].
    \]
    This function is unit demand\footnote{A set function $f:2^{[n]}\rightarrow \reals_{\ge 0}$ is unit demand if, for every $S\subseteq [n]$ it holds that 
$f(S) \;=\; \max_{i \in S} f(\{i\})$. This is a known strict-subclass of gross substitutes\citep{gul1999walrasian}.}, and therefore gross substitutes.
    Let $p\in \reals^n$ be the price vector defined as 
    \[
    p_k = \indicator[k\notin \{i,j\}] + 0.1
    \]
    It holds that $\demandset{f}{p}=\{\{i\}, \{j\}\}$, and therefore $v\in V(f)$ by \Cref{lem:normal_vec_price_vec}.
\end{proof}

\subsubsection{Minimality of the Demand-Cover Family for Gross Complements}
\begin{lemma}\label{lem:supermodular_cover_minimality}
    The demand-cover family from \Cref{prop:gross_covers}(\ref{item:gc_cover}) is minimal for the class of gross complements functions.
\end{lemma}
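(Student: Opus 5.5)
The plan mirrors the proof of \Cref{lem:gs_cover_minimality}. Fix $n\in\positivenats$ and a vector $v\in\{-1,0,1\}^n$ whose non-zero coordinates all share one sign. We must exhibit a supermodular (equivalently, gross complements) function $f:2^{[n]}\rightarrow\reals_{\ge 0}$ with $v\in V(f)$. Since $V(f)$ is closed under negation, we may assume without loss of generality that $v=\chi_S$ for some nonempty $S\subseteq[n]$. The case $v=0$ is excluded, since $0$ is never a facet normal.

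By \Cref{lem:normal_vec_price_vec}, it suffices to find a supermodular $f$ and a price vector $p$ with $\demandset{f}{p}=\{\emptyset,S\}$.

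\textbf{Choice of $f$.} Take the ``all-or-nothing'' indicator
\[
f(T)=\indicator[S\subseteq T].
\]
This is supermodular. For $T\subseteq T'$ and $j\notin T'$, the marginal $f(j\mid T)$ equals $1$ exactly when $T\cup\{j\}\supseteq S$ but $T\not\supseteq S$. In that case $S\setminus T=\{j\}$, so $S\setminus T'\subseteq\{j\}$. Hence either $T'\supseteq S$ (impossible, since $j\in S$ and $j\notin T'$) or $f(j\mid T')=1$. So marginals are non-decreasing. The function is also monotone and nonnegative.

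\textbf{Choice of $p$.} Set $p_k=1/|S|$ for $k\in S$ and $p_k=1$ for $k\notin S$. Then:
\begin{itemize}
\item $u_B(p,\emptyset)=0$ and $u_B(p,S)=1-1=0$.
\item Any $T$ with $T\not\supseteq S$ and $T\neq\emptyset$ has $f(T)=0$ and positive price, so $u_B(p,T)<0$.
\item Any $T\supsetneq S$ pays an extra positive price for items outside $S$, so $u_B(p,T)<0$.
\end{itemize}
Thus $\demandset{f}{p}=\{\emptyset,S\}$, giving $\chi_S\in V(f)$ by \Cref{lem:normal_vec_price_vec}.

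\textbf{Main obstacle.} The only delicate point is verifying that the demand set is exactly the two bundles $\emptyset$ and $S$, with no third bundle tying. The strict positivity of all prices handles this.

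Finally, the inclusion $V(f)\subseteq$ cover for every supermodular $f$ is already given by \Cref{prop:gross_covers} via \citet{demand_types}. Together with the construction above, this shows every vector of the family is attained, so the cover is minimal.
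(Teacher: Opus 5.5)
Your proposal is correct and takes essentially the same route as the paper: reduce via \Cref{lem:normal_vec_price_vec} to exhibiting a supermodular $f$ and prices at which the demand set is exactly $\{\emptyset,S\}$. The only difference is the witness function: you use the all-or-nothing function $\indicator[S\subseteq T]$ tailored to $v$, whereas the paper uses the single symmetric function $f(T)=|T|^2$ for every $v$, with prices $|S|$ on $S$ and $2n^2$ off $S$ (your explicit exclusion of $v=0$ is a small piece of care the paper omits).
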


\begin{proof}
    Fix any $n\in \positivenats$, and let $v \in \{-1,0,1\}^n$ be a vector from the demand-cover family in \Cref{prop:gross_covers}(\ref{item:gc_cover}) (i.e., a vector whose non-zero coordinate entries are all of the same sign). Consider the function $f:2^{[n]}\rightarrow \reals_{\ge 0}$ defined by
    \[
    f(S) = |S|^2.
    \]
    $f$ is clearly supermodular. Denote by $S$ the set of non-zero coordinates of $v$, and consider the price vector 
    $p\in \reals_{\ge 0}$ defined by 
    \[
    p_i = \begin{cases}
        \frac{f(S)}{|S|} &\text{if } i\in S \\
        2n^2 & \text{else.}
    \end{cases}
    \]
    It holds that $\demandset{f}{p}=\{\emptyset, S\}$, and therefore $v\in V(f)$, as needed.
\end{proof}

\subsubsection{Minimality of the Demand-Cover Family for GSC+}

\begin{lemma}\label{lem:gsc+_cover_minimality}
    The demand-cover family from \Cref{prop:gsc+} is minimal for the class of GSC+ functions.
\end{lemma}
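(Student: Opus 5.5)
Minimality here means that every vector in the claimed cover appears in $V(f)$ for some GSC+ function $f$. The cover in \Cref{prop:gsc+} consists of the vectors in $\{-1,0,1\}^n$ with at most two non-zero coordinates, which we take as given from that proposition. So fix $n$ and such a vector $v$, and exhibit a GSC+ function $f$ and prices $p$ with $\demandset{f}{p}=\{S,T\}$ and $\chi_S-\chi_T=v$. Then \Cref{lem:normal_vec_price_vec} gives $v\in V(f)$. Since $V(f)=-V(f)$, it suffices to handle $v$ up to sign. This leaves three cases: a single non-zero entry, $v=\standard{i}-\standard{j}$, and $v=\standard{i}+\standard{j}$.

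The first two cases are immediate from earlier results.

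\emph{Single non-zero entry.} By \Cref{lem:standard_vecs_cover}, $\standard{i}\in V(f)$ for every $f$, so any GSC+ function works (for instance an additive one).

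\emph{$v=\standard{i}-\standard{j}$.} The proof of \Cref{lem:gs_cover_minimality} gives a GS (unit-demand) function with $v\in V(f)$. GS is contained in GSC+, since the GS cover of \Cref{prop:gross_covers} is a subset of the GSC+ cover, so this function is GSC+.

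\emph{$v=\standard{i}+\standard{j}$.} Here we need complementarity confined to pairs, and we take a function that only cares about items $i$ and $j$: $f(S)=\indicator[\{i,j\}\subseteq S]$. Its complementarity involves only two goods, so it should be GSC+.
\begin{itemize}
\item \emph{Membership in GSC+.} The cleanest route is to check that $f$ is GSC in the sense of \Cref{def:uncommon-set-functions}, and then use the containment of GSC in GSC+ from \Cref{prop:gsc+}. Take the partition $A_1=\{i\}$, $A_2=[n]\setminus\{i\}$. Items $k\notin\{i,j\}$ are dummies; $j$ is complementary to $i$; there are no substitutes within $A_2$. The GSC condition then reduces to: raising the price of one item never causes demand of another item across the partition to increase. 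This is clear, because the demanded bundle is always $\{i,j\}$ plus those dummies with negative price, or $\emptyset$ plus those same dummies.
\item \emph{A direct alternative.} Alternatively, verify the condition of \Cref{def:our-set-functions} directly. Under a single price increase, demand either drops $\{i,j\}$, which removes item $i$ plus one more item, or drops a single dummy. In both cases $|(S\setminus\{i\})\triangle T|\le 1$.
\item \emph{Price vector.} Set $p_i=p_j=1/2$ and $p_k=1$ for $k\notin\{i,j\}$. Then $u_B(p,\emptyset)=u_B(p,\{i,j\})=0$, and every other bundle has strictly negative utility. Hence $\demandset{f}{p}=\{\emptyset,\{i,j\}\}$.
\end{itemize}

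The main obstacle is the membership check in the third case: the GSC condition must be verified with care for every item and every price change, including the edge cases where a price change leaves demand indifferent. If this becomes messy, I would instead use $n=2$ embedded with dummy items. In that setting every function on two items has cover within $\{-1,0,1\}^2$, all of whose vectors have at most two non-zeros. Adding items with additive values should keep the facet normals supported on $\{i,j\}$ or on single coordinates. This can be argued via \Cref{lem:normal_vec_price_vec}: a two-element demand set differs only in $\{i,j\}$ or in a single dummy.
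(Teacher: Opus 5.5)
Your proof is correct and has the same structure as the paper's. The single-entry and $\standard{i}-\standard{j}$ cases come from the GS witness of \Cref{lem:gs_cover_minimality}, and the complementary pair is handled by an explicit witness with explicit prices plus \Cref{lem:normal_vec_price_vec}. The only difference is the witness for $\standard{i}+\standard{j}$: the paper's proof of this lemma uses $f(S)=\indicator[|S|\ge 2]$ and just asserts it is GSC+, which takes a short argument to rule out normals with three or more non-zeros. Your $\indicator[\{i,j\}\subseteq S]$ is the same function the paper uses in its proof of \Cref{prop:gsc+}, is easier to certify (your separability argument via \Cref{lem:normal_vec_price_vec} does it cleanly), and is even GSC.
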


\begin{proof}
    Fix any $n\in \positivenats$, and let $v \in \{-1,0,1\}^n$ be a vector from the demand-cover family in \Cref{prop:gsc+} (i.e., a vector with at most $2$ non-zero entries).
    Unless $v$ contains $2$ positive entries, or $2$ negative entries, there exists a gross substitutes (and thus, GSC+ as well) function $f$ with $v\in V(f)$, as needed (from \Cref{lem:gs_cover_minimality}.
    
    Consider the function $f:2^{[n]}\rightarrow \reals_{\ge 0}$ defined by
    \[
    f(S) = \begin{cases}
        0 & \text{if }|S|<2\\
        1 & \text{if }|S|\ge 2.
    \end{cases}
    \]
    It is easy to see that $f$ is GSC+. Denote by $S$ the set of non-zero coordinates of $v$, and consider the price vector 
    $p\in \reals_{\ge 0}$ defined by 
    \[
    p_i = \begin{cases}
        \frac{f(S)}{|S|} &\text{if } i\in S \\
        2n^2 & \text{else.}
    \end{cases}
    \]
    It holds that $\demandset{f}{p}=\{\emptyset, S\}$, and therefore $v\in V(f)$, as needed.
\end{proof}
\subsubsection{Minimality of the Demand-Cover Family for $\Delta$-Substitutes}
\begin{lemma}\label{lem:delta_sub_cover_min}
    The demand-cover family from \Cref{prop:near-substitutes_demand_type} is minimal for the class of $\Delta$-substitutes.
\end{lemma}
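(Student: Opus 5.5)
The plan follows the template of \Cref{lem:gs_cover_minimality} and \Cref{lem:gsc+_cover_minimality}. Fix $n$ and a vector $v\in\{-1,0,1\}^n$ with at most $\Delta$ positive and at most $\Delta$ negative entries. By \Cref{lem:normal_vec_price_vec}, it suffices to exhibit a $\Delta$-substitutes function $f$ and prices $p$ with $\demandset{f}{p}=\{S,T\}$ and $\chi_S-\chi_T=v$. Let $P$ be the set of positive coordinates of $v$ and $N$ the set of negative ones, so $|P|,|N|\le\Delta$. Take $S=P$ and $T=N$. These are disjoint, and $\chi_P-\chi_N=v$. If $v$ has only one sign, say $N=\emptyset$, then $T=\emptyset$.

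For the function I would use a ``unit demand over packages'' valuation. Set
\[
f(X)=\max\bigl(\indicator[P\subseteq X\wedge P\neq\emptyset],\ \indicator[N\subseteq X\wedge N\neq\emptyset]\bigr),
\]
or, in the one-sign case, simply $f(X)=\indicator[P\subseteq X]$. Choose prices as follows: items outside $P\cup N$ get a huge price such as $2$. Items in $P$ get price $1/(2|P|)$ each, and items in $N$ get price $1/(2|N|)$ each. Then $P$ and $N$ each yield utility $1/2$. Any bundle buying both costs $1$ and gets value $1$, hence utility $0$. Any proper part of a package gets value $0$ at positive cost, and $\emptyset$ yields $0$. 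So $\demandset{f}{p}=\{P,N\}$. In the one-sign case the same choice of prices makes $\emptyset$ and $P$ tied at $0$, with everything else strictly worse. This is exactly the pattern of \Cref{lem:gsc+_cover_minimality}.

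The remaining step, and the main obstacle, is to verify that $f$ is $\Delta$-substitutes. Equivalently, I need $V(f)\subseteq$ the cover of \Cref{prop:near-substitutes_demand_type}, which is already known to demand-cover the class by \citet{near_substitutes}. I would check this directly through the single-price-change definition (\Cref{lem:axis_parallel_equivalent}). Here $f$ is a max of two ``package'' valuations, and every demanded bundle at any price is, up to dropping zero-priced irrelevant items, either $\emptyset$, $P$, or $N$. The subtlety is items of negative or zero price, which may be included freely. I would handle this by noting that an item outside $P\cup N$ has zero marginal value in every bundle, so it is demanded iff its price is $\le 0$. This part decouples, and only contributes changes of the form $\pm\standard{k}$.

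Within $P\cup N$, the optimal bundles are drawn from a few candidates: $\emptyset$, $P$, $N$, $P\cup N$, and their supersets by negatively priced items. Raising one price by $\delta$ moves demand between two such candidates that differ by at most $\Delta$ added and $\Delta$ removed items. The one exception is a move from $P\cup N$ (or $\emptyset$) to $\emptyset$ (or $P\cup N$). That move could exceed $\Delta$, and it needs care. I expect to argue that a monotone single-price increase never forces jumping between $\emptyset$ and $P\cup N$ without an intermediate tie at $P$ or $N$, because of the max structure. If that argument becomes messy, I would fall back on a simpler construction: take $f(X)=\indicator[P\subseteq X]+\indicator[N\subseteq X]$ adjusted so that the two packages are substitutes. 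Alternatively, I would cite that unit-demand over packages of size $\le\Delta$ is $\Delta$-substitutes, as in \citet{near_substitutes}.
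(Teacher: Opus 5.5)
Your construction is the same as the paper's. In the mixed-sign case the paper takes exactly $f(S)=\indicator[P\subseteq S\lor N\subseteq S]$, with prices $\frac{1}{2|P|}$ on $P$, $\frac{1}{2|N|}$ on $N$ and $2$ elsewhere, and concludes via \Cref{lem:normal_vec_price_vec}. In the one-sign case it uses $\indicator[|S|\ge k]$ with price $1/k$ on the support, which does the same job as your $\indicator[P\subseteq X]$.

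There is one slip. In the one-sign case your prices $\frac{1}{2|P|}$ give $P$ utility $\frac12>0=u_B(p,\emptyset)$, so the demand set is $\{P\}$. You need price $\frac{1}{|P|}$ on $P$ to get $\demandset{f}{p}=\{\emptyset,P\}$.

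The paper simply asserts as clear the step you flag as the main obstacle, and your worry about it resolves directly. If a demanded bundle contains $P\cup N$, every price in $P\cup N$ must be nonpositive. Otherwise dropping a positively priced item of $P$ keeps $N$, hence value $1$, and strictly raises utility; the same holds symmetrically for $N$. At such prices every value-$0$ bundle is worse by at least $1$. So no tie can arise between $\emptyset$ (or any value-$0$ bundle) and a superset of $P\cup N$. The remaining ties are a routine check, and your fallback constructions are unnecessary.
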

\begin{proof}
    Fix any $\Delta, b\in \positivenats$, and let $v\in \{-1, 0,1\}^n$ be a vector from the demand-cover family in \Cref{prop:near-substitutes_demand_type} (i.e., a vector with at most $\Delta$ positive entries and at most $\Delta$ negative entries). \emph{Suppose $v$ only contains non-positive entries or non-negative entries.} Denote by $k$ the number of non-zero entries in $v$, and observe that $k\le \Delta$. Consider the set function $f:2^{[n]}\to \reals_{\ge 0}$ defined by $f(S) = \indicator[|S| \ge k]$. This functions is clearly $\Delta$-substitutes. We claim that $v\in N(f)$. Indeed, the price vector $p\in \reals^n$ defined by \[
    p_i = \begin{cases}
        1/k & v_i \ne 0 \\
        2 & \text{else},
    \end{cases}
    \]
    satisfies $\demandset{f}{p} = \{\emptyset, \{i\in [n]\mid v_i \ne 0\}\}$, and thus by \Cref{lem:normal_vec_price_vec}, $v\in N(f)$, as needed.
    
    Denote by $P=\{i\in [n]\mid v_i =1\}$ the set of positive coordinates, and by $N=\{i\in [n]\mid v_i = -1\}$ the set of negative coordinates. Consider the set function $f:2^{[n]}\to \reals_{\ge 0}$ defined by $f(S) = \indicator[P\subseteq S \lor N \subseteq S]$. Firstly, it is clearly $\Delta$-substitutes, since $|P|, |N|\le \Delta$. Secondly, consider the price vector $p\in \reals^n$ defined by 
    \[
    p_i = \begin{cases}
        \frac{1}{2|P|} & i\in P \\
        \frac{1}{2|N|} & i\in N \\
        2 & \text{else}.
    \end{cases}
    \]
    It holds that $\demandset{f}{p}=\{P,N\}$, and thus by \Cref{lem:normal_vec_price_vec}, $v\in N(f)$, as needed.
\end{proof}
\subsubsection{Proof of \Cref{prop:gsc+}}\label{app:gsc+}
\gscplus*
\begin{proof}
    The cover in \Cref{prop:gsc+} covers all GSC+ functions by \Cref{lem:axis_parallel_equivalent}, and all GSC functions by \citet{baldwinworkingpaper}, footnote 19.
    It remains to show that there are no redundant vectors, thus for each vector $v\in \{-1,0,1\}^n$ we show a GSC (and hence, GSC+) function $f:2^{[n]}\to \reals_{\ge 0}$ such that $v\in V(f)$. Since GS is contained in GSC+, and the cover from \Cref{prop:gross_covers}(\Cref{item:gs_cover}) is exact, we only need to handle the case of $v=\standard{i}+\standard{j}$ for some $i\ne j\in [n]$.
    Consider the function $f:2^{[n]}\to \reals_{\ge 0}$ defined by 
    \[
    f(S) = \indicator[\{i,j\} \subseteq S].
    \]
    It is clearly GSC (with partition $\{i\}$, $[n]\setminus \{i\}$), and at prices \[
    p_k = \begin{cases}
        1/2 & k=i,j \\
        2 &\text{else},
    \end{cases}
    \]
    it holds that $\demandset{f}{p}=\{\emptyset, \{i,j\}\}$, and thus, by \Cref{lem:normal_vec_price_vec}, $v\in V(f)$, as needed.
\end{proof}
\subsubsection{Proof of \Cref{prop:gooddemand_type}}\label{app:asc_exact}
The fact that ASC is exactly the demand-type family defined by the cover in \Cref{prop:gooddemand_type} is implied by \Cref{lem:axis_parallel_equivalent}. We need to show that the covers contain no redundant vectors. This is implied by our proof of \Cref{prop:ultra_demand}, where we first show that all ultra functions are ASC, and then show that the cover is minimal for ultra, and thus also for ASC.

\subsubsection{ASC is the Minimal Containing Demand-Type Family of Ultra}\label{sec:asc_min_ultra}
\ultrademandcover*
\begin{proof} [Proof of \Cref{prop:ultra_demand}]
    We need to show two things: (1) that \gooddemand\ contains ultra, and (2) that \gooddemand\ is minimal, subject to (1).
    
    \textit{(1) \gooddemand\ contains ultra:}
    Let $f:2^{[n]}\rightarrow \reals_{\ge 0}$ be an ultra function, and let $v\in V(f)$. We claim that $v$ satisfies one of the conditions in \Cref{prop:gooddemand_type}. 
    By \Cref{lem:normal_vec_price_vec}, there exists a price vector $p\in \reals_{\ge 0}^n$ such that $\demandset{f}{p} = \{S, T\}$ where $\chi_S-\chi_T=v$. Assume without loss of generality that $|S| \le |T|$ (since $v$ satisfies one of the \gooddemand\ conditions if and only if $-v$ does).
    If $S\subseteq T$, then $v$ is entirely non-positive, as needed.
    Otherwise, there exists some $x\in S\setminus T$. By applying the ultra condition there exists some $y\in T\setminus S$ such 
    \[
    f(S)+f(T) \le f((S\setminus \{x\})\cup \{y\})+f((T\setminus \{y\})\cup \{x\}).
    \]
    subtracting $\sum_{i\in S} p_i+\sum_{i\in T} p_i$ from both sides yields
    \[
    f(S) - \sum_{i\in S} p_i + f(T) -\sum_{i\in T} p_i \le f((S\setminus \{x\})\cup \{y\}) - \sum_{i\in (S\setminus \{x\})\cup \{y\}}p_i + f((T\setminus \{y\})\cup \{x\})-\sum_{i\in (T\setminus \{y\})\cup \{x\}}p_i.
    \]
    Because $\demandset{f}{p}=\{S,T\}$, this is only possible if $\{(T\setminus \{y\})\cup \{x\}, (S\setminus \{x\})\cup \{y\}\} \subseteq \{S,T\}$, which implies $T=(S\setminus \{x\})\cup \{y\}$ and $v=e_x-e_y$, implying that it satisfies the second condition of \Cref{prop:gooddemand_type}.
    
    \textit{(2) \gooddemand\ is minimal such that it contains ultra:}
    Let $n\in \positivenats$ and let $v\in \{-1, 0, 1\}^n$ be a vector satisfying one of the conditions in \Cref{prop:gooddemand_type}, we will show an ultra function $f$ such that $v\in V(f)$. 
    
    If $v$ satisfies condition (2) of \Cref{prop:gooddemand_type}, then it is also a vector of the demand cover of GS, which is exact (\Cref{prop:gross_covers}(\ref{item:gc_cover})). 
    Therefore there exists some GS function $f$ such that $v\in V(f)$. 
    Since ultra contains GS, $f$ is also ultra and we are done.
    
    Otherwise, if $v$ satisfies condition (1) of \Cref{prop:gooddemand_type}, 
    let $I=\{i\in [n]\mid v_i \ne 0\}$ be the set of non-zero indices of $v$. Consider the symmetric set function $f:2^{[n]}\rightarrow \reals_{\ge 0}$ defined by
    \[
    f(S) = \begin{cases}
        0 & |S| < |I| \\
        1 & |S| \ge |I|.
    \end{cases}
    \]
    To see that $v\in V(f)$, by \Cref{lem:normal_vec_price_vec}, it suffices to show a vector of prices $p$ such that $\demandset{f}{p} = \{I,\emptyset\}$. Indeed, the vector defined by
    \[
    p_i = \begin{cases}
        \frac{1}{|I|} & i\in I \\
        2 & i \notin I,
    \end{cases}
    \]
    satisfies this.
\end{proof}
\section{Deferred Proofs from \Cref{sec:poly_many_result}}\label{app:main_result}
\subsection{Proof of \Cref{lemma:facet_piercing_measure_0}}
\facetpiercingmeasure*
\begin{proof}
    Let $B$ be the set of vectors which are not facet-piercing with respect to $f$. 
    To see that $B$ is of measure $0$, we note that, for any $c\in \reals^n$, if $\ray{c}$ intersects with more than one facet there exist $\alpha >0$ and unique sets $S_1, S_2, S_3 \subseteq \actions$ such that the agent is indifferent between $S_1, S_2$ and $S_3$, under $\alpha$, i.e., 
    \[
    \alpha \cdot f(S_1)-\sum_{i\in S_1} c_i = \alpha \cdot f(S_2)-\sum_{i\in S_2} c_i = \alpha \cdot f(S_3)-\sum_{i\in S_3} c_i.
    \]
    For any fixed distinct $S_1, S_2, S_3$, we show the system of equalities imposes a non-trivial constraint on $c$, so its solution set has measure $0$. Subtracting pairs of equalities gives
    \begin{align*}
        \langle\chi_{S_1}-\chi_{S_2}, c \rangle &= \alpha\cdot(f(S_1)-f(S_2)), \\
        \langle \chi_{S_1}-\chi_{S_3}, c \rangle &= \alpha\cdot(f(S_1)-f(S_3)).
    \end{align*}
    Since $S_1\ne S_2$ and $S_1\ne S_3$, both $\chi_{S_1}-\chi_{S_2}$ and $\chi_{S_1}-\chi_{S_3}$ are non-zero vectors, so each equation is a non-degenerate linear constraint on $(c,\alpha)\in\reals^n\times\reals$. Since $S_2 \ne S_3$, together they define a co-dimension-$2$ affine subspace of $\reals^{n+1}$. Projecting onto the $c$-coordinates (by marginalizing over $\alpha>0$) yields a set of measure $0$ in $\reals^n$. Since there are finitely many choices for $S_1, S_2, S_3$, $B$ is a finite union of measure-$0$ sets, hence also of measure $0$, as needed.
\end{proof}
\subsection{Proof of \Cref{lemma:almost_all_costs_wlg}}
\almostallcostswlg*
\begin{proof}
    Following the lead of \citet{DuettingEFK21}, define an $\varepsilon$-perturbation of a cost vector $c\in \reals_{\ge 0}^n$, to be a cost function $\hat{c}$ that satisfies $|\hat{c}_i - c_i| < \varepsilon$ for every $i\in [n]$. \citet{DuettingEFK21} show that for a small enough $\varepsilon$, $\varepsilon$-perturbations of $c$ admit at least as many critical points as $c$ itself:
    \begin{lemma} [Lemma~4.13 in \citep{DuettingEFK21}] \label{lemma:perturbation_lemma}
        For every success probability function $f$ and a cost vector $c$, there exists an $\varepsilon > 0$ such that for every $\varepsilon$-perturbation $\hat{c}$ of $c$,  there are at least as many critical points under $f$ and $\hat{c}$ as there are under $f$ and $c$.
    \end{lemma}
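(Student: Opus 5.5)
The plan is to compare the best responses under $c$ and under $\hat c$ at a few fixed probe contracts. I would choose one probe strictly inside each interval between consecutive critical points of $c$. I would then show that a small enough perturbation cannot change the success probability $f$ of the best response at any probe. Finally, I would observe that these success probabilities are strictly increasing across the probes, so the best response under $\hat c$ must change at least as often.

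First I fix the structure under $c$. Let $\alpha_1<\dots<\alpha_k$ be the critical points of $f$ and $c$ in $(0,1)$, and let $S_0,\dots,S_k$ be the best responses on the open intervals they delimit. Choose probe contracts $\beta_0<\alpha_1<\beta_1<\alpha_2<\dots<\alpha_k<\beta_k$, with $\beta_0>0$ and $\beta_k<1$.

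Each set $S$ gives a line $\ell_S(\alpha)=\alpha f(S)-c(S)$. At a probe $\beta_j$ the upper envelope is locally the single line $\ell_{S_j}$, since $\beta_j$ is not critical. Hence any set $S$ whose pair $(f(S),c(S))$ differs from $(f(S_j),c(S_j))$ satisfies $\ell_S(\beta_j)<\ell_{S_j}(\beta_j)$. To see this, suppose instead that $\ell_S(\beta_j)=\ell_{S_j}(\beta_j)$ with $\ell_S\ne\ell_{S_j}$. Then $\ell_S$ would exceed $\ell_{S_j}$ on one side of $\beta_j$, so the best response would change at $\beta_j$, a contradiction. There are finitely many sets and probes, so there is a uniform gap $g>0$ by which all such sets lose at every probe. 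I would set $\varepsilon<g/(2n)$.

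Now take any $\varepsilon$-perturbation $\hat c$. For every $S$ we have $|\hat c(S)-c(S)|<n\varepsilon<g/2$, so every agent utility moves by less than $g/2$. Let $\hat S_j$ be the best response under $\hat c$ at $\beta_j$. It cannot be a set whose line differs from $\ell_{S_j}$, because $S_j$ itself still beats such a set under $\hat c$. Therefore $f(\hat S_j)=f(S_j)$.

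The remaining step is to show $f(S_0)<f(S_1)<\dots<f(S_k)$. Consecutive best responses $S_{j-1},S_j$ meet on the envelope at $\alpha_j$ but have distinct lines, since otherwise the best response would not change there. By monotonicity of the best response, $f(S_{j-1})\le f(S_j)$; this follows from \Cref{lem:cost_incr_with_alpha} together with equality of the two lines at $\alpha_j$. If $f(S_{j-1})=f(S_j)$, the two lines would be parallel and distinct, so they could not both attain the envelope at $\alpha_j$. Hence the inequality is strict.

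Consequently the values $f(\hat S_0),\dots,f(\hat S_k)$ are pairwise distinct. The best response under $\hat c$ therefore changes at least once in each interval $(\beta_{j-1},\beta_j)$, which gives at least $k$ critical points under $\hat c$.

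The main obstacle is being careful about ties. Sets with identical $(f,c)$ pairs, the tie-breaking rule, and the claim of a strict gap at non-critical probes must all be argued precisely. The strictness of $f$ across critical points, derived from the parallel-lines argument above, is what makes the counting go through.
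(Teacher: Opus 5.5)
The proposal is correct. The paper does not prove this statement itself; it quotes Lemma~4.13 of \citet{DuettingEFK21} inside the proof of the measure-zero reduction, so there is no in-paper argument to compare with. Your probe argument is a complete, self-contained substitute for that citation:
\begin{itemize}
\item one probe $\beta_j$ in each interval between consecutive critical values;
\item a uniform utility gap $g>0$ at the probes;
\item $\varepsilon<g/(2n)$, so the perturbed best response at each probe keeps the value $f(S_j)$;
\item the strict chain $f(S_0)<\dots<f(S_k)$, which forces a change of the perturbed best response in each $(\beta_{j-1},\beta_j)$.
\end{itemize}

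Two remarks sharpen it.

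First, restricting to critical values in $(0,1)$ is not merely convenient but necessary. With $n=1$ and $f(\{1\})=c_1=1$, the envelope $\max(0,\alpha-1)$ has its only breakpoint at $\alpha=1$, and raising $c_1$ by any small amount pushes it outside $[0,1]$. Your count matches the paper's usage, which takes critical points in $(0,1)$.

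Second, the ``distinct lines'' step is cleanest if you read critical values as breakpoints of the upper envelope, which is the paper's equivalent formulation. Away from breakpoints, the slope of the envelope equals $f$ of any best response. Then ties among sets with identical $(f(S),c(S))$ and the tie-breaking rule play no role at all. The strict increase $f(S_{j-1})<f(S_j)$ is then just convexity of the envelope, with distinct slopes on either side of $\alpha_j$.
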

    The set of all $\varepsilon$-perturbations, for a small enough $\varepsilon$ per Lemma~\ref{lemma:perturbation_lemma}, is of measure $>0$, which implies the existence of $\hat{c}\in U$ which is an $\varepsilon$-perturbation of $c$, concluding the proof.
\end{proof}

\subsection{Proof of \Cref{lem:cost_incr_with_alpha}}\label{app:cost_increasing}
\brcostmonotone*
\begin{proof}
    By definition we know $f(S_1)-\frac{c(S_1)}{\alpha_1}\ge f(S_2)-\frac{c(S_2)}{\alpha_1}$ and 
    $f(S_2)-\frac{c(S_2)}{\alpha_2}\ge f(S_1)-\frac{c(S_1)}{\alpha_2}$.  Adding these inequalities and canceling duplicate terms yields 
    $\frac{c(S_1)}{\alpha_1}+\frac{c(S_2)}{\alpha_2}\le \frac{c(S_2)}{\alpha_1}+\frac{c(S_1)}{\alpha_2}$, that is, $(\alpha_2-\alpha_1)c(S_1)\le (\alpha_2-\alpha_1)c(S_2)$.  Our assumption that $\alpha_2>\alpha_1$ now provides the result.
\end{proof}
\section{On the Maximality of ASC} \label{app:asc_max}
In this appendix, we discuss whether ASC is the maximal demand-type family which admits poly-many critical points.

We conjecture that it is, subject to the constraints of dimensional-invariance and anonymity:
\begin{definition}[Anonymous demand type]\label{def:anonymous}
    Let $\mathcal{C}$ be a demand type and $\demandcover_n$ be its exact demand cover.
    We say that $\mathcal{C}$ is \emph{anonymous}, if, for any $v\in \demandcover_n$ and permutation $\sigma:[n]\to [n]$, the vector $v_\sigma$ defined as $v_\sigma = (v_{\sigma(1)},\dots, v_{\sigma(n)})$ is also in $\demandcover_n$.
    We extend this definition to anonymous demand-type families in the natural way.
\end{definition}

\begin{definition} [Dimensionally-invariant demand-type family] \label{def:consistent_fam}
    Let $\mathcal{C}$ be a demand-type family with the exact demand-cover family $\demandcoverfam = \{\demandcover_n\}_{n\in \positivenats}$. We say that $\mathcal{C}$ is \emph{dimensionally-invariant} if, for any $v\in \demandcover_n$, the  vector $(v,0) \in \{-1,0,1\}^{n+1}$ is in $\demandcover_{n+1}$.
\end{definition} 
\begin{remark}\label{remark:consistency}
    Note that \Cref{def:demand_type_fam} is extremely broad, and doesn't induce any shared structure between dimensions. Therefore, some assumption that insures shared structure between dimensions is necessary for asymptotic claims. We view \Cref{def:consistent_fam} as a natural axioms to insure homogeneity of demand types within the family. 
\end{remark}
\begin{conjecture}\label{conj:asc_max}
Let $\mathcal{C}$ be an anonymous and dimensionally-invariant demand-type family which is not contained in ASC. The worst-case number of critical points when $f\in \mathcal{C}$ is super-polynomial.  
\end{conjecture}

We complement \Cref{remark:consistency} by remarking that some assumption of the form of anonymity is also necessary for \Cref{conj:asc_max}. For example, if we were to add a single vector with $\ell_1$ norm $\le M$ to the demand cover of ASC the number of critical points would be bounded by $O(2^M n^2)$, and thus, polynomial. This is because we can ``fix'' the agent's choice over the $\le M$ elements ``affected'' by that vector, and then the same bound over the number of critical points from \Cref{thm:gooddemand_poly_crit} applies.

The reason we conjecture \Cref{conj:asc_max} is because we can show this holds locally along the linear contract line. More formally,
\begin{lemma}\label{lem:superpoly_brs}
    Let $\mathcal{C}$ be an anonymous and dimensionally-invariant demand-type family which is not contained in ASC, with the exact demand-cover family $\demandcoverfam = \{\demandcover_n\}_{n\in \positivenats}$. There exist series of set functions $\{f_n\}_{n\in \positivenats}$ and cost vectors $\{c^n\}_{n\in \positivenats}$, such that for any $n\in \positivenats$, when considering  $f_n:2^{[n]}\rightarrow \reals_{\ge 0}$ and costs $c^n\in \reals_{\ge 0}^n$, it holds that $\ray{c^n}$ crosses $2^{\Omega(\sqrt{n})}$ facets, all with normal vectors which belong to $\demandcover_n$.
\end{lemma}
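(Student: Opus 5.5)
The plan is to reduce the statement to a purely combinatorial construction: a long chain of bundles along which the agent's best response can move, with every consecutive difference a vector of $\demandcover_n$. Since $\mathcal{C}\not\subseteq$ ASC, some $\demandcover_m$ contains a vector $v$ violating both conditions of \Cref{prop:gooddemand_type}. Such a $v$ has mixed signs and is not of the form $\standard{a}-\standard{b}$. Replacing $v$ by $-v$ if necessary (demand covers are symmetric), $v$ has $P\ge 2$ positive and $Q\ge 1$ negative entries. Two structural facts then give many usable moves:
\begin{itemize}
\item By dimensional-invariance and anonymity, for every $n\ge m$ the cover $\demandcover_n$ contains every vector with exactly $P$ entries $+1$, $Q$ entries $-1$, and zeros elsewhere, together with all their negations.
\item By \Cref{lem:standard_vecs_cover}, $\demandcover_n$ also contains every $\pm\standard{i}$.
\end{itemize}
So in $\demandcover_n$ we may ``add $P$ items and remove $Q$'', ``add $Q$ and remove $P$'', or add or remove a single item.

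First step: turn a chain into $f_n$ and $c^n$. Suppose I have bundles $S_1,\dots,S_N$ and costs $c$ with
\begin{itemize}
\item $c(S_1)<c(S_2)<\dots<c(S_N)$, and
\item $\chi_{S_{t+1}}-\chi_{S_t}\in\demandcover_n$ for every $t$.
\end{itemize}
Then I choose values $f(S_t)$ increasing in $t$ and making the points $(f(S_t),c(S_t))$ the vertices of a strictly concave chain, i.e.\ strictly decreasing slopes $\frac{f(S_{t+1})-f(S_t)}{c(S_{t+1})-c(S_t)}$ after reparametrising by cost. Every $S_t$ then lies on the upper envelope of the lines $\alpha f(S_t)-c(S_t)$, in order. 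I extend to all of $2^{[n]}$ by
\[
f(S)=\max\{f(S_t): S_t\subseteq S\}\quad(\text{and }0\text{ if no }S_t\subseteq S),
\]
rescaled into $[0,1]$. This keeps $f$ monotone. Any other $S$ is weakly dominated by some $S_t\subseteq S$, which has the same value and no higher cost. A small generic perturbation of $c$, as in \Cref{lemma:facet_piercing_measure_0}, makes this domination strict and leaves only pairwise ties. Hence each transition $S_t\to S_{t+1}$ occurs at a price on $\ray{c}$ with demand set exactly $\{S_t,S_{t+1}\}$. By \Cref{lem:normal_vec_price_vec}, the facet crossed there has normal $\chi_{S_{t+1}}-\chi_{S_t}\in\demandcover_n$. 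The whole statement thus reduces to building such a chain with $N=2^{\Omega(\sqrt n)}$.

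Second step: the chain itself, which I expect to be the main obstacle. By \Cref{lem:cost_incr_with_alpha}, every step must strictly increase cost. This rules out using the lone moves $-\standard{i}$. It also means a pure ``add $P$, remove $Q$'' sequence changes $|S|$ by $P-Q$ each step when $P\ne Q$, so alone it gives only a linear-length chain. The remedy is to alternate a $(P,Q)$-move with a $(Q,P)$-move on disjoint items. Each such pair is a net $(P+Q)$-for-$(P+Q)$ swap that keeps $|S|$ fixed, and each half can be made cost-increasing by a suitable choice of costs.

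To get exponential length I would split the items into $k\approx\sqrt n$ blocks of about $\sqrt n$ items each, with geometrically separated cost scales across blocks. Within a block, the items form a monotone ``ladder'' of $\Theta(\sqrt n)$ positions, reached by repeated swaps to pricier items. The chain should implement a mixed-radix counter: the lowest block climbs its ladder, and when it can go no higher, one swap in the next block up is paired with a reset of the lower block to its cheapest position. Because the higher block's cost scale dominates, this combined move has positive net cost. This gives about $(\sqrt n)^{\sqrt n}\ge 2^{\Omega(\sqrt n)}$ states.

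The delicate point is that the reset must itself be a single move of bounded support, namely $P+Q$ items. A full reset of a lower block is not, so I would first try to store each digit in a block using $O(1)$ occupied items, so that a reset touches only $O(1)$ coordinates. If that fails, I would fall back to a reflected Gray-code ordering of the digits. There each step changes only one digit by one position, and the cost increase must be forced by pairing every step with an advance of a dedicated unary ``clock'' block of $\sqrt n$ items whose cost scale dominates. This yields a slightly weaker but still $2^{\Omega(\sqrt n)}$ count per clock cycle once the clock's costs are chosen just large enough to dominate the digit changes.
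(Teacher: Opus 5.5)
\textbf{Step 1 is fine; Step 2 has a genuine gap.} Your first step matches the paper: choose $f(S_t)$ along a strictly concave profile and extend by $f(S)=\max\{f(S_t):S_t\subseteq S\}$. One small fix is needed. Start the chain at $\emptyset$ and reach your initial configuration by single $+\standard{i}$ moves. Otherwise the ray's first crossing goes from $\emptyset$ to some $S_t$ with normal $\chi_{S_t}$, which need not lie in $\demandcover_n$ (for $2$-substitutes it does not once $|S_t|\ge 3$).

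The real problem is the second step, which you leave open, and both routes you sketch fail.
\begin{itemize}
\item In the mixed-radix counter, the obstruction is carry cascades, not the size of one block's reset. Going from $(d,b-1,\dots,b-1)$ to $(d+1,0,\dots,0)$ changes $j+1$ digits in $j+1$ disjoint blocks. So it cannot be a single move of bounded support, however few items encode each digit. If you split it into several moves, each reset lowers cost and must be bundled with a compensating increase in the same move. The carrying digit, which advances by one position only once in your encoding, cannot supply that $j$ times.
\item The Gray-code fallback fails for a counting reason. Every digit-decreasing step consumes one advance of the clock, and the clock has only $\sqrt n$ positions. It can never be reset, because its cost scale dominates everything else. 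So you get $O(\sqrt n)$ steps in total, not $2^{\Omega(\sqrt n)}$ ``per clock cycle''.
\end{itemize}

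\textbf{The missing idea.} The paper's \Cref{lem:base_sequence} serializes each carry and lets the carrying digit's own block act as the clock. Take a binary counter with $\ell\approx\sqrt n$ bits and costs $c_i=2^i$. Store bit $j$ as one item in a block of $\ell$ consecutive items: the bottom item if the bit is off, the top item if it is on.
\begin{itemize}
\item When bit $j^\star$ turns on, its item climbs one position for each lower bit $k<j^\star$ being switched off.
\item Each such move adds the new position $a$ and the bottom item of block $k$, and removes $a-1$ and the top item of block $k$. The two removed items are distinct and both lie below $a$, so $2^a$ dominates and cost strictly increases.
\item Since $j^\star\le\ell$, the block has room for these $j^\star-1$ climbs, and a final costly $1$-for-$1$ move brings the item to the top.
\end{itemize}
This yields $2^{\ell}$ sets from $\ell^2\le n$ items. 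It uses only $1$-for-$1$ and $2$-for-$2$ swaps in which the largest-index added item dominates everything removed.

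These swaps are then realized as $\demandcover_n$-moves by padding with the $n_0$ cheapest items as dummies, replenished one at a time by $+\standard{i}$. You also need this padding to turn your ``alternate $(P,Q)$ and $(Q,P)$'' idea into actual $1$-for-$1$ climbs, since alternation alone changes $P+Q$ items each way.

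Finally, the resulting count is $2^{\Theta(\sqrt n)}$, not $(\sqrt n)^{\sqrt n}$. Serializing carries forces each block to have roughly (radix)$\times$(number of digits) positions, which is why a binary counter is the right choice.
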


Note that this lemma doesn't imply the correctness of the conjecture, since our worst functions might have LIP facets with normal vectors not in $\demandcover_n$, which simply do not intersect with the linear contract ray. In our proof we present one of many possible such constructions. If there exists a construction such that all LIP facets have normal vectors in $\demandcover_n$, and not just along the linear contract ray, then \Cref{conj:asc_max} would follow.

The core of our construction for \Cref{lem:superpoly_brs} is the following sequence of sets:
\begin{lemma}\label{lem:base_sequence}
    For any $n\in \positivenats$, there exists a sequence of sets $S_1,\dots, S_k\subseteq [n]$ such that $k \ge \sqrt{2}^{\sqrt{n}}$, and for any $i\in [k-1]$, $S_i$ is yielded from $S_{i-1}$ in one of the following ways:
    \begin{enumerate}
        \item (Costly $1$-substitution) there exist $b<a\in [n]$ such that $S_{i} \setminus S_{i-1} = \{a\}$ and $S_{i-1} \setminus S_{i} = \{b\}$.
        \item (Costly $2$-substitution) there exists $a\in [n]$ and $b,c,d<a$ such that $S_{i} \setminus S_{i-1} = \{a,b\}$ and $S_{i-1} \setminus S_{i} = \{c,d\}$.
    \end{enumerate}
\end{lemma}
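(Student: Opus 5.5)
The plan is to build the sequence recursively, so that each new ``level'' doubles its length while using only $O(\sqrt{n})$ fresh elements; $\Theta(\sqrt{n})$ levels then fit inside $[n]$. Throughout I think of the members of a set as \emph{tokens} sitting at positions in $[n]$. A costly $1$-substitution moves one token from position $b$ to a larger position $a$. A costly $2$-substitution removes tokens at $c,d$, places one token at an arbitrary position $b<a$ (possibly low), and places one at a position $a$ larger than $b,c,d$. The point is that a $2$-substitution is the only move that can send a token \emph{back down}, which is what lets us repeat a pattern. The price is that it consumes one token into a high position $a$, so the number of ``low'' tokens drops by one.

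\textbf{Recursive construction.} Reserve the top $t$ elements of $[n]$ as sink elements $a_1<\dots<a_t$ and use the rest as a low region. Let $A_t$ denote a sequence that starts from a fixed configuration with $t$ low tokens, uses only low positions and the sinks $a_1,\dots,a_t$, and has length $L(t)$. The goal is to construct $A_t$ from two copies of $A_{t-1}$:
\begin{enumerate}
\item Run a copy of $A_{t-1}$ on $t-1$ of the tokens, keeping one extra token parked at a fixed low position that the copy never touches.
\item Apply a short reset phase. Its first step is one $2$-substitution that removes two low tokens (the parked one and one advanced one), sends a token to the fresh sink $a_t$, and drops the other token back to a low starting position $b$. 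This is valid because $a_t$ exceeds every low position.
\item If needed, continue with $1$-substitutions moving tokens upward into the start configuration of the next copy.
\item Run a second copy of $A_{t-1}$, with its $t-1$ low tokens placed in a fresh translate of the low region.
\end{enumerate}
Sinks already occupied are simply never touched again, and they are larger than every element involved in later moves, so every later move stays valid. This gives $L(t)\ge 2L(t-1)$, hence $L(t)\ge 2^{t}$.

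\textbf{Element budget.} Each level should need only $O(t)$ new low positions, plus one sink, for its translated copy and reset room. The total is then $\sum_{j\le t}O(j)=O(t^2)$, so we may take $t=\Theta(\sqrt{n})$, with the constant tuned so that $2^{t}\ge \sqrt{2}^{\sqrt{n}}$. It suffices that $t\ge \sqrt{n}/2$, which allows a budget of about $4t^2$ elements.

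\textbf{Main obstacle.} I expect the hard part to be the token-count and positioning bookkeeping in the reset phase. After the first copy of $A_{t-1}$, the tokens sit at \emph{high} low-positions, and we must return to a start configuration using only upward $1$-substitutions plus a single $2$-substitution that lowers one token. My first attempt is to arrange that each copy of $A_{t-1}$ ends with all but one token in a position from which upward $1$-moves reach the next copy's start. This means placing the second copy \emph{above} the first copy's final positions, inside a fresh block of $O(t)$ positions. The single downward jump is then only needed to reposition the one token that must sit below everything. If the translated copies instead force $\Theta(t^2)$ fresh elements per level, I would fall back to nesting the copies, so that the second copy reuses the first copy's positions shifted by one. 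This would still keep the total at $O(t^2)$.
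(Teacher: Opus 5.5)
Your recursion breaks when the second copy of $A_{t-1}$ is launched. By definition, that copy needs the sinks $a_1,\dots,a_{t-1}$ and $t-1$ low tokens. After the first copy, those sinks are occupied, and by your rule that occupied sinks are never touched again, they can never be freed. Each reset also permanently removes one low token. Yet unrolling $L(t)\ge 2L(t-1)$ requires $2^{t-1}-1$ resets inside $A_t$, with sink $a_{t-j}$ filled $2^{j}$ times, and only $t$ sinks exist. Translating or nesting the low region cannot repair this, because the obstruction is structural. Suppose the element $a$ of every $2$-substitution is a sink that is never removed afterwards. Then there are at most $t\le n$ such steps. Between two of them every step is a $1$-substitution, which strictly increases $\sum_{x\in S}x\le n(n+1)/2$. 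So any sequence of the kind you describe has length $O(n^3)$, far below $2^{\Omega(\sqrt n)}$.

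The missing idea is that the high token that pays for a downward move must itself later be pushed back down by an even higher token. Tokens are then conserved and reused hierarchically. The paper implements this as a binary counter on $\ell\approx\sqrt n$ bits:
\begin{itemize}
\item Bit $j$ is a single token in its own block of $\ell$ consecutive elements, with blocks ordered by $j$. It sits at the bottom of its block for $0$ and at the top for $1$.
\item When an increment turns on bit $j^\star$, its token climbs its block one element at a time.
\item Each one-step climb (remove $x$, add $x+1$) is paired with moving one lower bit's token from the top of its block back to the bottom. This is a valid $2$-substitution, since $x$ and both positions in the lower block are smaller than $x+1$.
\item Once the $j^\star-1\le \ell-1$ lower bits are cleared, a single $1$-substitution lifts token $j^\star$ to the top of its block.
\end{itemize}
Every token is later reset by the climbs of higher blocks. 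The sequence therefore passes through all $2^{\ell}$ counter states using only $\ell^2\le n$ elements.

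Your $O(t^2)$ element budget and your observation that only $2$-substitutions can move tokens down are both right. What must change is the ``sink'' for level $t$: it should be a token climbing through a block of $O(t)$ elements that is reset later, not a fixed one-shot element.
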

\begin{proof}
    Before defining $S_1, \dots, S_k$, we define $T_0,\dots, T_{m-1}$, such that $m\ge \sqrt{2}^{\sqrt{n}}$, and $c(T_0) < \dots < c(T_{m-1})$ but the transition property in \Cref{lem:base_sequence} doesn't necessarily hold. We then add intermediate sets in between the sets $T_i$, to achieve the sequence $S_1,\dots, S_k$ (which would obviously have $k \ge m$, as needed). 
    
    The idea for the construction of $T_0,\dots, T_{m-1}$ is to encode a ``binary counter'' over $\ell\le \sqrt{n}$ bits into the elements $[n]$. The $j$th least significant bit is encoded into the $\ell$ items $\{(j-1)\cdot \ell, \dots, j\cdot \ell -1\}$, such that it always holds that exactly one of these $\ell$ items is selected. More precisely, let $m=2^\ell$, for any $i\in \{0, \dots, m\}$, and $j\in [\ell]$ denote \[
    b_j(i) = \begin{cases}
        (j-1)\ell & \text{the $j$th least significant bit of $i$ is $0$} \\
         j\ell -1 & \text{else}.
    \end{cases}
    \]
    The set $T_i$ is defined by $T_i = \{b_j(i)\}_{j\in [\ell]}$.

    It remains to describe the intermediary sets we add in between the $T_i$s to satisfy the transition property. Observe that, for any $i\in [m-1]$, exactly one bit changes from $0$ in $i-1$ to $1$ (is ``turned on'') in $i$, call this bit $j^\star$. The idea is to increase by $1$ the element which corresponds to $j^\star$, and use this increase to ``turn off'' one of the $j^\star-1$ bits which are smaller than $j^\star$, in what is a costly $2$-substitution, after all bits which are smaller than $j^\star$ have been ``zeroed out'', we use a costly $1$ substitution to bring the element which corresponds to the $j^\star$ bit to its ``on'' position. More formally,  consider the set sequence $T_i^0,\dots, T_i^{j^\star}$ defined by
    \[
    T_{i-1}^k = \{b_j(i-1)\}_{ j\in [\ell]:j^\star< j} \cup \{b_{j^\star}(i-1)+k\} \cup \{b_{j}(i-1)\}_{j\in [\ell]: k\le j <j^\star} \cup \{b_{j}(i)\}_{j\in [\ell]: j\le k}.
    \]
    Observe that $T_{i-1}^0 = T_{i-1}$. Let $k\in [j^\star]$, it is easy to verify that the change between $T_{i-1}^{k-1}$ and $T_{i-1}^{k}$ is a costly $2$-substitution. Finally, the change between $T_{i-1}^{j^\star}$ and $T_{i}$ is a costly $1$-substitution, as needed.
\end{proof}
We are now ready to prove \Cref{lem:superpoly_brs}.
\begin{proof}[Proof of \Cref{lem:superpoly_brs}]
    Let $n_0\in \positivenats$ be such that there exists a function $f:2^{[n_0]}\rightarrow \reals$ such that $f\in \mathcal{C}$ but is not ASC.
    To show \Cref{lem:superpoly_brs}, for any $n> n_0$, we show a cost vector $c^n$ and a sequence $\emptyset=S_0, S_1,\dots, S_k \subseteq [n]$ such that $c^n(S_1) < \dots < c^n(S_k)$, for all $i\in [k]$ $\chi_{S_{i}} - \chi_{S_{i-1}} \in \demandcover_n$ and $k\ge \sqrt{2}^{\sqrt{n-n_0}}$. This suffices, since for any values $0<\alpha_1<\dots <\alpha_k<1$, we can find values $f(S_1), \dots, f_n(S_k)$ such that for any $i\in [k]$, $\alpha_i = \frac{c^n(S_{i}) - c^n(S_{i-1})}{f_n(S_i)-f_n(S_{i-1})}$. Then, the functions $f_n:2^{[n]}\rightarrow \reals_{\ge 0}$ defined by
    \[
    f_n(S) = \max_{i\in [k]: S_i \subseteq S} f_n(S_i),
    \]
    satisfy the conditions of \Cref{lem:superpoly_brs}, since for each $n$, the ray $\ray{c^n}$ moves through the UDRs of $S_0, S_1,\dots, S_k$, in order.

    Since there exists a function $f:2^{n_0}\rightarrow \reals_{\ge 0}$ which is in $\mathcal{C}$ but is not ASC, there exists a vector $v\in \demandcover_{n_0}$  with positive coordinates $P$ and negative coordinates $N$ such that $1\le |P|\le |N|$ and $2\le |N|$. This implies, by dimensional-invariance and anonymity, that for any $n> n_0$, a change that corresponds to adding $|P|$ items and removing $|N|$ items is covered by $\demandcover_n$.
    Let $n > n_0$.
    
    First, we define $c^n$ by $c_i^n = 2^i$. The sequence $S_1,\dots, S_k$ is derived from \Cref{lem:base_sequence}. Specifically, we use the items of $[n_0]$ as ``dummy'' items, to turn our $|P|$ for $|N|$ substitutions into costly $1$/$2$-substitutions over the items $[n]\setminus [n_0]$ per \Cref{lem:base_sequence}. To arrive from $S_0=\emptyset$ to the first set in the resulting sequence $T$, we add intermediary sets to the beginning of the sequence, where the items of $T$ are added one at a time.
    To enable our use of the items of $[n_0]$, we ensure that, for every $S_i$, $|S_i \cap [n_0]| \le \frac{n_0}{2}$, and proceed as follows.
    
    If $|P|\ge 2$, we can implement the sequence in \Cref{lem:base_sequence} over the items $[n]\setminus [n_0]$ by adding $|P|$ items and removing $|N|$ from $[n]$ in a straightforward manner. Firstly for a $1$-substitution over $[n]\setminus [n_0]$, we also add $|P|-1$ and remove $|N|-1$ of the items in $[n_0]$ (recall that $|N|\ge |P|$, so to ensure that this is possible we can always add more intermediate sets, where at each step a single item in $[n_0]$ is add). For a $2$-substitution we add $|P|-2$ and remove $|N|-2$.

    If $|P|=1$, then $|N|>|P|$. Any $1$-substitution over $[n]\setminus [n_0]$ can still be implemented by adding $|P|-1$ items of $[n_0]$ and removing $|N|-1$. A $2$-substitution is split into two parts: First, for items $a,b,c,d$ per \Cref{lem:base_sequence}, we first add $a$, remove $c,d$, and add $|P|-1$ and remove $|N|-2$ items from $[n_0]$. Second, we simply add $b$.

    Following the above strategy, we arrive at a sequence $\emptyset = S_0,S_1,\dots, S_k$ such that $k \ge \sqrt{2}^{\sqrt{n-n_0}/2}$ and for each $i\in [k-1]$, $\chi_{S_{i+1}}-\chi_{S_i}\in \demandcover_n$. It remains to show that costs are increasing. Indeed, at every change from $S_i$ to $S_{i+1}$, we are either adding a single item (which trivially increases costs), or adding and removing items, where the largest item we add is larger than all removed items. By our choice of costs, this yields an increase in costs as well, concluding the proof.
\end{proof}
\section{Limits of Demand Types}\label{app:limits}
In this appendix we show that the demand types machinery is not useful for handling \emph{coverage} or \emph{budget-additive} functions.
\begin{definition}[Coverage and budget-additive] A set function $f:2^{[n]} \rightarrow \reals_{\ge 0}$ is:
    \item \textit{Coverage} if there is a set of elements $U$, with associated positive weights  $\{w_u\}_{u\in U}$, and a mapping $h:[n] \rightarrow 2^U$ such that for every $S\subseteq [n]$, $f(S)=\sum_{u\in U } w_u \cdot \indicator[\exists i \in S.~u \in  h(i)]$.
    \item \emph{Budget additive} if there are $w_1,\ldots,w_n,B \in \mathbb{R}_{\geq 0}$ such that $f(S)=\min(B,\sum_{j\in S}w_j)$ for any set $S \subseteq [n]$.
\end{definition}
Coverage and budget-additive are incomparable subclasses of submodular.

\subsection{Coverage Functions}
The following theorem shows that the minimal containing demand type of coverage functions is extremely broad, thus ruling out any useful analysis of coverage functions via demand types.
\begin{theorem} \label{thm:coverage_demand_type}
    For any non-zero vector $v\in \{-1, 0, 1\}^n$, with at least one positive and at least one negative coordinate, there exists a coverage function $f:2^{[n]}\rightarrow \reals_{\ge 0}$ such that $v\in V(f)$.
\end{theorem}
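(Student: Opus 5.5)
We construct, for each such $v$, an explicit coverage function $f$ and a price vector $p$ with $\demandset{f}{p}=\{P,N\}$, where $P=\{i\mid v_i=1\}$ and $N=\{i\mid v_i=-1\}$. Since $\chi_P-\chi_N=v$, \Cref{lem:normal_vec_price_vec} then gives $v\in V(f)$. By hypothesis both $P$ and $N$ are nonempty, and they are disjoint.

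\emph{The construction.} Take the ground set $U=\{u_{ij}\mid i\in P,\ j\in N\}$, with every weight $w_{u_{ij}}=1$. The mapping is
\begin{itemize}
\item $h(i)=\{u_{ij}\mid j\in N\}$ for $i\in P$;
\item $h(j)=\{u_{ij}\mid i\in P\}$ for $j\in N$;
\item $h(k)=\emptyset$ for $k\notin P\cup N$.
\end{itemize}
So $u_{ij}$ is covered exactly when $i$ or $j$ is chosen. For $S$ with $a=|S\cap P|$ and $b=|S\cap N|$ this gives
\[
f(S)=|P||N|-(|P|-a)(|N|-b).
\]
Fix a small $\delta$ with $0<\delta<1$, and set $\varepsilon=\delta/|P|$ and $\varepsilon'=\delta/|N|$. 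Define the prices
\begin{itemize}
\item $p_i=|N|-\varepsilon$ for $i\in P$;
\item $p_j=|P|-\varepsilon'$ for $j\in N$;
\item $p_k=1$ for $k\notin P\cup N$.
\end{itemize}
Items outside $P\cup N$ contribute no value and have positive price, so no demanded bundle contains them. We may therefore restrict attention to $S\subseteq P\cup N$.

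\emph{The key computation.} Expanding $f(S)-\sum_{k\in S}p_k$, the linear terms $a|N|$ and $b|P|$ cancel, leaving
\[
u_B(p,S)=a\varepsilon+b\varepsilon'-ab.
\]
We compare cases.
\begin{itemize}
\item If $a,b\ge1$, then $ab\ge1$, so $u_B(p,S)\le \delta+\delta-1<\delta$.
\item If $b=0$, then $u_B(p,S)=a\varepsilon\le\delta$, with equality only when $a=|P|$, i.e.\ $S=P$.
\item If $a=0$, then symmetrically $u_B(p,S)\le\delta$, with equality only when $S=N$.
\end{itemize}
Hence the maximum utility is $\delta$, attained exactly at $S=P$ and $S=N$. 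Thus $\demandset{f}{p}=\{P,N\}$, and \Cref{lem:normal_vec_price_vec} concludes the proof.

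The only delicate point is choosing the common "discount" $\delta$. It must equalize the utilities of $P$ and $N$, which is why $\varepsilon|P|=\varepsilon'|N|=\delta$. It must also keep $\delta<1$, so that every mixed bundle, which pays the cross penalty $ab\ge1$, is strictly worse than $P$ and $N$.
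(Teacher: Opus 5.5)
Your proof is correct and uses the paper's construction: the same grid coverage function on $U=P\times N$ (the paper normalizes the weights to $1/|U|$), and, after rescaling by $|U|$, the paper's prices on $P\cup N$ are exactly your choice with $\delta=1/2$. The only difference is the verification step: you compute the closed form $u_B(p,S)=a\varepsilon+b\varepsilon'-ab$ and compare cases directly, whereas the paper argues locally, showing that removing an item of $P$ strictly improves any mixed bundle (via submodularity) and that adding a missing item strictly improves any strict subset of $P$ or $N$.
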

\begin{proof}
    Let $v\in \{-1, 0, 1\}^n$ be a non-zero vector. Denote by $P\subseteq [n]$ the set of indices $i$ such that $v_i = 1$ and by $N\subseteq [n]$ the set of indices such that $v_i = -1$. 
    Observe that $P$ and $N$ are two disjoint sets of actions. 
    Recall that a coverage function $f(S)=\sum_{u\in U } w_u \cdot \indicator[\exists i \in S.~u \in  h(i)]$ for $S \subseteq A = [n]$ is given by a set of elements $U$ with associated weights $\{w_u\}_{u\in U}$ and a mapping $h: A\rightarrow 2^U$ from actions $i \in A = [n]$ to sets $h(i) \in 2^U$.
    
    Consider the coverage function $f$ given by $U=P\times N, w_u = \frac{1}{|U|}$ and 
    \[
    h(i) = \begin{cases}
        \{i\} \times N & i\in P \\
        P \times \{i\} & i\in N \\
        \emptyset & \text{else.}
    \end{cases}
    \]
    Then
    \[
    f(S) = \frac{1}{|U|} \left|\bigcup_{i\in S} h(i)\right|.
    \]
    \Cref{fig:coverage construction} shows an example of this construction with $P = \{p_1, p_2, p_3, p_4\}$ and $N=\{n_1, n_2, n_3\}$. The elements of $U$ are arranged in a grid, where red and blue rectangles represent the sets associated (via $h$) with the actions of $P$ and $N$ (respectively). As the expression above dictates, the value of $f(S)$ is the fraction of circles within the union of the sets of elements associated by $h$ with the actions in $S$. Thus, $f(P) = f(N) = 1$, as the union of all columns or rows covers all elements. 
    Note that $f(\{p_1, n_1\})=\frac{6}{12} < \frac{1}{4} + \frac{1}{3} =  f(\{p_1\}) + f(\{n_1\})$.
    A key property of the construction, which we will exploit below, is the following: as long as only columns (resp.~rows) are selected, the contribution of any additional column is additive, while if at least one row (resp.~column) is selected, the contribution of adding a column (resp.~row) diminishes by at least $\frac{1}{|U|}$.
    
    \begin{figure}
        \centering
         \begin{tikzpicture}
         \foreach \x in {1,...,4}{
            \foreach \y in {1,...,3} {
            \filldraw[color=gray!60, fill=gray!5](\x,\y) circle (1/4);
            }
         }
         \foreach \x in {1,...,4} {
             \draw[rounded corners, color=red] ({\x-1/3}, 1/2) rectangle ({\x+1/3}, 1/2+3) {};
             \node[anchor=south, color=red] at  (\x, 1/2+3) {$h(p_\x)$};
         }
         \foreach \y in {1,...,3} {
             \draw[rounded corners, color=blue] (1/2, {\y-1/3}) rectangle (1/2+4,{\y+1/3}) {};
             \node[anchor=east, color=blue] at  (1/2, 4-\y) {$h(n_{\y})$};
         }

        \end{tikzpicture}
        \caption{An illustration of the coverage function construction, with $P=\{p_1, p_2, p_3, p_4\}$ and $N=\{n_1, n_2, n_3\}$. The circles represent the elements of $U$. The red (respectively, blue) rectangles represent the sets associated with 
        actions $i \in P$ (resp., actions $i \in N$) by $h$. 
        For any action $i \not\in (P \cup N)$, $h(i) = \emptyset$.
        The value $f(S)$  is the fraction of circles within the union of the sets of elements associated with the actions in $S$.}\label{fig:coverage construction}
    \end{figure}
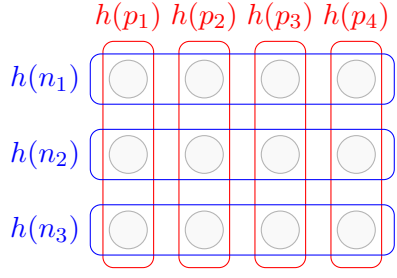

    Consider the price vector $p$ defined by \[
    p_i = \begin{cases}
        f(\{i\})-\frac{1}{2\cdot |N|\cdot |U|} & i\in N \\
        f(\{i\})-\frac{1}{2\cdot |P|\cdot |U|} & i\in P \\
        2 & \text{else}.
    \end{cases}
    \]
    We claim that $\demandset{f}{p}=\{N,P\}$, which concludes the proof by \Cref{lem:normal_vec_price_vec}.
    Let $S\in D_f(p)$, we claim that $S\in \{N,P\}$. This, combined with the fact that the utility from the sets $N$ and $P$ is $\frac{1}{2|U|}$, this shows $\demandset{f}{p} = \{S,T\}$, as needed. Observe that since the prices for items not in $N\cup P$ are greater than the maximal value of $f$, it must be that $S\subseteq N\cup P$. Assume towards contradiction that $S\not \subseteq P$ and $S\not \subseteq N$, this implies the existence of $i_P \in S\cap P$ and $i_N\in S\cap N$. But then, $S\setminus \{i_P\}$ has a strictly better utility:
    \[
    \begin{aligned}
    u_B(p, S) - u_B(p, S\setminus \{i_P\}) &= f(i_P\mid S\setminus \{i_P\}) - p_{i_P} \le f(i_P\mid \{i_N\}) - p_{i_P} \\
    &= f(\{i_P\}) - \frac{1}{|U|} - \left(f(\{i_P\}) - \frac{1}{2\cdot |P|\cdot |U|}\right) < 0,
    \end{aligned}
    \]
    where the second inequality is by submodularity. This contradicts $S\in \demandset{f}{p}$.
    
    Assume towards contradiction that $S\subsetneq N$ (similarly, if $S\subsetneq P$), then there exists $i\in N\setminus S$, but then $S\cup \{i\}$ has a strictly better utility:
    \[
    u_B(p, S\cup \{i\}) - u_B(p, S) = f(i\mid S) - p_i = f(\{i\}) - \left(f(\{i\}) - \frac{1}{2\cdot |N|\cdot |U|}\right) >0,
    \]
   contradicting $S\in \demandset{f}{p}$.

   All in all, $S\subseteq N\cup P$, it must be contained in one of them, and it cannot be strictly contained in either, and thus $S\in \{N,P\}$, as needed.
\end{proof}

\subsection{Budget-Additive Functions} 

We next show an analogous observation to Theorem~\ref{thm:coverage_demand_type}, but for budget-additive $f$. This again suggests that there might not be a meaningful demand type characterization for this class of functions.

\begin{theorem}
    For any distinct prime numbers $q,r$ and disjoint subsets $Q,R\subseteq [n]$ of sizes $q$ and $r$ respectively, there exists a budget-additive success probability function $f$ with a facet which is normal to the vector $v$ defined by
    \[
    v_i = \begin{cases}
        1 & i\in Q \\
        -1 & i\in R \\
        0 & \text{else}.
    \end{cases}
    \]
\end{theorem}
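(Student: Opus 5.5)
The plan is to use \Cref{lem:normal_vec_price_vec}. It suffices to exhibit a budget-additive $f$ and a price vector $p$ with $\demandset{f}{p}=\{Q,R\}$, since $\chi_Q-\chi_R=v$.

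\emph{Construction.} Take budget $B=1$ and weights
\[
w_i = \begin{cases} 1/q & i\in Q\\ 1/r & i\in R\\ 0 & \text{else,}\end{cases}
\]
so that $f(S)=\min(1,\sum_{j\in S}w_j)$ takes values in $[0,1]$ and is a valid success probability function. Fix a small $\varepsilon\in(0,1)$. Set $p_i=(1-\varepsilon)w_i$ for $i\in Q\cup R$, and $p_i=1$ for all other items. Items outside $Q\cup R$ contribute nothing to $f$ but have positive price, so no demanded bundle contains them. We may therefore restrict attention to $S\subseteq Q\cup R$.

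\emph{Utility computation.} For such $S$, write $W(S)=\sum_{j\in S}w_j$. Then
\[
u_B(p,S)=\min(1,W(S))-(1-\varepsilon)W(S).
\]
If $W(S)\le 1$, the utility equals $\varepsilon W(S)$, which is at most $\varepsilon$, with equality iff $W(S)=1$. If $W(S)>1$, the utility equals $\varepsilon-(1-\varepsilon)(W(S)-1)$, which is strictly less than $\varepsilon$. Hence the demanded bundles are exactly those $S\subseteq Q\cup R$ with $W(S)=1$.

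\emph{The arithmetic step.} This step uses the primality assumption and is the main point of the argument. Let $a=|S\cap Q|$ and $b=|S\cap R|$. The condition $W(S)=1$ reads $a/q+b/r=1$, that is, $ar+bq=qr$.
\begin{itemize}
\item Reducing modulo $q$ gives $q\mid ar$. Since $q,r$ are distinct primes, $q\nmid r$, so $q\mid a$. With $0\le a\le q$ this forces $a\in\{0,q\}$.
\item Symmetrically, $b\in\{0,r\}$.
\item The pairs $(0,0)$ and $(q,r)$ give total weight $0$ and $2$ respectively, so they fail. Only $(q,0)$ and $(0,r)$ remain.
\end{itemize}
These correspond to $S=Q$ and $S=R$. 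Therefore $\demandset{f}{p}=\{Q,R\}$, and \Cref{lem:normal_vec_price_vec} yields a facet of $\lip_f$ normal to $v$.

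No step looks like a real obstacle beyond this divisibility argument. The only bookkeeping is confirming the strict inequality when $W(S)>1$, which holds because $\varepsilon<1$.
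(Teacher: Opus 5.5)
Your proposal is correct and follows the paper's proof. The paper uses the same weights $1/q$ and $1/r$, budget $1$, and prices $(1-\varepsilon)w_i$; it fixes $\varepsilon=1/(2qr)$ and puts price $2$ outside $Q\cup R$, then applies \Cref{lem:normal_vec_price_vec} to reduce to $\demandset{f}{p}=\{Q,R\}$. Your version characterizes the demand set as exactly the bundles of weight $1$ and then solves $ar+bq=qr$ by divisibility, which makes explicit the arithmetic fact the paper uses without proof when it asserts that a within-budget bundle meeting both $Q$ and $R$ has $f(T)\le 1-1/(qr)$.
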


\begin{proof}
    Consider the budget additive function where 
    \[
    w_i = \begin{cases}
        \frac{1}{q} & i \in Q \\
        \frac{1}{r} & i \in R \\
        0 & \text{else},
    \end{cases}
    \]
    with budget $B=1$.
    Consider the price vector $p$ defined by 
    \[
    p_i = \begin{cases}
        (1-\frac{1}{2\cdot q \cdot r}) w_i & i\in Q\cup R\\
        2 & \text{else}.
    \end{cases}
    \]
    
    We start by claiming that $\demandset{f}{p} \subseteq \{Q,R\}$. Since $p_i = 2$ for any $i\notin Q \cup R$, clearly $S\subseteq Q\cup R$. Further observe that since prices are strictly greater than $w_i-1/qr$, going over budget is never profitable. 
    Furthermore, any set $T$ which does not exceed the budget and has elements from both $Q$ and $R$ must have $f(T) \le 1-1/qr$, and thus $u_B(p, T) = f(T) - p(T) = \frac{f(T)}{2qr} \le \frac{1-1/(qr)}{2qr}< \frac{1}{2qr}= u_B(p,Q)$. Thus, $S\subseteq Q$ or $S\subseteq R$. Any set $T$ which is strictly contained in $Q$ (or similarly, $R$), has $f(T) \le 1-\frac{1}{q}$, and thus $u_B(p, T) = f(T) - p(T) = \frac{f(T)}{2qr}\le \frac{1-1/q}{2qr} < u_B(p, Q)$. So it must be that $\demandset{f}{p}$ only contains items which are contained, but not strictly contained in $Q$ or $R$, and thus $\demandset{f}{p}\subseteq \{Q,R\}$, as needed.
    Finally, observe that the utility from both $Q$ and $R$ under $p$ is $u_B(p, Q) = u_B(p, R) = \frac{1}{2\cdot q\cdot r}$, and thus $\demandset{f}{p} = \{Q,R\}$. This concludes the proof by \Cref{lem:normal_vec_price_vec}.\end{proof}
\end{document}